\documentclass[12pt,onecolumn,twoside]{IEEEtran}

\usepackage{calc}

\usepackage{multirow}
\usepackage{cite}
\usepackage{graphicx,subfigure}
\usepackage{psfrag}
\usepackage{amsmath,amssymb,amsthm}
\usepackage{color}
\usepackage{tikz} 
\usepackage{bm}
\usepackage{bbm}
\usepackage{verbatim}
\usepackage[T1]{fontenc}

\usepackage{cases}
\usepackage[noend]{algpseudocode}

\usepackage{xcolor}
\usepackage[most]{tcolorbox}
\usepackage{bm}

\usepackage{tabu}
\usepackage{mathtools}
\usepackage{xifthen}
\usepackage{booktabs}
\usepackage{algorithm}
\usepackage{algpseudocode}
\usepackage{arydshln}

\interdisplaylinepenalty=2500

\DeclarePairedDelimiter\ceil{\lceil}{\rceil}
\DeclarePairedDelimiter\floor{\lfloor}{\rfloor}

\usepackage{algorithm}
\makeatletter
\def\BState{\State\hskip-\ALG@thistlm}
\makeatother

%






\DeclareMathOperator*{\defeq}{\triangleq}

\newtheorem{theorem}{Theorem}

\newtheorem{lemma}{Lemma}











\newcommand{\bit}{\begin{itemize}}
\newcommand{\eit}{\end{itemize}}

\newcommand{\bc}{\begin{center}}
\newcommand{\ec}{\end{center}}

\newcommand{\ba}{\begin{array}}
\newcommand{\ea}{\end{array}}

\newcommand{\beq}{\begin{equation}}
\newcommand{\eeq}{\end{equation}}

\newcommand{\beqn}{\begin{equation*}}
\newcommand{\eeqn}{\end{equation*}}

\newcommand{\bean}{\begin{eqnarray*}}
\newcommand{\eean}{\end{eqnarray*}}
\newcommand{\bea}{\begin{eqnarray}}
\newcommand{\eea}{\end{eqnarray}}



\def\hv{\boldsymbol{h}}

\def\wv{\boldsymbol{w}}
\def\xv{\boldsymbol{x}}

\def\zv{\boldsymbol{z}}

\def\Hm{\boldsymbol{H}}

\newcommand{\Ac}{{\mathcal A}}
\newcommand{\Bc}{{\mathcal B}}
\newcommand{\Cc}{{\mathcal C}}
\newcommand{\Dc}{{\mathcal D}}
\newcommand{\Ec}{{\mathcal E}}
\newcommand{\Fc}{{\mathcal F}}

\newcommand{\Pc}{{\mathcal P}}

\newcommand{\Sc}{{\mathcal S}}

\newcommand{\T}{{\scriptscriptstyle\mathsf{T}}}

\newtheorem{remark}{Remark}


\renewcommand{\Bmatrix}[1]{\begin{bmatrix}#1\end{bmatrix}}





\newcommand{\non}{\nonumber}

\newcommand{\Pros}{\text{processors}}
\newcommand{\Pro}{\text{processor}}
\newcommand{\PRO}{\text{Processor}}
\newcommand{\Lh}{\ell}
\newcommand{\Me}{\wv}
\newcommand{\Ry}{\mathrm{s}}

\newcommand{\Vr}{\mathrm{v}} 
\newcommand{\gn}{\eta} 
 
\newcommand{\Meg}{\bar{\wv}} 
 
\newcommand{\Bit}{b} 
\newcommand{\Rd}{r} 
\newcommand{\cb}{c} 
\newcommand{\Ss}{\Sc}

\newcommand{\Lin}{l}
\newcommand{\Lk}{\mathrm{u}}
\newcommand{\Ce}{\beta}
\newcommand{\Le}{\alpha}
\newcommand{\Fe}{\delta}
\newcommand{\BT}{\Bit}


\algdef{SE}[SUBALG]{Indent}{EndIndent}{}{\algorithmicend\ }%
\algtext*{Indent}
\algtext*{EndIndent}

\pagestyle{empty}

\begin{document}
\sloppy
\title{Fundamental Limits of Byzantine Agreement}

\author{Jinyuan Chen 
\thanks{Jinyuan Chen is with Louisiana Tech University, Department of Electrical Engineering, Ruston, USA (email:  jinyuan@latech.edu). 
} 
}

\maketitle
\pagestyle{headings}

\begin{abstract}

Byzantine agreement (BA)  is a distributed consensus problem where $n$ $\Pros$ want to reach agreement on an $\Lh$-bit message or value, but up to $t$ of the $\Pros$ are dishonest or faulty.  The challenge of this BA problem lies in achieving agreement despite the presence of dishonest $\Pros$ who may arbitrarily deviate from the designed protocol.   
 The quality of a BA protocol is measured primarily by using the following three parameters: the number of processors $n$ as a function of $t$ allowed (\emph{resilience});   the number of rounds (\emph{round complexity}, denoted by $\Rd$); and  the total number of communication bits (\emph{communication complexity},  denoted by $\BT$). 
 For any error-free BA protocol, the known lower bounds on those three parameters  are  $n\geq 3t + 1$, $\Rd \geq t + 1$ and $\BT \geq \Omega(\max\{n\Lh, nt\})$, respectively, where a protocol that is guaranteed to be correct in all executions is said to be \emph{error free}.

In this work   by using  coding theory, together with graph theory and linear algebra,   we  design a {\bf co}ded  BA  protoc{\bf ol} (termed as COOL)   that achieves consensus on an $\Lh$-bit message  with optimal \emph{resilience},  asymptotically optimal \emph{round complexity}, and asymptotically optimal \emph{communication complexity}  when $\ell \geq t   \log t$, simultaneously.  
 The proposed COOL is an \emph{error-free}  and \emph{deterministic} BA protocol  that does not rely on cryptographic technique such as signatures, hashing, authentication and secret sharing (\emph{signature free}). It is secure  against computationally unbounded adversary who takes full control over the dishonest $\Pros$ (\emph{information-theoretic secure}). The main idea of the proposed COOL is to use a carefully-crafted error correction code that provides an efficient way of exchanging ``compressed'' information among distributed nodes, while keeping the ability of detecting errors, masking errors, and making a consistent and validated agreement at honest distributed nodes. 
 With the achievable performance by the proposed COOL and the known lower bounds, we characterize the optimal \emph{communication complexity exponent} as 
 \[  \Ce^*(\Le, \Fe) = \max\{1+  \Le, 1+  \Fe  \}  \]
 under the optimal resilience and asymptotically optimal round complexity, for $\Ce (\Le, \Fe)\defeq $ $ \lim_{n \to \infty} \log \BT (n, \Fe, \Le)/\log n$, $\Le  \defeq  \lim_{n \to \infty} \log \ell/\log n$ and $\Fe \defeq  \lim_{n \to \infty} \log t/\log n$.    
 We show that our results can also be extended to the setting of  Byzantine broadcast, aka Byzantine generals problem, where the honest $\Pros$ want to agree on the message sent by a leader who is potentially dishonest. 
This work reveals that \emph{coding} is an effective approach for achieving the fundamental limits of Byzantine agreement and its variants.  Our protocol analysis borrows tools from coding theory, graph theory and linear algebra.

\end{abstract}

\begin{IEEEkeywords}
 Multi-valued  Byzantine agreement,  Byzantine broadcast,    information-theoretic security,   signature-free protocol, error-free protocol, error correction codes.  
\end{IEEEkeywords}

\section{Introduction}

Byzantine agreement (BA), as originally proposed by Pease, Shostak and Lamport in 1980,  is a distributed consensus problem where $n$ $\Pros$ want to reach agreement on some message (or value), but up to $t$ of the $\Pros$ are dishonest or faulty \cite{PSL:80}.  The challenge of this BA problem lies in achieving agreement despite the presence of dishonest $\Pros$ who may arbitrarily deviate from the designed protocol.  One variant of the problem is  Byzantine broadcast (BB),  aka Byzantine generals problem, where the honest $\Pros$ want to agree on the message sent by a leader who is potentially dishonest \cite{LSP:82}.  Byzantine agreement and its variants are considered to be the fundamental building blocks for distributed systems and cryptography including Byzantine-fault-tolerant (BFT) distributed computing, distributed storage,  blockchain protocols, state machine replication and voting, just to name a few \cite{PSL:80, LSP:82, FH:06, LV:11, GP:20, LDK:20, NRSVX:20, Patra:11, CT:05, MXCSS:16,  ADDNR:17, CS:20, Shi:19, PS:18, QZJZ:20, GWGR:04, SB:12}.

To solve the Byzantine agreement problem, a designed protocol needs to satisfy the following  conditions: every honest $\Pro$  eventually outputs a message and terminates (\emph{termination});  all honest $\Pros$  output the same message (\emph{consistency}); and  if all honest $\Pros$ hold the same initial message  then they output this initial message (\emph{validity}).   
 A protocol that satisfies the above three conditions in all executions is said to be \emph{error free}. 
 The quality of a BA protocol is measured primarily by using  three parameters: 
     \begin{itemize}
\item  Resilience:  the number of processors $n$ as a function of $t$ allowed.
\item   Round complexity:  the number of rounds of exchanging information, denoted by $\Rd$.
\item  Communication complexity:    the total number of communication bits, denoted by $\BT$.
\end{itemize} 
For any error-free BA protocol, the known lower bounds on those three parameters  are   respectively
 \begin{align}
 n\geq 3t + 1   \quad  \text{(cf.~\cite{PSL:80, LSP:82})} , \quad   
 \Rd \geq t + 1  \quad   \text{(cf.~\cite{FL:82, DS:83})}  , \quad 
 \BT \geq \Omega(\max\{n\Lh, nt\})  \quad  \text{(cf.~\cite{DR:85, FH:06})}    \non
  \end{align} 
where $\ell$ denotes the length of message being agreed upon.  
 It is worth mentioning that, in practice the consensus is often required for \emph{multi-valued} message rather than just single-bit message\cite{FH:06, LV:11,  GP:20, LDK:20, NRSVX:20,Patra:11, PR:11}.  
   For example,  in BFT consensus protocols of   Libra (or Diem) and  Hyperledger Fabric  Blockchains proposed by Facebook and IBM respectively, the message being agreed upon could be a transaction or transaction block with size scaled from  $1KB$ to $1MB$ \cite{Libra19,SBV:18, GLGK:19, Hyperledgerfabric20, Hyperledger18, Hyperledgerfabric18,hypfab:19}. 
Also, in practice the consensus  is often expected to be 100\% secure and error free in mission-critical applications such as online banking and smart contracts \cite{Ethereum15, HGBR:19, MJPS:19, CDP:19, DATF:19}.

{\renewcommand{\arraystretch}{1.2}
\begin{table}
\small
\begin{center}
\caption{Comparison of the proposed and some other error-free  synchronous BA protocols.} \label{tb:comp}
      \vspace{-.05 in}
\begin{tabular}{||c||c|c|c|c|c|}
\hline
Protocols & resilience &   communication complexity &  round complexity &     error free & signature free     \\ 
\hline
$\ell$-$1$-bit   & $  n   \geq 3t + 1 $ &  $\Omega(n^2 \ell)$     &  $\Omega(\ell t )$  &yes &yes  \\
\hline
 \cite{LV:11}   & $  n   \geq 3t + 1 $ & $O( n \ell + n^4  \sqrt{\ell}+n^6)$  &  $\Omega (\sqrt{\ell}+n^2)$  &  yes  & yes  \\
\hline
 \cite{GP:20}   & $  n   \geq 3t + 1 $ & $O( n \ell + n^4 )$  &  $O (t)$  &  yes  & yes  \\
\hline
 \cite{LDK:20}   & $  n   \geq 3t + 1 $ & $O( n \ell + n^4 )$   &  $O(t)$  &  yes  & yes  \\
\hline
 \cite{NRSVX:20}   & $  n   \geq 3t + 1 $ & $O(n \ell + n^3)$  &  $O(t)$  &  yes  & yes  \\
\hline
 {\color{blue}Proposed } &   {\color{blue}$  n   \geq 3t + 1 $}   & {\color{blue}$O(\max\{n\Lh, n t \log t \})$}  &   {\color{blue}$O(t)$}     &    {\color{blue}yes}  &  {\color{blue}yes} \\
\hline
\end{tabular}
      \vspace{-.2 in}
\end{center}
\end{table}
}
 
Achieving an \emph{error-free} \emph{multi-valued}  consensus  with   optimal resilience,  optimal round complexity and  optimal communication complexity simultaneously is widely believed to be the ``holy-grail'' in  the BA problem. 
The multi-valued BA problem of achieving consensus on an $\ell$-bit message could be solved by invoking $\ell$ instances of $1$-bit consensus in sequence, which is termed as $\ell$-$1$-bit scheme.   
However, this  scheme will result in communication complexity of $\Omega(n^2 \ell)$ bits, because $\Omega(n^2)$ is the lower bound on communication complexity  of $1$-bit consensus  given $n\geq 3t + 1$  \cite{CW:92, BGP:92, DR:85}.   
In 2006, Fitzi and Hirt  provided a \emph{probabilistically correct} multi-valued BA protocol by using a hashing technique, which results in communication complexity of $O(n\ell+n^3(n+\kappa))$ bits for some $\kappa$ serving as a security parameter \cite{FH:06}. 
This improves the communication complexity to $O(n\ell)$ when $\ell \geq n^3$, but at the cost of \emph{non-zero} error probability.  
In 2011, Liang and Vaidya  provided an \emph{error-free} BA protocol with communication complexity $O( n \ell + n^4  \sqrt{\ell}+n^6)$ bits, which is optimal when $\ell \geq n^6$ \cite{LV:11}. However, in the practical regime of $\ell < n^6$, this communication complexity is sub-optimal. Furthermore, the round complexity of the BA protocol in  \cite{LV:11} is  at least  $\Omega (\sqrt{\ell}+n^2)$,  which is sub-optimal. 
The result of  \cite{LV:11}  was  improved recently in  \cite{GP:20}, \cite{LDK:20} and  \cite{NRSVX:20}. 
Specifically, the communication complexities of the  BA protocols proposed in \cite{GP:20}, \cite{LDK:20} and \cite{NRSVX:20} are  $O( n \ell + n^4 )$ bits, $O( n \ell + n^4 )$ bits, and  $O( n \ell + n^3 )$ bits, respectively. 
Although the communication complexity has been improved in \cite{GP:20}, \cite{LDK:20} and \cite{NRSVX:20}, the achievable performance is still sub-optimal  in the practical regime of $\ell < n^2$. 
In some previous work,  randomized algorithms were proposed to reduce communication and round complexities but the termination   cannot be 100\% guaranteed \cite{FM:97, Patra:11, ADH:08, CR:93, CC:85, KK:09}. 
In some other work, the protocols were designed with  cryptographic technique such as signatures, hashing, authentication and secret sharing \cite{Rabin:83,KK:09, ADDNR:17,DS:83}. 
 However, the protocols with such cryptographic technique are vulnerable to attacks from the adversary with very high computation power, e.g.,  using supercomputer or quantum computer possibly available in the future, and hence not error free.   
 A protocol that  doesn't rely on cryptographic technique mentioned above is said to be \emph{signature free}. 
A protocol is said to be \emph{information-theoretic secure} if it is secure  against computationally unbounded adversary who takes full control over the dishonest $\Pros$.

In this work we show that  \emph{coding} is a promising approach for solving  the long-standing open problem in BA, i.e., achieving the fundamental limits of Byzantine agreement.  
Specifically by using coding theory, together with graph theory and linear algebra, we are able to design  an  error-free signature-free information-theoretic-secure multi-valued  BA  protocol (named as COOL) with optimal \emph{resilience},  asymptotically optimal \emph{round complexity}, and asymptotically optimal \emph{communication complexity}  when $\ell \geq t   \log t$, simultaneously (see Table~\ref{tb:comp}),  focusing on the BA setting with synchronous communication network. 
In a nutshell, carefully-crafted error correction codes  provide an efficient way of exchanging ``compressed'' information among distributed $\Pros$, while keeping the ability of detecting errors, masking errors, and making a consistent agreement at honest distributed $\Pros$. 
In this work the proposed protocol is developed first for  the Byzantine agreement problem. We show that the proposed protocol can also be generalized to the  Byzantine broadcast problem with the same performance as in the Byzantine agreement problem.

The main difference between the protocols  of  \cite{LV:11, GP:20, LDK:20, NRSVX:20} and our proposed protocol is that, while coding is also used, 
  in the protocols  of \cite{LV:11, GP:20, LDK:20, NRSVX:20} each distributed $\Pro$ needs to send an $n$-bit (graph) information to all other $\Pros$ after generating a graph  based on the exchanged information, which results in a communication complexity for this step of at least  $\Omega(n^3)$ bits or even $\Omega(n^4)$ bits if Byzantine broadcast is used.  This is a limitation in the  protocols  of   \cite{LV:11, GP:20, LDK:20, NRSVX:20}. Our proposed protocol, which is carefully designed by using coding theory, graph theory and linear algebra,  avoids the limitation appeared in  \cite{LV:11, GP:20, LDK:20, NRSVX:20}.  Specifically, our proposed protocol consists of at most four phases. After the third phase, it is guaranteed that at most one group of honest $\Pros$ output the same message, while the remaining honest $\Pros$  output an empty or default message.  In this way, all of the honest $\Pros$ can calibrate their messages and finally output a consistent and validated value.   We prove that our protocol  satisfies    termination, consistency and validity conditions. The proof borrows tools from coding theory, graph theory and linear algebra.
  
 \begin{figure}
\centering
\includegraphics[width=7cm]{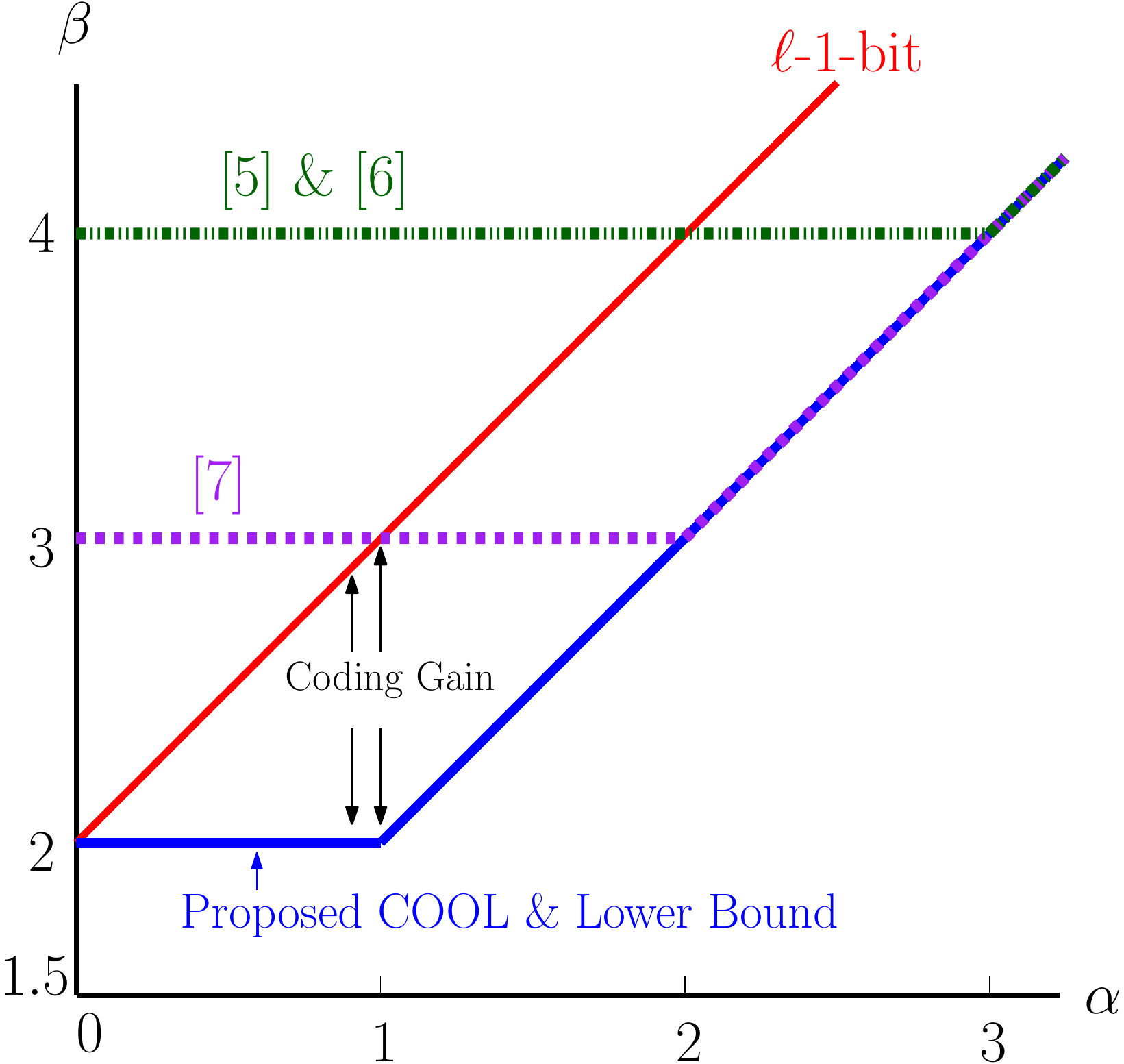}
\caption{Communication complexity exponent $\Ce$ vs. message size exponent $\Le$ of the proposed COOL, the protocols in \cite{GP:20, LDK:20, NRSVX:20}, and $\ell$-$1$-bit scheme, focusing on the case with $\Fe=1$. Note that for this case with $\Fe=1$, the communication complexity exponent of the protocol in \cite{LV:11} is $\Ce(\Le) =  \max\{1+  \Le, 4 +\Le/2, 6\}$, which is equal to $6$, $4 +\Le/2$ and $1+  \Le$ in the regimes of  $0\leq \Le \leq 4$, $4\leq \Le \leq 6$ and $  \Le \geq 6$, respectively, and hence outside the range of this figure. Compared to the protocol in \cite{LV:11} (resp. \cite{GP:20, LDK:20, NRSVX:20}), our proposed COOL provides an additive gain up to $4$ (resp. $2$, $2$, $1$), in terms of communication complexity exponent performance.   Compared to the $\ell$-$1$-bit scheme without using coding,  our proposed COOL provides  a  gain of  $1$ for any $\Le \geq 1$,  which can be considered as a coding gain resulted from carefully-crafted coding.  } 
\label{fig:com} 
      \vspace{-.1 in}
\end{figure}

Table~\ref{tb:comp} and Fig.~\ref{fig:com} provide some comparison  between our proposed protocol and some other  error-free  synchronous BA protocols. In Fig.~\ref{fig:com}, the comparison is focused on the communication complexity. 
For an error-free BA protocol, here we define the notions of \emph{communication complexity exponent}, \emph{message size exponent} and  \emph{faulty size exponent} as 
\begin{align}
  \Ce(\Le, \Fe)  \defeq \lim_{n \to \infty} \frac{\log \BT (n, \Fe, \Le)}{\log n}      \quad \quad
    \Le  \defeq  \lim_{n \to \infty} \frac{\log \ell}{\log n}        \quad \quad
  \Fe  \defeq  \lim_{n \to \infty} \frac{\log t}{\log n}    \label{eq:comexp}     
 \end{align}
  respectively. Intuitively, $\Ce$ (resp. $\Le$ and $\Fe$) captures the exponent of communication complexity $\BT$ (resp. message size $\ell$ and faulty size $t$) with $n$ as the base, when $n$ is large.   
In this work we characterize the \emph{optimal} communication complexity exponent $\Ce^{*}(\Le, \Fe)$ achievable by \emph{any} error-free BA protocol when $n\geq 3t+1$, that is, 
\begin{align}
   \Ce^*(\Le, \Fe) = \max\{1+  \Le, 1+  \Fe  \}  .  \label{eq:Ocomexp}     
 \end{align}
 As shown in Fig.~\ref{fig:com}, the proposed COOL achieves the \emph{optimal} communication complexity exponent. Compared to the protocols in \cite{LV:11, GP:20, LDK:20, NRSVX:20}, our proposed COOL provides  additive gains up to $4$, $2$, $2$, $1$, respectively, in terms of communication complexity exponent performance.   Compared to the $\ell$-$1$-bit scheme without coding,  our proposed COOL provides  a communication complexity exponent  gain of  $1$ for any $\Le \geq 1$, which can be considered as a coding gain resulted from carefully-crafted coding.

Coding has been used previously as an exciting approach in network communication and cooperative data exchange for improving throughput and tolerating attacks or failures \cite{LYC:03, KM:01, FLMP:07, CY:06, HLKMEK:04, JLKHKME:08, KHEA:10, KFM:04, YWRG:08,HM:05, HHYD:14, YH:15, YS:14, CH:16}. However, one common assumption in those previous works is that  the source of the data is always \emph{fault-free}, i.e., the source node is always honest or trustworthy, or the initial messages of the honest nodes are consistent and generated from the trustworthy source node. 
 This is very different from the BA and BB problems considered here, in which the leader (or the source node) can be dishonest and the original messages of the distributed nodes can be  controlled by the Byzantine adversary. One of the main difficulties of the BA and BB problems lies in the unknown knowledge about the leader (honest or dishonest)  and about initial messages (consistent or inconsistent).

 The remainder of this paper is organized as follows. 
Section~\ref{sec:system} describes the  system models.  
Section~\ref{sec:mainresult} provides the  main results of this work. 
Section~\ref{sec:code}  discusses some   advantages,   issues, and challenges of coding for BA.
The proposed COOL developed from coding theory, together with graph theory and linear algebra, is described in  Section~\ref{sec:COOL} for the BA setting. 
In Section~\ref{sec:COOLbb} the proposed COOL is shown to be  extended to the setting of Byzantine broadcast. 
The work is concluded in Section~\ref{sec:concl}.  
 Throughout this paper,  $|\bullet|$ denotes the magnitude of a scalar or the cardinality of a set. 
  $[m_1:  m_2 ]$ denotes the set of integers from $m_1$ to $m_2$, for some nonnegative integers $m_1 \le m_2$.  If  $m_1 > m_2$, $[m_1:  m_2 ]$ denotes an empty set.  
Logarithms are in base~$2$.   $\ceil*{m}$ denotes the least integer that is no less than $m$, and $\floor*{m}$ denotes the greatest integer that is no larger than $m$. 
$f(x)=O(g(x))$ implies that ${\lim\sup}_{x \to \infty} |f(x)|/g(x) < \infty$. 
$f(x)=\Omega(g(x))$ implies that ${\lim\inf}_{x \to \infty} f(x)/g(x) > 0$. 
$f(x)=\Theta(g(x))$  implies that $f(x)=O(g(x))$ and $f(x)=\Omega(g(x))$.

\section{System models  \label{sec:system} }

In the BA problem,  $n$ $\Pros$ want to reach agreement on an $\ell$-bit message (or value), but up to $t$ of the $\Pros$ are dishonest (or faulty).  
$\PRO$~$i$  holds an $\ell$-bit initial message $\Me_i$, $\forall i \in [1:n]$. 
To solve this BA problem, a designed protocol needs to satisfy the  \emph{termination},  \emph{consistency} and \emph{validity} conditions mentioned in the previous section.

We consider  the synchronous BA, in which every two $\Pros$ are connected via a reliable and private communication channel, and the messages sent on a channel  are guaranteed to reach to the destination on time.  The index of  each $\Pro$ is known to all other $\Pros$. 
We assume that a  Byzantine adversary takes full control over the dishonest $\Pros$ and  has complete knowledge of the state of the other $\Pros$, including the $\Lh$-bit initial messages. 

As mentioned,  a protocol that satisfies the  termination, consistency and validity conditions in all executions is said to be \emph{error free}. 
A protocol that  doesn't rely on the cryptographic technique such as signatures, hashing, authentication and secret sharing  is said to be \emph{signature free}. 
A protocol that  is secure (satisfying   the termination,   consistency  and  validity conditions) against computationally unbounded adversary is said to be \emph{information-theoretic secure}.

We also consider  the BB problem, where the honest $\Pros$ want to agree on the message sent by a leader who is potentially dishonest.   
To solve the BB problem, a designed protocol needs to satisfy the termination, consistency and validity conditions. 
In  the BB problem,  the termination and consistency conditions remain the same as that in the BA problem, while the validity condition is slightly different from that in  the BA problem.  Specifically, the validity condition of the BB problem requires that,  if  the leader is honest then all honest $\Pros$  should agree on the message sent by the leader.    
Other definitions follow similarly from that of the BA problem.

\section{Main results  \label{sec:mainresult}}

In this work  by using   coding theory, together with graph theory and linear algebra,  we  design a {\bf co}ded  BA  protoc{\bf ol}, termed as COOL,   that achieves consensus on an $\Lh$-bit message  with promising performance in resilience,  round complexity,  and communication complexity. This proposed COOL is also extended to the Byzantine broadcast setting.  
The main results of this work are summarized in the following theorems.  The proofs are provided in Sections~\ref{sec:COOL}-\ref{sec:COOLbb}.

\begin{theorem} [BA problem]  \label{thm:BAs}
The proposed COOL  is an error-free signature-free information-theoretic-secure multi-valued  BA  protocol that achieves the consensus on an $\Lh$-bit message with optimal resilience,  asymptotically optimal round complexity, and asymptotically  optimal communication complexity when $\ell \geq t   \log t$, simultaneously.  
\end{theorem}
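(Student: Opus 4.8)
The plan is to prove the four protocol-level requirements in sequence — \emph{validity}, \emph{consistency}, \emph{termination}, and then the resilience/complexity bounds — by tracking what each of the (at most four) phases of COOL guarantees about the honest $\Pros$, with the error-correction code supplying the decisive algebraic facts. The coding layer is the starting point: fix a systematic Reed--Solomon-type code over a field of size $\Theta(n)$ with dimension $k=\Theta(t)$, so that (i) any $k$ symbols determine a codeword, (ii) two distinct codewords agree in at most $k-1$ positions, and (iii) each symbol carries $O(\lceil \ell/k\rceil)$ bits. $\PRO~i$ encodes $\Me_i$ into a length-$n$ codeword; in Phase~1 it sends symbol $j$ to $\PRO~j$ and receives symbol $i$ of $\PRO~j$'s codeword, and records the \emph{success set} of indices on which the received symbol is consistent with its own codeword (this is the ``detecting errors'' step — inconsistent symbols are rejected). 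The first thing to prove is the \emph{linear-algebraic core lemma}: if an honest $\PRO~i$'s success set has size $>(n+t)/2$, then it contains more than $k$ honest indices, which pin down a unique codeword and hence a unique message, and the at most $t$ Byzantine symbols inside it cannot simultaneously agree with $k$ honest symbols of any \emph{different} codeword; so $\PRO~i$ can decode/``mask'' through up to $t$ corruptions unambiguously. Validity is then immediate: if all honest $\Pros$ start with the same $\Me$, every honest--honest pair matches, every honest success set has size $\ge n-t>(n+t)/2$, and every honest $\PRO$ decodes $\Me$.

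The heart of the proof is the graph-theoretic step showing that, after the third phase, \emph{at most one} group of honest $\Pros$ can hold a non-default ``validated'' message, and that group is internally consistent. Form the consistency graph $G$ on the honest $\Pros$ with $i\sim j$ iff their Phase-1 symbols matched; in Phase~2 the $\Pros$ exchange only short ($O(\log t)$-bit) summaries of their Phase-1 outcomes, from which each recomputes a pruned success set. The key claim is an \emph{overlap lemma}: if two honest $\Pros$ $i$ and $i'$ both finish Phase~2 with pruned success sets of size $>(n+t)/2$, those sets intersect in more than $t$ honest indices, so the core lemma forces their underlying codewords — hence $\Me_i=\Me_{i'}$. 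Thus all honest $\Pros$ that ``survive'' Phase~2 carry one common message $\Me^\star$. Phase~3 then runs a low-cost binary sub-agreement (a classical error-free $1$-bit primitive, or equivalent threshold voting) on the well-defined bit ``a validated message exists''; honest $\Pros$ outside the surviving group adopt the default/empty message. One verifies by the $n\ge 3t+1$ counting that a dishonest set of size $\le t$ can neither demote a survivor to default nor promote a non-survivor to a bogus message, since any validated claim requires $>(n+t)/2\ge 2t+1$ supporters, more than $t+1$ of them honest.

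With the ``unique surviving group'' property established, Phase~4 is calibration: honest $\Pros$ holding the default message request from the survivors enough code symbols — together with short fingerprints that let them discard Byzantine forgeries by majority — to reconstruct $\Me^\star$, and error-freeness of this step again reduces to the core lemma and $n\ge 3t+1$. Collecting these, \emph{consistency} (all honest output $\Me^\star$, or all output the common initial message in the validity case) and \emph{termination} (every phase halts, the only piece growing with $t$ being the $O(t)$-round binary sub-agreement) hold in every execution, using no signatures or randomness and against a computationally unbounded adversary — giving the \emph{error-free}, \emph{signature-free}, \emph{information-theoretic-secure} claims. The resilience bound $n\ge 3t+1$ is exactly what every counting argument above needs and is also the known lower bound, hence optimal.

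For the complexity accounting: Phase~1 costs $O(n^2\lceil\ell/k\rceil)=O(n\ell+n^2)$ bits; Phases~2--3 move $O(\log t)$-bit objects among the $\Pros$ over $O(t)$ rounds, costing $O(nt\log t)$ bits; Phase~4 costs $O(n\ell+n^2)$ bits; summing yields $\BT=O(\max\{n\ell,\,nt\log t\})$ and $\Rd=O(t)$, which meet the cited lower bounds $\BT\ge\Omega(\max\{n\ell,nt\})$ and $\Rd\ge t+1$ — asymptotically exactly when $\ell\ge t\log t$, where the $n\ell$ term dominates. The step I expect to be the main obstacle is the overlap lemma together with its interaction with Phase~2: proving that the \emph{pruned} success sets — computed by the $\Pros$ from only $O(\log t)$-bit exchanged summaries, rather than from full symbol data — still retain the size-$>(n+t)/2$ threshold and the ``more than $t$ honest common indices'' property, so that any two survivors are genuinely \emph{forced} to share a message. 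Everything else (validity, termination, the cost sums, the matching to the lower bounds) is comparatively routine bookkeeping once that combinatorial/linear-algebra core is in place.
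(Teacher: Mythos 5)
Your high-level plan mirrors the paper (phase-by-phase tracking, a linear-algebra fact that two distinct messages can agree on fewer than $k$ encoding coordinates, a ``one surviving group'' claim feeding a majority-based calibration in Phase~4, a $1$-bit vote, and the same complexity targets), but the step you yourself flag as the main obstacle --- the ``overlap lemma'' --- is in fact a genuine gap, and the arithmetic behind it does not work. Two success sets of size $>(n+t)/2$ in a universe of $n$ processors intersect in more than $t$ indices \emph{total}, which guarantees at least one honest common index, not ``more than $t$ honest indices.'' And one (or even a few) honest common indices cannot force $\Me_i=\Me_{i'}$: since the code has dimension $k=\Theta(t)$, two distinct messages may agree on up to $k-1$ coordinates, so you would need $\Theta(t)$ honest common indices with linearly independent encoding vectors, which plain quorum intersection does not provide. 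Indeed the conclusion you want from the overlap lemma --- that \emph{all} honest survivors of Phase~2 already share one message --- is not what the protocol achieves: the paper's own attack example (Section~\ref{sec:vcc}) exhibits two honest groups with different messages both surviving Phase~1, the paper only proves that \emph{at most two} groups can survive Phase~2 (Lemma~\ref{lm:eta3bound2}), and it needs the second masking round in Phase~3 together with a separate case analysis (Lemma~\ref{lm:eta1231}) to reduce this to at most one group. Your version would make Phase~3's masking redundant, which is a sign the claim is too strong.

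What the paper uses in place of your overlap lemma is a quantitative graph-counting argument (Lemmas~\ref{lm:graph} and~\ref{lm:sizem}): starting from any processor that survives Phase~2, the structure of the link indicators forces the existence of a set $\Dc'$ of at least $n-9t/4-1$ honest processors each sharing $\geq k$ consistent coordinates with that processor's neighborhood, and only then does the rank argument force them all onto the same initial message; the bound $|\Ac_{\Lin}|\geq n-9t/4$ is what limits the number of surviving groups to two, after which the Phase-3 case analysis and the $\geq t+1$ group-size guarantee (from the $2t+1$ vote threshold) make Phase~4's majority calibration sound. Without a replacement for this counting step, your consistency argument does not go through. Two smaller issues in the bookkeeping: Phase~1 costs $O(n^2\max\{\ell/t,\log n\})$ bits (the $\log n$ floor comes from the Reed--Solomon field-size constraint $n\leq 2^c-1$), which equals $O(\max\{n\ell,nt\log t\})$ only when $t=\Omega(n)$ --- the paper handles small $t$ by running the protocol among $n'=3t+1$ selected processors and then disseminating coded symbols --- and the $nt\log t$ term originates from these coded symbols, not from the Phase~2--3 exchanges, which in the paper are single-bit indicators plus an $O(nt)$-bit one-bit consensus.
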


The proposed COOL achieves the consensus on an $\Lh$-bit message with  resilience of  $n\geq  3t+1$ (optimal),   round complexity  of $O(t)$ rounds  (asymptotically optimal), and  communication complexity of  $O(\max\{n\Lh, n t \log t \})$  bits (asymptotically  optimal when $\ell \geq t   \log t$), simultaneously. 
The description of  COOL is provided in Section~\ref{sec:COOL}.
Note that, for any error-free BA protocol, the known lower bounds on resilience, round complexity and communication complexity  are   
$ 3t + 1$ (cf.~\cite{PSL:80, LSP:82}),  $t + 1$  (cf.~\cite{FL:82, DS:83}), and $\Omega(\max\{n\Lh, nt\})$ (cf.~\cite{DR:85, FH:06}), respectively.

\begin{theorem} [BB problem]  \label{thm:BBs}
The proposed adapted COOL is an error-free signature-free information-theoretic-secure multi-valued  BB  protocol that achieves the consensus on an $\Lh$-bit message  with optimal resilience,  asymptotically optimal round complexity, and asymptotically  optimal communication complexity  when $\ell \geq t   \log t$, simultaneously.  
\end{theorem}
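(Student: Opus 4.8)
The plan is to realize the BB protocol --- the ``adapted COOL'' --- as COOL run on top of one extra \emph{dissemination} round, and then to lean on Theorem~\ref{thm:BAs} as much as possible. In the dissemination round the leader encodes its $\ell$-bit value $\Me$ with the same error-correction code that COOL uses internally and sends the $i$-th coded symbol to $\PRO$~$i$ (equivalently, one may have the leader send $\Me$ to every $\Pro$ directly); each $\Pro$ then takes the received data as its initial state and the $n$ $\Pros$ execute COOL unchanged. The dissemination round costs one round and $O(\max\{\ell,n\})$ bits in the coded version (or $O(n\ell)$ bits in the direct version), which is dominated by COOL's own cost, so the adapted COOL inherits the resilience $n\ge 3t+1$, the round complexity $O(t)$, and the communication complexity $O(\max\{n\ell,nt\log t\})$ of Theorem~\ref{thm:BAs}.

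First I would dispatch \emph{termination} and \emph{consistency}. These two conditions are phrased identically for BA and BB, so they follow from the corresponding guarantees of COOL established in the proof of Theorem~\ref{thm:BAs}: the (at most four) phases of COOL still terminate, and after the third phase at most one group of honest $\Pros$ holds a non-default value while the rest hold an empty/default value, so after the calibration step all honest $\Pros$ output the same value and halt. Nothing in the dissemination round affects this graph-theoretic and coding-theoretic argument.

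The load-bearing step is \emph{validity} in the BB sense: if the leader is honest, every honest $\Pro$ must output $\Me$. When the leader is honest, the data it distributes forms a consistent input --- in the direct version all $n-t\ge 2t+1$ honest $\Pros$ literally hold $\Me$, so COOL's own validity (``common honest input $\Rightarrow$ that output'') immediately yields $\Me$; in the coded version the honest $\Pros$ jointly hold $n-t$ correct symbols of the unique codeword of $\Me$ against at most $t$ arbitrary symbols, and I would show that this is exactly the regime in which COOL's error-detection/error-masking rules (Sections~\ref{sec:code}--\ref{sec:COOL}) place all honest $\Pros$ in the single good group and force the reconstruction step to return $\Me$, never the default branch. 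Either way this step reduces to the same distance/linear-algebra lemma used in the BA proof, now instantiated on the honest leader's codeword; matching the ``slightly different'' BB validity condition to COOL's internal thresholds is where the argument needs care.

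Finally, optimality holds because the known lower bounds for BB coincide with those for BA --- $n\ge 3t+1$, $\Rd\ge t+1$, and $\BT\ge\Omega(\max\{n\ell,nt\})$ --- since termination and consistency are unchanged and the honest-leader case of BB validity already produces the hard instances behind those bounds. Hence $n\ge 3t+1$ is optimal, $O(t)$ rounds is asymptotically optimal, and $O(\max\{n\ell,nt\log t\})$ bits is asymptotically optimal for $\ell\ge t\log t$, which establishes the theorem. I expect the main obstacle to be exactly the validity analysis: one must guarantee both that a \emph{dishonest} leader cannot steer honest $\Pros$ toward distinct codewords (this is handled by consistency) and that an \emph{honest} leader is always reconstructed, which forces the dissemination code's rate and minimum distance to be chosen compatibly with the detection and masking thresholds baked into COOL.
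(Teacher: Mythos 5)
Your main line---one extra round in which the leader sends its full $\ell$-bit message to every $\Pro$, after which the $n$ $\Pros$ run COOL unchanged, with termination and consistency inherited directly, BB validity reduced to COOL's BA validity applied to the common honest-leader input, the extra $O(n\ell)$ bits and single round absorbed into COOL's cost, and optimality from the fact that the BA lower bounds also apply to BB---is exactly the paper's proof of Theorem~\ref{thm:BBs}. One caveat: your parenthetical ``coded'' dissemination variant, where the leader sends only the $i$-th codeword symbol to $\PRO$~$i$ at cost $O(\max\{\ell,n\})$ bits, is \emph{not} equivalent and is not covered by Theorem~\ref{thm:BAs}: COOL's Phase~1 requires every $\Pro$ to hold a full $\ell$-bit initial message to encode and cross-check, and with an honest leader no honest $\Pro$ would even possess $\Me$, so honest-leader validity (and the whole group/graph analysis) would fail without redesigning the protocol; only the direct version carries the argument.
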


The proposed COOL designed for the BA setting can be adapted into the BB setting, which achieves the consensus on an $\Lh$-bit message with  the same performance of  resilience, round complexity  and communication complexity as in the BA setting.  
 Note that  the known lower bounds on resilience, round complexity and communication complexity for error-free BA protocols, can also be applied to any error-free BB protocols. 
The description of  adapted COOL for the BB setting is provided in Section~\ref{sec:COOLbb}.

In this work, with the achievable performance by the proposed COOL and the known lower bounds, we characterize the \emph{optimal} communication complexity exponent $\Ce^{*}(\Le, \Fe)$ achievable by \emph{any} error-free BA (or BB) protocol when $n\geq 3t+1$, that is, 
\begin{align}
   \Ce^*(\Le, \Fe) = \max\{1+  \Le, 1+  \Fe  \}    \non
 \end{align}
 where the \emph{communication complexity exponent} $\Ce$, the \emph{message size exponent} $\Le$ and  the \emph{faulty size exponent} $\Fe$ are defined in \eqref{eq:comexp}. Note that, based on the know lower bound $b\geq \Omega(\max\{n\Lh, nt\})$,   the optimal communication complexity exponent is lower bounded by $\Ce^{*}(\Le, \Fe) \geq  \lim_{n \to \infty} \frac{\log \Omega(\max\{n\Lh, nt\}) }{\log n} = \max\{1+  \Le, 1+  \Fe  \}$.  This lower bound is achievable by the proposed COOL, as the communication complexity exponent of COOL, denoted by $\Ce^{[cool]}$, is $\Ce^{[cool]}(\Le, \Fe) =  \lim_{n \to \infty} \frac{\log O(\max\{n\Lh, n t \log t \}) }{\log n} = \max\{1+  \Le, 1+  \Fe  \}$. 
  As shown in Fig.~\ref{fig:com}, our proposed COOL provides certain gains in communication complexity exponent performance, compared to the protocols in \cite{LV:11, GP:20, LDK:20, NRSVX:20} and the $\ell$-$1$-bit scheme. 
Before describing the proposed COOL, let us first describe the advantages,   issues, and challenges of coding for BA in the following section.

\section{Coding for BA:   advantages,   issues, and challenges}   \label{sec:code}

Coding and information theory provide very elegant and powerful ways of detecting or correcting errors in possibly corrupted data via carefully constructed codes with certain properties, which have revolutionized the digital era. This work seeks to \emph{optimally} use these concepts and techniques to bring the \emph{maximal} benefits
in the field of Byzantine agreement.

\subsection{Advantages of using coding}

For the multi-valued Byzantine agreement problem, error correction codes can be utilized to achieve consensus for the \emph{entire string} of message but not bit by bit.
In order to see the advantages of coding for Byzantine agreement, let us discuss the following three schemes.  \\
$\bullet$ \emph{Scheme with full data transmission:}  For  the $\ell$-$1$-bit scheme with  bit-by-bit consensus  in sequence, since  distributed $\Pros$ need to exchange the whole message data,  it is not surprising that  this scheme results in a sub-optimal communication complexity (see Table~\ref{tb:comp}). \\ 
$\bullet$ \emph{Non-coding  scheme with reduced data transmission:}  To reduce the communication complexity,  one possible solution is to let distributed $\Pros$ exchange only one piece of data, instead of the whole data. 
However, if the data is \emph{uncoded}, this reduction in data exchange could possibly lead to a failure, i.e.,   not satisfying termination, consistency or validity conditions. This is because   honest $\Pros$ might not get enough information from others   and hence this protocol is more vulnerable to the attacks from dishonest $\Pros$. \\  
 $\bullet$ \emph{Coding  scheme with reduced data transmission:}   If the data is \emph{coded} with error correction code, the distributed $\Pros$ could possibly be able to detect and correct errors, even with reduced data transmission. Hence,   the protocol using  carefully-crafted  coding  scheme with reduced data transmission could be robust against attacks from dishonest $\Pros$ and could be error free. 
 Table~\ref{tb:coding}  provides some comparison between the   three schemes.

\begin{table}[h!] 
\small
\begin{center}
\caption{Comparison of three BA schemes. DT  stands for  data transmission.} \label{tb:coding}
      \vspace{-.1 in}
\begin{tabular}{||c||c|c|  }
\hline
Schemes & communication  complexity &   		   error free   \\ 
\hline
scheme with full DT      &  high    &      yes      \\
\hline
non-coding  scheme with reduced DT   &  low        & no  \\
\hline
  coding  scheme with reduced DT  &    {\color{blue}low} &         {\color{blue}yes}   \\
\hline
\end{tabular}
      \vspace{-.2 in}
\end{center}
\end{table}

\subsection{Error correction codes}   \label{sec:ecc}

The $(n, k)$ Reed-Solomon error correction code encodes $k$ data symbols from Galois Field $GF(2^c)$ into a codeword that is consisting of $n$ symbols from $GF(2^c)$, for  \[n \leq 2^c  -1\] (cf.~\cite{RS:60})\footnote{For the extended Reed-Solomon codes, the constraint can be relaxed to $n \leq 2^c +1$, and to $n \leq 2^c +2$ in some cases.}. 
We can use $c$ bits to represent  each symbol from $GF(2^c)$, which implies that a  vector consisting of $k$ symbols from $GF(2^c)$ can be represented using $kc$ bits of data. 
The error correction code can be constructed by \emph{Lagrange polynomial interpolation} \cite{rabin:89}.  This constructed code is  a variant of Reed-Solomon code with minimum distance $d=n-k+1$, which is optimal according to the Singleton bound. 
For  an input  vector $\xv \defeq [x_1,  x_2, \cdots, x_k ]^\T$  and  an output vector $\zv \defeq [z_1,  z_2, \cdots, z_n]^\T$ with  data symbols  from $GF(2^c)$,  the encoding  is described below
\begin{align}
  z_i =   \hv_i^\T  \xv    \quad    \quad i\in [1:n]  \label{eq:zidef}     
 \end{align}
where  
\begin{align}
 \hv_i \defeq  [h_{i,1},  h_{i,2}, \cdots, h_{i,k} ]^\T  \quad \text{and} \quad  h_{i,j} \defeq \prod_{\substack{p=1 \\ p \neq j} }^{k} \frac{i-p}{j-p} , \quad   i\in [1:n], \quad j\in [1:k].  \label{eq:zidefh}     
 \end{align}
 Note that when $k=1$, it becomes a non-coding scheme, with coefficients  set as $\hv_i =  h_{i,1} =1$ for   $i\in [1:n]$. 
An $(n, k)$ error correction code  can correct up to $\lfloor \frac{n-k}{2}\rfloor$ Byzantine errors, that is, 
up to $\lfloor \frac{n-k}{2}\rfloor$  Byzantine errors in $n$  symbol observations (see \eqref{eq:zidef}) can be corrected by applying some efficient decoding algorithms for Reed-Solomon code, such as, Berlekamp-Welch algorithm and Euclid's algorithm\cite{roth:06, Berlekamp:68, RS:60}.  
More generally,  an $(n, k)$ error correction code can correct up to  $f_c$ Byzantine errors and simultaneously detect up to $f_d$ Byzantine errors in $n'$ symbol observations if and only if the  conditions of $2f_c + f_d \leq   n'- k$  and  $n' \leq  n$ are satisfied.

\subsection{Coding  scheme with reduced DT: what could possibly go wrong?}

In the BA problem, if distributed $\Pros$ exchange information with reduced data transmission,  the honest $\Pros$ might not be able to get enough information from others  and might be vulnerable to attacks from dishonest $\Pros$. In this sub-section, we will describe some possible issues when using coding  scheme with reduced data transmission. In the next section, we will show how to handle the issues by carefully designing the  protocol.

\subsubsection{Attack example: violating validity condition} \label{sec:vvc}

Let us consider  an example  where  all   honest $\Pros$  hold the \emph{same} initial message $\Meg$, i.e., $\Me_i = \Meg, \forall i \notin \Fc$, where $\Fc$ denotes a set of dishonest $\Pros$, for some non-empty $\ell$-bit value of  $\Meg$. 
By using coding scheme, each  $\Pro$ could encode its initial message $\Me_i$ into some symbols with reduced size,  as in \eqref{eq:zidef}. In this way, the  $\Pros$ could exchange information with reduced data transmission.
For example, $\PRO$~$i$, $ i \in [1:n]$, could encode its initial message $\Me_i$ into some symbols as  $y_j^{(i)}  =    \hv_j^\T \Me_i$,  $j \in  [1:n]$, where $\hv_i$ is defined as in \eqref{eq:zidefh}.  The size of the symbol $y_j^{(i)}$ is $c=\ell/k$ bits, which is relatively small compared to the original message size, when $k$  is sufficiently large.  By sending a pair of  the $\ell/k$-bit information symbols $(y_j^{(i)}, y_i^{(i)})$, instead of the  $\ell$-bit message $\Me_i$,  from $\PRO$~$i$ to $\PRO$~$j$ for $j\neq  i$, the communication complexity could be significantly reduced.

However, if the protocol is not well designed, the honest $\Pros$ might not be able to make a correct consensus output that is supposed to be   $\Meg$ in this case under the validity condition.
For example, if $k$ is designed to be too large, then the honest $\Pros$ might not be able to correct the errors injected by the dishonest $\Pros$, and hence make a wrong consensus output, e.g., as a default value $\phi$, violating the validity condition (see Fig.~\ref{fig:attackv}).

\begin{figure}[h!]
\centering
\includegraphics[width=15cm]{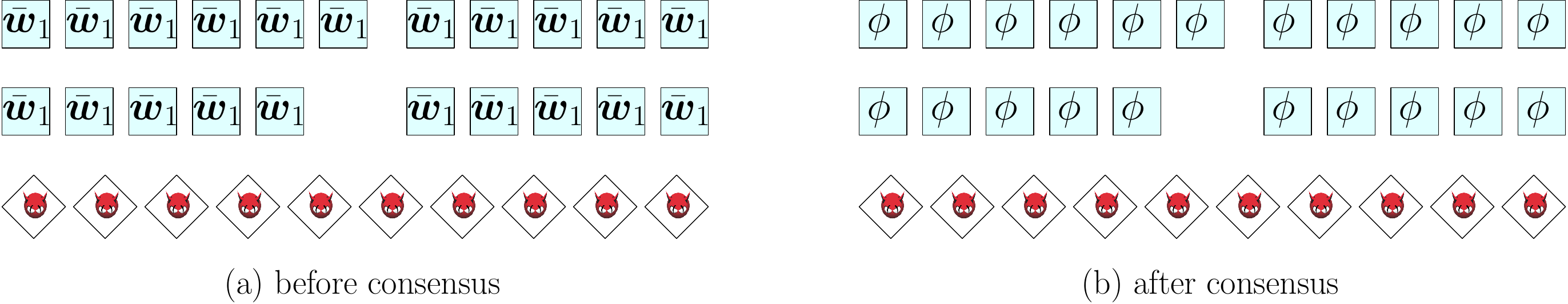}
      \vspace{-.1 in}
\caption{An attack example violating  the validity condition with $(t=10, n=31)$.  
The light-cyan  square nodes  refer to  the  honest processors, while the rest nodes refer to the dishonest processors.} 
\label{fig:attackv}
\end{figure}

\subsubsection{Attack example: violating consistency condition}   \label{sec:vcc}

 Let us  consider another example with three disjoint groups in $n$-$\Pro$ network:  Group $\Fc$,  Group $\Ac_{1}$ and Group $\Ac_{2}$. Group $\Fc$  is a set of $t$  dishonest $\Pros$.    Group $\Ac_{1}$ is a set  of  honest $\Pros$ holding the same initial message $\Meg_{1}$, i.e., $\Me_i = \Meg_{1}, \forall i \in \Ac_{1}$, while Group $\Ac_{2}$ is a set  of  honest $\Pros$ holding the same initial message $\Meg_{2}$, i.e., $\Me_{i'} = \Meg_{2}, \forall i' \in \Ac_{2}$,  given $|\Ac_{1}| =t+1$, $|\Ac_{2}| =t$, and for some  $\Meg_{1}$ and  $\Meg_{2}$, $\Meg_{1} \neq \Meg_{2}$  (see an example in Fig.~\ref{fig:attackc}).  
By using coding scheme,  $\PRO$~$i$, for $i \in  [1:n]$, could encode its initial $\ell$-bit message $\Me_i$ into $\ell/k$-bit symbols as  $y_j^{(i)}  =    \hv_j^\T \Me_i$,  $j \in  [1:n]$. 
Then $\PRO$~$i$ could send  a pair of ``compressed'' information symbols  $(y_j^{(i)}, y_i^{(i)})$ to $\PRO$~$j$,  for $j\neq  i$. 
By sending ``compressed'' information symbols to other $\Pros$, the communication complexity could be significantly reduced.  
However, if the protocol is not well designed,  the honest $\Pros$ might  make inconsistent consensus outputs.

To attack the protocol, $\PRO$~$i$, $i\in \Fc$,  could send out inconsistent information, e.g., sending a pair of symbols $(\hv_j^\T  \Meg_{1},  \hv_i^\T  \Meg_{1})$ to $\PRO$~$j$ for $j \in \Ac_{1}$ and sending another pair of symbols $(\hv_{j'}^\T  \Meg_{2}, \hv_i^\T  \Meg_{2})$ to $\PRO$~$j'$ for $j'\in \Ac_{2}$, respectively. 
In this scenario,  each $\Pro$ in Group $\Ac_{1}$ might make a consensus output as $\Meg_{1}$. This is because, for  $\PRO$~$j$, $j \in \Ac_{1}$, at least $2t+1$  observations of $(y_j^{(i)}, y_i^{(i)})$ received from $\PRO$~$i$, $i\in \Ac_{1}\cup \Fc$, are encoded from message $\Meg_{1}$. Hence,  each $\Pro$  in Group $\Ac_{1}$ has an illusion  that at least $2t+1$ $\Pros$ hold the same initial message $\Meg_{1}$. 
At the same time, some $\Pros$ in Group $\Ac_{2}$ might make a consensus output as $\Meg_{2}$. This is because,   for  $\PRO$~$j$, $j \in \Ac_{2}$, $2t$  observations of $(y_j^{(i)}, y_i^{(i)})$ received from Group $\Ac_{2}$  and Group $\Fc$ are encoded from message $\Meg_{2}$. Furthermore,  some observations of $(y_j^{(i)}, y_i^{(i)})$ received from Group $\Ac_{1}$ could possibly be expressed as $(y_j^{(i)}, y_i^{(i)}) = ( \hv_j^\T  \Meg_{2}, \hv_i^\T  \Meg_{2})$ when
\begin{align}
 \hv_i^\T  \Meg_{1}  =  \hv_i^\T  \Meg_{2} \quad    \text{and}  \quad \hv_j^\T  \Meg_{1}  =  \hv_j^\T  \Meg_{2},\quad   \text{for some} \  i \in  \Ac_{1},  j \in \Ac_{2} \label{eq:equal12}   
 \end{align}
 i.e., when $(\Meg_{1} - \Meg_{2})$ is in the null space of $\hv_i$ and $\hv_j$.   
With the condition in \eqref{eq:equal12}, some $\Pros$  in Group $\Ac_{2}$ have an illusion  that at least $2t+1$ $\Pros$ hold the same initial message $\Meg_{2}$.  At the end, some honest $\Pros$ might  make inconsistent consensus outputs   (see Fig.~\ref{fig:attackc}).

 \begin{figure}[h!]
\centering
\includegraphics[width=15cm]{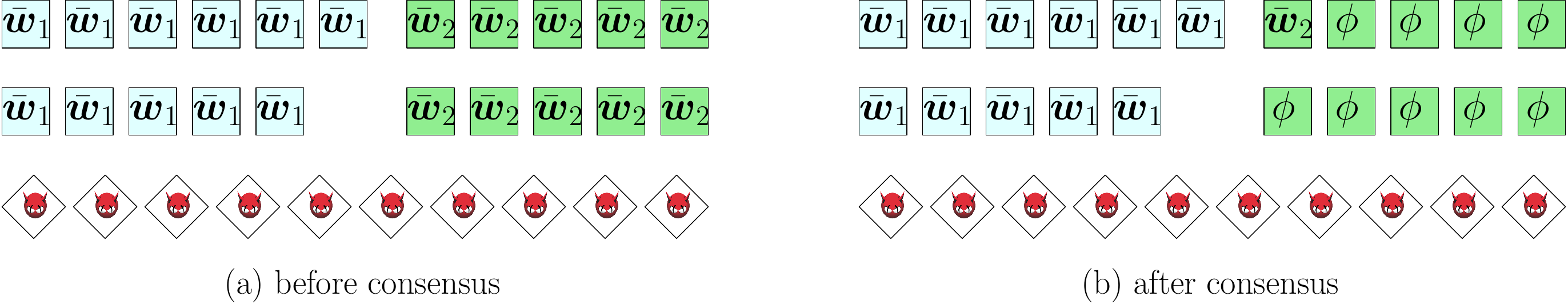}
      \vspace{-.1 in}
\caption{An attack example violating   consistency condition with $(t=10, n=31)$.  The cyan and green  square nodes  refer to  Group $\Ac_{1}$  and Group $\Ac_{2}$ of honest processors, respectively.} 
\label{fig:attackc}
\end{figure}

 \subsubsection{Attack example: violating termination condition}

For the coding scheme with reduced data transmission, if the protocol is not designed well, it might not be able to terminate.  For example, one possible termination condition at a $\Pro$ is that it needs to hear at least $2t+1$ ``ready'' signals from other $\Pros$, where the ``ready'' signal could be sent when a $\Pro$ is able to correct up to $t$ errors from the received observations. However, due to the inconsistent information injected from dishonest $\Pros$,  there might be some events  that the termination condition will never be satisfied and hence the protocol won't terminate.

 \subsection{Coding  scheme with reduced DT: what are the challenges?}

 For the coding  scheme with reduced date transmission, since $\Pros$  exchange  coded information with reduced size (cf.~\eqref{eq:zidef}), it might create some illusion at some  $\Pros$ as described in Section~\ref{sec:vcc}.  
 As shown in  \eqref{eq:equal12}, when $\PRO$~$i$, for some $i\in \Ac_{1}$, sends  a pair of coded information symbols $(\hv_j^\T  \Meg_{1}, \hv_i^\T  \Meg_{1})$ to $\PRO$~$j$, for some $j\in \Ac_{2}$, this pair of coded information symbols could possibly be expressed as $(\hv_j^\T  \Meg_{1}, \hv_i^\T  \Meg_{1})= (\hv_j^\T  \Meg_{2}, \hv_i^\T  \Meg_{2})$,  for some different $\Meg_{1}$ and $\Meg_{2}$, which creates an illusion at $\PRO$~$j$ that  $\PRO$~$i$ might initially hold the message $\Meg_{2}$, but not $\Meg_{1}$.  Removing such illusion is one of the challenges in designing the coding scheme with reduced date transmission. This is because, sending the coded information with reduced dimension, e.g.,  $(\hv_j^\T  \Meg_{1}, \hv_i^\T  \Meg_{1})$, could not reveal enough information about the initial message $\Meg_{1}$. To make it efficient and secure, i.e., satisfying termination, consistency and validity conditions, the coding scheme  needs to be carefully crafted.

Another challenge lies in a constraint of Reed-Solomon error correction code used in the coding  scheme with reduced date transmission.
The $(n, k)$ Reed-Solomon error correction code encodes $k$ data symbols from Galois Field $GF(2^c)$ into a codeword that is consisting of $n$ symbols from $GF(2^c)$, but under the constraint of  $n \leq 2^c  -1$, or equivalently $\log (n+1) \leq c$ (see Section~\ref{sec:ecc}).   
This constraint implies that each symbol  from $GF(2^c)$ is represented by at least $\log (n+1)$ bits.  
If every $\Pro$ sends one such symbol to all other $\Pros$, it will result in a communication complexity of at least $\Omega(n^2 \log n)$ bits.  
As it will be shown later on, this is the main reason that there is still a multiplicative  gap (within $\log t$) between the  communication complexity of  proposed protocol and the known lower bound, when $\Lh < t \log t$. When $\Lh \geq  t \log t$, the proposed protocol achieves the asymptotically optimal communication complexity of   $\Theta(  n \Lh )$ bits.  To close the gap in the regime of $\Lh < t \log t$, one might need to overcome the challenge incurred by the constraint $n \leq 2^c  -1$ of Reed-Solomon error correction code.

\section{COOL:   coded Byzantine agreement protocol}  \label{sec:COOL}

This section will describe   the proposed COOL: {\bf co}ded Byzantine agreement protoc{\bf ol}. 
The design of COOL is based on the coding  scheme with reduced data transmission, and is crystallized from coding theory, graph theory and linear algebra. 
In a nutshell, carefully-crafted error correction codes  provide an efficient way of exchanging ``compressed'' information among distributed nodes, while keeping the ability of detecting errors, masking errors, and making a consistent and validated agreement at honest distributed nodes.
To this end, it will be shown that the proposed COOL is an  error-free signature-free information-theoretic-secure multi-valued  BA  protocol that achieves the consensus on an $\Lh$-bit message with  resilience of  $n\geq  3t+1$ (optimal),   round complexity  of $O(t)$ rounds  (asymptotically optimal), and  communication complexity of  $O(\max\{n\Lh, n t \log t \})$  bits (asymptotically  optimal when $\ell \geq t   \log t$), simultaneously. This result will serve as the achievability proof of Theorem~\ref{thm:BAs}.

For this BA problem, $\PRO$~$i$, $i \in [1:n]$, initially has the $\ell$-bit input message $\Me_i$. 
At first  the parameters $k$  and $c$  are designed as 
 \begin{align}
k \defeq    \Bigl \lfloor   \frac{ t  }{5 } \Bigr\rfloor    +1, \quad  \cb \defeq   \Bigl \lceil \frac{ \max\{ \ell, \ (t/5 +1) \cdot \log (n+1) \} }{k} \Bigr\rceil  .   \label{eq:qdef} 
\end{align}  
As will be shown later, our design of $k$ and $c$ as above is one of the key elements in the proposal protocol, which guarantees that the proposed protocol satisfies  the  termination, consistency and validity conditions. 
With the above values of $k$ and $c$, it holds true that the  condition of the Reed-Solomon error correction code, i.e.,   $ n \leq 2^c -1$, is satisfied (see Section~\ref{sec:ecc}). 
The $(n, k)$ Reed-Solomon error correction code  will be used to encode the $\ell$-bit initial message $\Me_i$, $\forall i \in [1:n]$.   When $\ell$ is less than $kc$ bits,  the $\ell$-bit  message $\Me_i$ will be first extended to a $kc$-bit data by adding $(kc - \ell)$ bits of redundant zeros (zero padding). 
The proposal COOL will work as long as $t \leq \frac{n-1}{3}$. 
In the description of the proposed COOL,  $t$ is considered such that $t \leq \frac{n-1}{3}$ and $t=\Omega(n)$. Later on we will discuss the case when $t$ is relatively small compared to $n$. 

The proposed COOL consists of at most four phases (see Fig.~\ref{fig:cool}), which are described in the following sub-sections. 
The proposed COOL is also described in Algorithm~\ref{algm:BAs}. 
Two examples of COOL are provided in Fig.~\ref{fig:coolex} and Fig.~\ref{fig:coolexv}, which   address the issues described in Section~\ref{sec:vcc} and Section~\ref{sec:vvc}, respectively.  Some notations of COOL are summarized in Table~\ref{tb:notation}.

 \begin{figure}
\centering
\includegraphics[width=11cm]{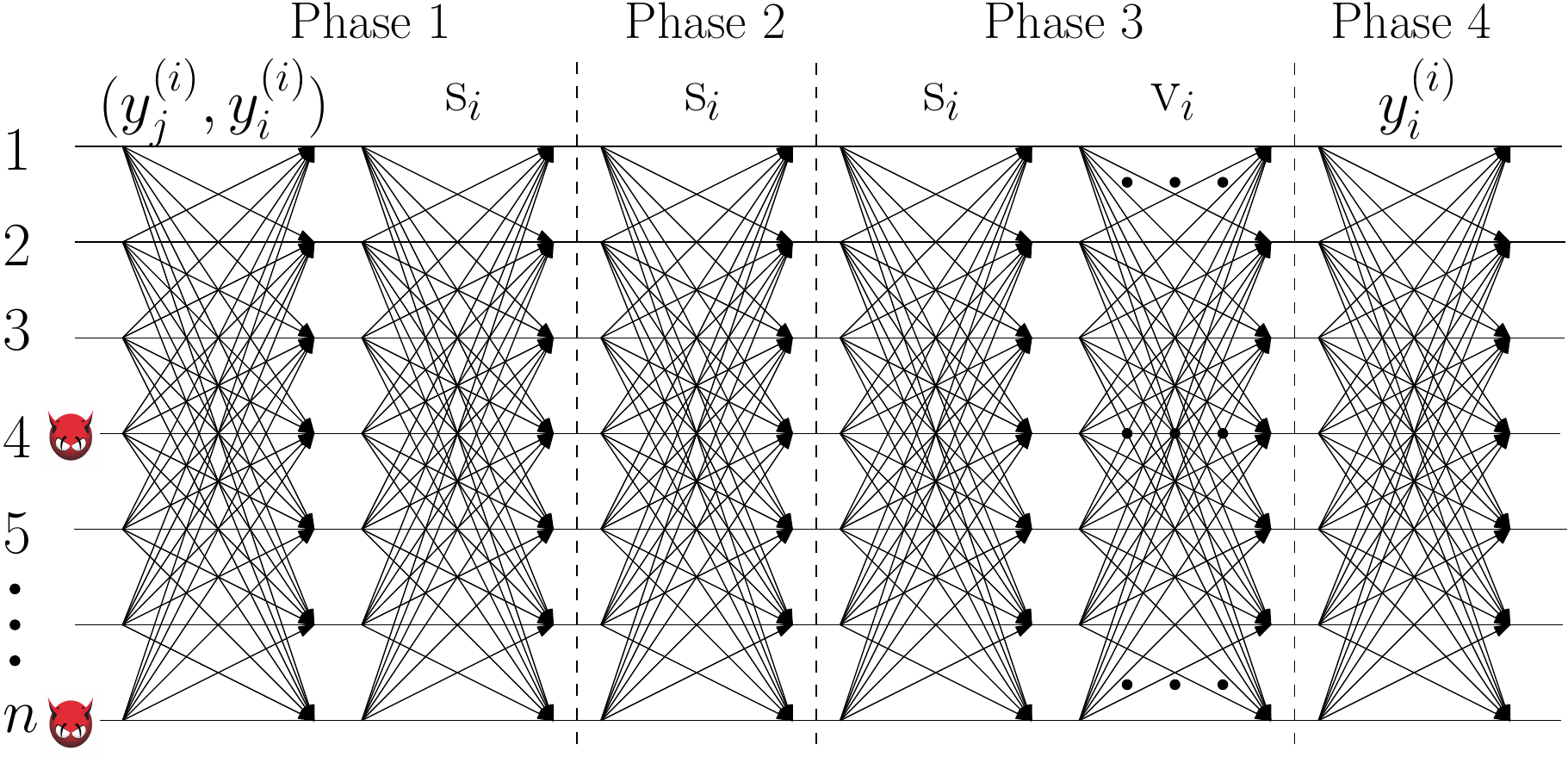}
\caption{Four-phase operation of the proposed COOL. In each phase, the distributed $\Pros$ exchange some kinds of information symbols (see the symbols above the first line) that are defined later and summarized in Table~\ref{tb:notation}.}  
\label{fig:cool}
\end{figure}

Let us first define  $\Me^{(i)}$ as the updated (or decoded) message at $\PRO$~$i$, $i \in [1:n]$.  $\Me^{(i)}$ can be updated via decoding, or via comparing its own information and the obtained information. 
In this proposed protocol, the decoding is required at Phase~4 only. 
The  value of $\Me^{(i)}$ is initially set as $\Me^{(i)} = \Me_{i}$, $i \in [1:n]$.

\subsection{Phase 1: exchange compressed  information  and update message \label{sec:Ph1} }

Phase 1 has three steps. The idea is to exchange ``compressed''  information and learn it.    
   
\emph{1) Exchange compressed  information:} $\PRO$~$i$, $i \in [1:n]$,  first encodes its $\ell$-bit initial message $\Me_{i}$ into $\ell/k$-bit symbols as    \begin{align}
y_j^{(i)}  \defeq    \hv_j^\T\Me_{i}, \quad  j \in  [1:n]    \label{eq:yi11} 
\end{align}
where $\hv_i$ is defined as in \eqref{eq:zidefh}. 
Then, $\PRO$~$i$, $i \in [1:n]$, sends a pair of coded symbols $(y_j^{(i)}, y_i^{(i)})$ to $\PRO$~$j$ for $ j \in  [1:n],  j\neq i$.

\emph{2) Update information:}  
 $\PRO$~$i$, $i \in [1:n]$, compares the observation $(y_i^{(j)}, y_j^{(j)})$ received from $\PRO$~$j$ with its  observation $(y_i^{(i)}, y_j^{(i)})$ and sets a binary indicator for the link between $\PRO$~$i$ and $\PRO$~$j$, denoted by $\Lk_i(j)$, as 
\begin{numcases}  
{ \Lk_i (j)=} 
     1   &    if   \    $ (y_i^{(j)}, y_j^{(j)}) = (y_i^{(i)}, y_j^{(i)})$           			 \label{eq:lkindicator}  \\
  0  &  else            		\non  
\end{numcases}
 for $ j \in  [1:n]$.  $\Lk_i (j)$ can be considered as a \emph{link indicator} for $\PRO$~$i$ and $\PRO$~$j$.  The value of $\Lk_i (j)=0$ reveals that $\PRO$~$i$ and $\PRO$~$j$ have mismatched messages, i.e.,  $ \Me^{(i)} \neq \Me^{(j)}$. However, the value of $\Lk_i (j)=1$ does \emph{not} mean that  $\PRO$~$i$ and $\PRO$~$j$ have matched messages; it just means that $\PRO$~$i$ and $\PRO$~$j$ share a common information at a certain degree, i.e.,  $ (y_i^{(j)}, y_j^{(j)}) = (y_i^{(i)}, y_j^{(i)})$.  
 When $\Lk_i (j)=0$,  the  observation of $(y_i^{(j)}, y_j^{(j)})$ received from $\PRO$~$j$ is considered as a \emph{mismatched observation} at  $\PRO$~$i$.
 In this step $\PRO$~$i$, $i \in [1:n]$,  checks if its own initial message   successfully matches the majority of  other $\Pros$' initial messages, by counting the number of mismatched observations. Specifically, $\PRO$~$i$ sets a binary \emph{success indicator}, denoted by $\Ry_i$, as 
\begin{numcases}  
{ \Ry_i =} 
     1   &    if   \    $ \sum_{j=1}^n \Lk_i (j)  \geq n - t$           			 \label{eq:sindicator}  \\
  0  &  else   .         		\non  
\end{numcases}
The event of $\Ry_i =0$ means that the number of mismatched observations is more than $t$, which implies that the initial message of $\PRO$~$i$ doesn't  match the majority of  other $\Pros$' initial messages. 
If  $\Ry_i =0$,    $\PRO$~$i$ updates the message as  $\Me^{(i)} =\phi$  (a default value), else keeps the original value of $\Me^{(i)}$.

\emph{3) Exchange success indicators:}  
 $\PRO$~$i$, $i \in [1:n]$,  sends the binary value of  success indicator $\Ry_i$ to  all other $\Pros$. 
Based on the received success indicators  $\{\Ry_{i}\}_{i=1}^n$, each $\Pro$ creates the following two sets: 
   \begin{align}
\Ss_1  \defeq   \{ i:   \Ry_{i}=1,   i \in [1:n ]\} ,\quad 
\Ss_0    \defeq   \{ i:  \Ry_{i}=0,   i \in [1:n ]\} .  \label{eq:vr0} 
 \end{align}
Note that different $\Pros$ might have different views on $\Ss_1 $ and $\Ss_0$, due to the inconsistent information possibly
sent from dishonest $\Pros$.

   \begin{remark}  \label{rk:ph1a} 
  \emph{Since $y_j^{(i)}$ defined in \eqref{eq:yi11} has only $\cb$ bits, the total communication complexity among the network for the first step of Phase~1, denoted by $\Bit_1$,  is $\Bit_1 = 2 \cb n (n-1)$ bits. 
  If  $\PRO$~$i$, $i \in [1:n]$,  sends the whole message $\Me_i$ to other $\Pros$, then the total communication complexity  would be $\ell n (n-1)$ bits. 
 Compared to the whole message $\Me_i$,  the value of $y_j^{(i)}$ can be considered as a compressed  information.  By exchanging compressed  information, instead of whole messages,  the communication complexity is significantly reduced in this step.
 Since the success indicator $\Ry_i$ has only $1$ bit,   the total communication complexity among the network  for the third step of Phase~1, denoted by $\Bit_2$,  is $\Bit_2 =  n (n-1)$ bits.}  
\end{remark}

\begin{remark}
\emph{In Phase~1, exchanging ``compressed''  information reduces the communication complexity, however,  it also creates some potential issues due to the lack of full  original information.  
As shown in Fig.~\ref{fig:coolex}, each dishonest $\Pro$ could send inconsistent information to two different groups of honest $\Pros$, that is,   
sending a pair of symbols $(\hv_j^\T  \Meg_{1},  \hv_i^\T  \Meg_{1})$ to $\PRO$~$j$ for $j \in \Ac_{1}$ and sending another pair of symbols $(\hv_{j'}^\T  \Meg_{2}, \hv_i^\T  \Meg_{2})$ to $\PRO$~$j'$ for $j'\in \Ac_{2}$, respectively, for  $i\in \Fc$. 
 Due to this inconsistent information, together with the condition of $\hv_1^\T  \Meg_{1}  =  \hv_1^\T  \Meg_{2}$ and $\hv_{12}^\T  \Meg_{1}  =  \hv_{12}^\T  \Meg_{2}$, the honest $\Pros$ from different groups consequently have different updated messages, which  might  lead to inconsistent consensus outputs as described in Section~\ref{sec:vcc}. 
In Phase~1 $\PRO$~$i$, $i\in [13:21] \subset \Ac_{2}$, sets $\Lk_i (j) =0, \forall j \in [1:11]$ and $\Ry_i =0$, from which it identifies that its initial message doesn't  match the majority of  other $\Pros$' initial messages and then updates its message  as $\Me^{(i)}=\phi$. 
However, $\PRO$~$12$ still thinks that its initial message does  match the majority of  other $\Pros$' initial messages because of the condition $\hv_1^\T  \Meg_{1}  =  \hv_1^\T  \Meg_{2}$ and $\hv_{12}^\T  \Meg_{1}  =  \hv_{12}^\T  \Meg_{2}$.  This condition implies  that $ (y_{12}^{(j)}, y_j^{(j)}) = (y_{12}^{(12)}, y_j^{(12)})$ for any $j \in \{1\}\cup\Ac_{2}$ from the view of $\PRO$~$12$, and hence results in a ``wrong'' (i.e., mismatched)  output of $\Ry_{12} =1$ at  $\PRO$~$12$. 
In the next phases the effort is to detect  errors (mismatched information),  mask errors, and identify ``trusted'' information.} 
\end{remark}

 \begin{figure}[t!]
\centering
\includegraphics[width=15cm]{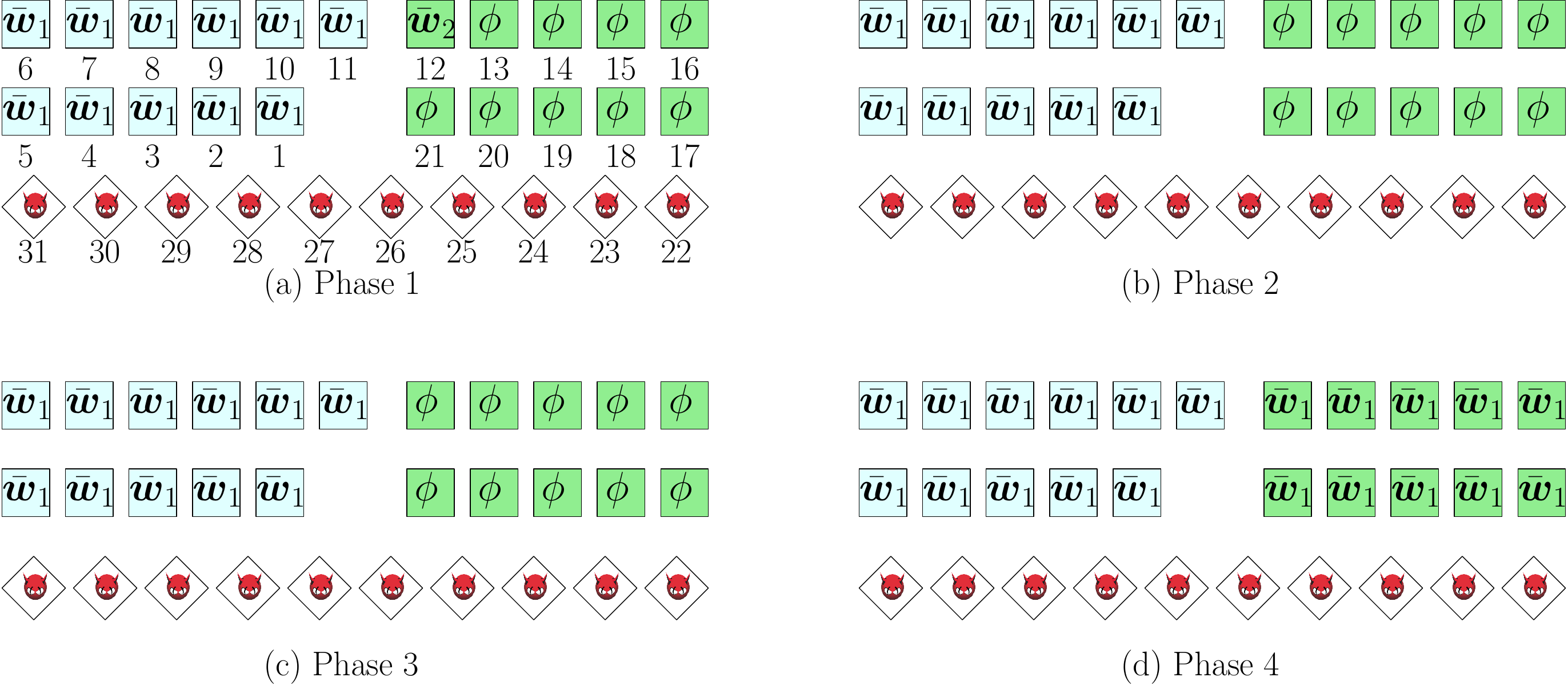}
 \caption{Illustration of  COOL addressing the issue described in Section~\ref{sec:vcc}, with $(t=10, n=31)$. The number under the node indicates the identity. The value inside the $i$th node denotes the value of updated message $\Me^{(i)}$ at the end of the corresponding phase.  Here $\Ac_{1} = [1:11]$, $\Ac_{2} = [12:21]$,  $\Fc = [22:31]$, $\Me_i = \Meg_{1}, \forall i \in \Ac_{1}$ and $\Me_{i'} = \Meg_{2}, \forall i' \in \Ac_{2}$.  It is assumed that $\hv_1^\T  \Meg_{1}  =  \hv_1^\T  \Meg_{2}$ and $\hv_{12}^\T  \Meg_{1}  =  \hv_{12}^\T  \Meg_{2}$, for $\Meg_{1} \neq \Meg_{2}$. 
   All honest  $\Pros$ eventually make the same consensus output.
} 
\label{fig:coolex}
\end{figure}

\subsection{Phase 2: mask errors, and update success indicator and message  \label{sec:Ph2} }

Phase 2 has three steps. The goal is to mask errors from the honest $\Pros$.

\emph{1) Mask errors identified in the previous phase:} $\PRO$~$i$, $i  \in \Ss_1$,  sets  
 \begin{align}
\Lk_i (j) = 0 ,   \   \forall j \in \Ss_0.   \label{eq:Thetaset2826} 
 \end{align}

\emph{2) Update and send success indicator:} $\PRO$~$i$, $i  \in \Ss_1$,  updates $\Ry_i$ as in \eqref{eq:sindicator} using updated values of $\{\Lk_i (1), \cdots, \Lk_i (n)\}$.   If the updated value of  success indicator is $\Ry_i =0$,  then $\PRO$~$i$ sends  the updated success indicator of $\Ry_i =0$ to others and  updates the message as  $\Me^{(i)} =\phi$.

\emph{3) Update $\Ss_1$ and $\Ss_0$:} $\PRO$~$i$, $i \in [1:n] $, updates the sets of $\Ss_1$ and $\Ss_0$ as in \eqref{eq:vr0} based on the newly received success indicators  $\{\Ry_{i}\}_{i=1}^n$.

  \begin{remark}   \label{rk:ph2a}
 \emph{Since the success indicator $\Ry_i$ has only $1$ bit,   the total communication complexity  for the second step of Phase~2, denoted by $\Bit_3$,  is bounded by $\Bit_3 \leq   n (n-1)$ bits.}  
 \end{remark}
 
  \begin{remark}
\emph{The idea of  Phase~2 is to mask errors from  honest  $\Pros$ whose initial messages don't  match the majority of  other $\Pros$' initial messages, but could not be detected out in   Phase~1 due to the condition as in \eqref{eq:equal12}.  
For the example in Fig.~\ref{fig:coolex}, at the second step of Phase 2, $\PRO$~$12$   updates the message as $\Me^{(12)}=\phi$. This is because at the second step of Phase 2, from the view of $\PRO$~$12$, the number of mismatched observations is at least $19$,  since $\Lk_{12} (j) =0, \forall j \in[2:11] \cup [13:21]$ based on the updated information in \eqref{eq:Thetaset2826}.}  
\end{remark}    

\subsection{Phase 3: mask errors, update information, and vote  \label{sec:Ph3}}

Phase 3 has five steps. The goal is to mask the rest of errors from the honest $\Pros$, and then vote for going to next phase or stop in this phase.

\emph{1) Mask errors identified in the previous phase:} $\PRO$~$i$, $i  \in \Ss_1$,  sets  
 \begin{align}
\Lk_i (j) = 0 ,   \   \forall j \in \Ss_0.   \label{eq:Thetaset2826P3} 
 \end{align}

\emph{2) Update and send success indicator:} $\PRO$~$i$, $i  \in \Ss_1$,  updates $\Ry_i$ as in \eqref{eq:sindicator} using updated values of $\{\Lk_i (1), \cdots, \Lk_i (n)\}$.   If the updated value of  success indicator is $\Ry_i =0$,  then $\PRO$~$i$ sends  the updated success indicator of $\Ry_i =0$ to others and  updates the message as  $\Me^{(i)} =\phi$.

\emph{3) Update $\Ss_1$ and $\Ss_0$:} $\PRO$~$i$, $i \in [1:n] $, updates the sets of $\Ss_1$ and $\Ss_0$ as in \eqref{eq:vr0} based on the newly received success indicators  $\{\Ry_{i}\}_{i=1}^n$.

\emph{4) Vote:} $\PRO$~$i$, $i \in [1:n]$, sets a binary vote as 
\begin{numcases}  
{ \Vr_i  =} 
     1   &    if   \    $ \sum_{j=1}^n \Ry_j  \geq  2t+1$           			\label{eq:vindicator}  \\
  0  &  else   .         		\non  
\end{numcases}
The value of $\Vr_i  =1$ means that  $\PRO$~$i$ receives greater than or equal to  $2t+1$ number of ones from $n$ success indicators $\{ \Ry_1, \Ry_2, \cdots, \Ry_n \}$.  
 The  indicator $\Vr_i$ can be considered as a vote from  $\PRO$~$i$  for going to next phase or stopping in this phase.      
          
\emph{5) One-bit consensus  on the $n$ votes:} In this step the system runs one-bit consensus  on the $n$ votes $\{\Vr_1, \Vr_2, \cdots, \Vr_n\}$  from all $\Pros$.
By using the one-bit consensus from \cite{BGP:92,CW:92}, the correct consensus can be made by using $O(nt)$ bits of   communication complexity, and $O(t)$ rounds of round complexity, for $t<n/3$. 
If the consensus of the votes $\{\Vr_1, \Vr_2, \cdots, \Vr_n\}$ is $1$, then every honest $\Pro$  \emph{goes to next phase}, else every honest $\Pro$   sets $\Me^{(i)} =\phi$ and considers it as a final consensus   and \emph{stops here}. 

 \begin{remark}  \label{rk:ph3a}
\emph{Since $\Ry_i$  has only $1$ bit,  the total communication complexity  for the second step of Phase~3, denoted by $\Bit_4$,  is bounded by $\Bit_4 \leq  n (n-1)$ bits.} 
\end{remark}

  \begin{remark}   \label{rk:ph3b}
\emph{The one-bit consensus  from \cite{BGP:92,CW:92} ensures that: (a) every honest $\Pro$  eventually outputs a consensus value; (b) all  honest $\Pros$ should have the same consensus output; (c) if  all honest $\Pros$ have the same vote then the consensus should be the same as the vote of the honest $\Pros$. 
Since we run the  one-bit consensus  from \cite{BGP:92,CW:92}, the total communication complexity  for the last step of Phase~3, denoted by $\Bit_5$,  is $\Bit_5  =O( n t)$ bits,    
while the round complexity of this step is $O(t)$ rounds, which dominates the total round complexity of the proposed protocol.}     
\end{remark}
 
 \begin{remark}
\emph{The goal of the first two steps of  Phase~3 is to mask the remaining errors, if any, from  honest  $\Pros$ whose initial messages don't  match the majority of  other $\Pros$' initial messages, but could not be detected out in the previous phases. As it will be shown in the protocol analysis in the next section,  these two steps, together with the  steps in the previous phases,  guarantee that at the end of Phase~3 there exists \emph{at most}  1  group of honest $\Pros$, where the honest $\Pros$ within this group  have the same  \emph{non-empty} updated  message, and  the  honest $\Pros$ outside this group  have  the same  \emph{empty} updated  message (i.e., equal to the default value $\phi$).}  
\end{remark}

\subsection{Phase 4: identify trusted information and make consensus    \label{sec:Ph4}}

This phase is taken place only when one-bit consensus of the votes $\{\Vr_1, \Vr_2, \cdots, \Vr_n\}$ is $1$.     This phase has four steps. 
     
 \emph{1) Update information with majority rule:} $\PRO$~$i$, $i \in \Ss_0$, updates the value of $y_i^{(i)}$ as 
 \begin{align}
y_i^{(i)}    \leftarrow  \text{Majority}( \{y_i^{(j)}:   j \in \Ss_1\})    \label{eq:yiph4} 
 \end{align}
 where the symbols of $y_i^{(j)}$ were received in Phase~1. $\text{Majority}(\bullet)$ is a function that returns the most frequent value in the list, based on majority rule. For example, $\text{Majority}(1,2,2) = 2$.

\emph{2) Broadcast updated information:}  $\PRO$~$i$,  $i \in \Ss_0$,  sends the updated value of $y_i^{(i)}$ to $\PRO$~$j$, $\forall j \in\Ss_0, j\neq i$.

\emph{3)  Decode the message:} $\PRO$~$i$, $i \in \Ss_0$, decodes its message using the  observations $\{y_1^{(1)}, y_2^{(2)}, \cdots, y_n^{(n)}\}$, where  $\{y_j^{(j)}: j \in \Ss_0\}$  were updated and received in this phase and  $\{y_j^{(j)}: j \in \Ss_1\}$ were received in the Phase~1.  The value of $\Me^{(i)}$  is updated as the decoded  message at $\PRO$~$i$, $i \in \Ss_0$. For $\PRO$~$i$, $i \in \Ss_1$, it keeps the original value of $\Me^{(i)}$.

\emph{4) Stop:} $\PRO$~$i$,  $i \in [1:n] $,  outputs consensus   as the message $\Me^{(i)}$   and stops.

 \begin{figure}[t!]
\centering
\includegraphics[width=15cm]{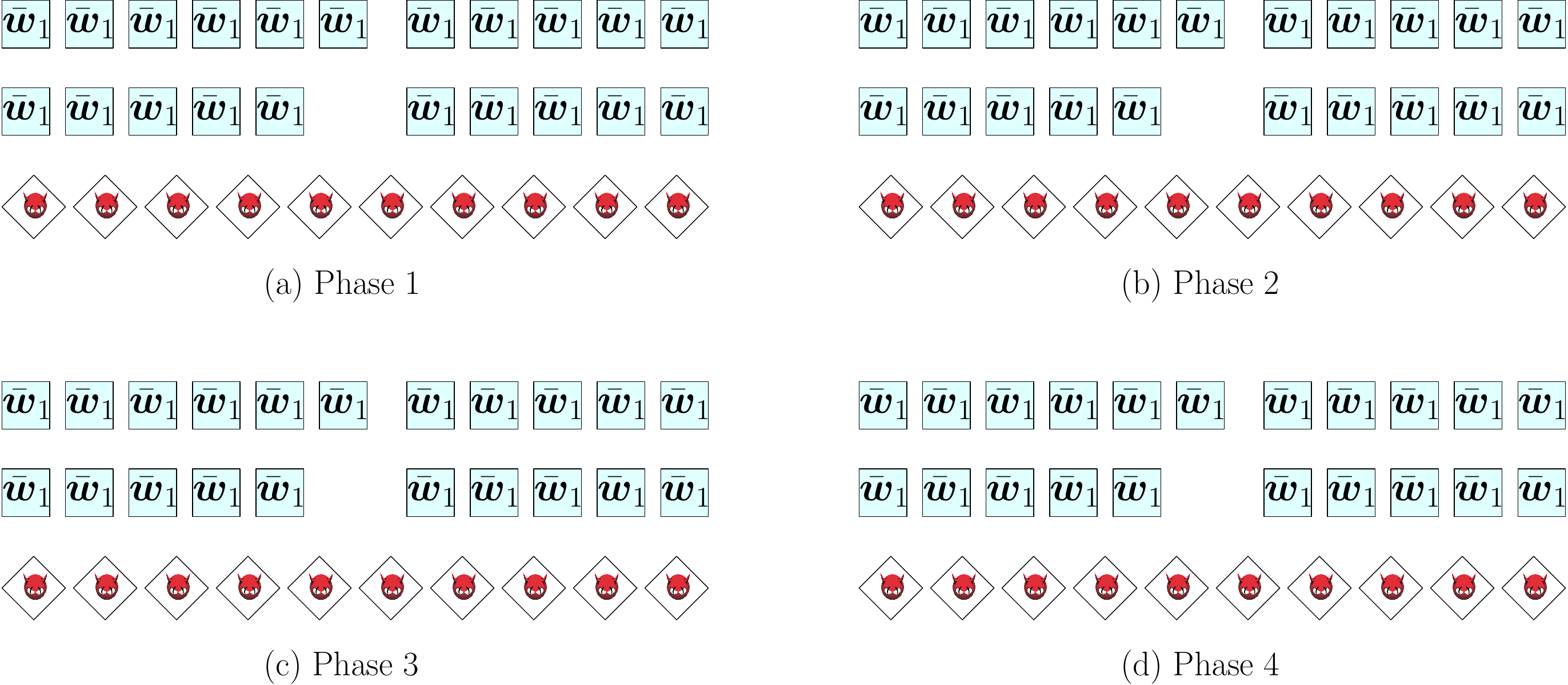}
 \caption{Illustration of  COOL addressing the issue described in Section~\ref{sec:vvc}, with $(t=10, n=31)$. The light-cyan  square nodes  refer to   the honest $\Pros$, while the rest nodes refer to the dishonest processors. In this example, all honest $\Pros$  have the same initial message as  $ \Meg_{1}$.  In each phase, the value inside each honest node denotes the value of its updated message.  In each phase  all honest  $\Pros$  output the same updated message as $ \Meg_{1}$, no matter what information is sent from the dishonest $\Pros$. Eventually all honest  $\Pros$  make the same consensus output, which is validated.
} 
\label{fig:coolexv}
\end{figure}

  \begin{remark}  \label{rk:ph4a}
   \emph{ Since $y_i^{(i)}$ has only $\cb$ bits, the total communication complexity  for the second step of Phase~4, denoted by $\Bit_6$,  is upper bounded by $\Bit_6 \leq  \cb n (n-1)$ bits.} 
\end{remark}

  \begin{remark}
  \emph{The idea of Phase~4 is to identify ``trusted'' information from the set of honest $\Pros$ whose initial messages  match the majority of  other $\Pros$' initial messages. In this way, the set of honest  $\Pros$ whose initial messages don't  match the majority of  other $\Pros$' initial messages could calibrate and update their information. 
 In some scenarios, the protocol could terminate at Phase~3, depending on the information sent from dishonest $\Pros$.  In any scenario,  all honest  $\Pros$ eventually make the same consensus output.  For the example in Fig.~\ref{fig:coolex},  since  the size of $\Ac_{1}$ is bigger than the size of $\Fc$, it guarantees that  Group $\Ac_{2}$ could calibrate and update their information successfully in the first step of Phase~4, eventually making the same consensus output as Group $\Ac_{1}$.}
\end{remark}

  \begin{remark}
  \emph{Fig.~\ref{fig:coolexv} depicts another example of COOL,  addressing the issue described in Section~\ref{sec:vvc}. 
 In this example, all honest $\Pros$  have the same initial message as  $ \Meg_{1}$.  In each phase  all honest  $\Pros$  output the same  message as $ \Meg_{1}$, no matter what information sent by the dishonest $\Pros$. All honest  $\Pros$  eventually make the consistent and validated consensus outputs. 
 }
   \end{remark}

\subsection{Provable performance of COOL}

For the proposed COOL, the provable performance is summarized in the following lemmas.

\begin{lemma}    \label{lm:comr}
 COOL achieves the consensus on an $\Lh$-bit message with the resilience of  $n\geq  3t+1$,  the round complexity  of $O(t)$ rounds, and  the communication complexity of  $O(\max\{n\Lh, n t \log t \})$  bits. 
\end{lemma}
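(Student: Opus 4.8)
The proof splits naturally into three independent accounting tasks: resilience, round complexity, and communication complexity. The resilience claim $n \geq 3t+1$ is not really proven here — it is the hypothesis under which COOL is designed and analyzed (recall the protocol description assumes $t \leq \frac{n-1}{3}$), so the real content of this lemma is the round and communication bounds; the correctness (termination/consistency/validity) under $n\geq 3t+1$ is deferred to the separate analysis lemmas. I would state this explicitly at the outset, then dispatch the two complexity bounds by summing the per-step costs already tallied in Remarks~\ref{rk:ph1a}, \ref{rk:ph2a}, \ref{rk:ph3a}, \ref{rk:ph3b}, \ref{rk:ph4a} and the Phase~1/Phase~3 discussions.

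First I would handle the \textbf{communication complexity}. The plan is to collect the six contributions: $\Bit_1 = 2cn(n-1)$ (Phase~1, exchanging coded symbol pairs), $\Bit_2 = n(n-1)$ (Phase~1 success indicators), $\Bit_3 \leq n(n-1)$ (Phase~2), $\Bit_4 \leq n(n-1)$ (Phase~3 step 2), $\Bit_5 = O(nt)$ (Phase~3 one-bit consensus, from \cite{BGP:92,CW:92}), and $\Bit_6 \leq cn(n-1)$ (Phase~4 broadcast of updated symbols). Summing, $\BT = \sum_{m=1}^{6}\Bit_m = O(cn^2) + O(nt)$. The key step is substituting the design choice $c = \lceil \max\{\ell,(t/5+1)\log(n+1)\}/k\rceil$ with $k = \lfloor t/5\rfloor + 1 = \Theta(t)$ from \eqref{eq:qdef}: this gives $ck = O(\max\{\ell, t\log n\})$, hence (using $k=\Theta(t)$ and $n=\Theta(t)$ in the regime under consideration) $cn^2 = O(n \cdot \frac{ck \cdot n}{k}) = O(n\max\{\ell, t\log n\}) = O(\max\{n\ell, n t\log t\})$, where $\log n = \Theta(\log t)$ since $t=\Omega(n)$. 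The $O(nt)$ term from $\Bit_5$ is absorbed into $O(nt\log t)$. This yields $\BT = O(\max\{n\ell, nt\log t\})$.

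Next, the \textbf{round complexity}: Phases~1, 2, and the first four steps of Phase~3 each consist of a constant number of synchronous exchange rounds, and Phase~4 likewise uses $O(1)$ rounds. The only super-constant contribution is step~5 of Phase~3, the one-bit Byzantine consensus on $\{\Vr_1,\dots,\Vr_n\}$, which by \cite{BGP:92,CW:92} terminates in $O(t)$ rounds for $t<n/3$ (as recorded in Remark~\ref{rk:ph3b}). Hence the total is $O(t) + O(1) = O(t)$ rounds.

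The \textbf{main obstacle} is not any single estimate — all the per-step bounds are already in hand — but rather being careful with the regime assumptions when simplifying. The lemma as stated makes no restriction $t=\Omega(n)$, yet the clean bound $cn^2 = O(\max\{n\ell, nt\log t\})$ relies on $n = O(t)$ (so that $n^2/k = O(n)$) and on $\log n = \Theta(\log t)$ (so that $\log(n+1)$ in $c$ becomes $\log t$). I would therefore either (i) restrict attention, as the protocol description does, to $t=\Omega(n)$ and note that the small-$t$ case is handled by the separate discussion promised after \eqref{eq:qdef}, or (ii) carry the bound through as $O(n\ell + n^2\log n + nt)$ and then observe this equals $O(\max\{n\ell, nt\log t\})$ precisely in the stated regime. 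Getting this bookkeeping exactly right, and making sure the $\Bit_5 = O(nt)$ term and the zero-padding overhead (when $\ell < kc$) do not break the bound, is where the care is needed; everything else is a routine sum.
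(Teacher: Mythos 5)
Your accounting for the regime $t=\Omega(n)$ matches the paper's: sum $\Bit_1,\dots,\Bit_6$, substitute $k=\lfloor t/5\rfloor+1$ and $\cb=\lceil \max\{\ell,(t/5+1)\log(n+1)\}/k\rceil$, and obtain $O(\max\{\ell n^2/t,\,n^2\log n\})=O(\max\{n\ell,\,nt\log t\})$; the round complexity argument (everything is $O(1)$ rounds except the one-bit consensus, which is $O(t)$) is also the paper's. However, there is a genuine gap in how you treat small $t$. The lemma is unconditional in $t\le (n-1)/3$, and neither of your two escape routes closes it: option (ii) is simply false outside the regime $t=\Omega(n)$ (e.g.\ for $t=O(1)$ and $\ell=O(1)$, running COOL on all $n$ processors costs $\Omega(n^2\log n)$ bits, since every exchanged symbol has $\cb\ge\log(n+1)$ bits, which is not $O(\max\{n\ell,nt\log t\})$), and option (i) defers to "the separate discussion promised after \eqref{eq:qdef}" --- but that discussion \emph{is} part of the paper's proof of this very lemma, so you are deferring the missing piece rather than supplying it.

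Concretely, the paper handles small $t$ by a subcommittee construction that you do not mention: select $n'\defeq 3t+1$ processors, run COOL among them with $n$ replaced by $n'$ and $\cb$ replaced by $\cb'=\lceil \max\{\ell,(t/5+1)\log(n'+1)\}/k\rceil$, and afterwards have each selected processor send a single $\cb'$-bit coded symbol $\hv_i^\T\Meg$ of the agreed message to each of the remaining $n-n'$ processors, who decode it via the $(n',k)$ code (tolerating the at most $t$ corrupted symbols among the $n'$ received). The total cost is then $O(\cb' t^2 + t^2 + t(n-t)\cb') = O(\max\{n\ell,\,nt\log t\})$ bits, and the round complexity stays $O(t)$. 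Without this (or an equivalent) construction, the claimed communication bound does not follow for small $t$, so your proof as written establishes the lemma only under the additional hypothesis $t=\Omega(n)$.
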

\begin{proof}
The proposed COOL achieves the consensus on an $\Lh$-bit message as long as   $n\geq  3t+1$ (see Lemma~\ref{lm:ef}). 
 For the proposed COOL, the total communication complexity, denoted by $\BT$, is computed as 
   \begin{align*}
\BT  &= \sum_{i=1}^6 \Bit_i \\
&= O( \cb n (n-1) + n^2 )  \\
&= O\Bigl( \Bigl \lceil \frac{ \max\{ \ell, \ (t/5 +1) \cdot \log (n+1) \} }{ \bigl \lfloor   \frac{ t  }{5 } \bigr\rfloor    +1} \Bigr\rceil \cdot  n (n-1) + n^2 \Bigr)  \\
&= O(  \max\{ \ell/t,  \log n \}  \cdot n (n-1) + n^2 )   \\
& =O(  \max\{ \ell n^2/t ,  n^2 \log n \}  )  \quad \text{bits}  
 \end{align*}
where  $ \Bit_1,  \Bit_2, \cdots,  \Bit_6$ are expressed in Remarks~\ref{rk:ph1a}, \ref{rk:ph2a}, \ref{rk:ph3a}, \ref{rk:ph3b} and \ref{rk:ph4a}, and  the parameters $\cb$ and   $k$ are defined in  \eqref{eq:qdef}, i.e., $k =    \bigl \lfloor   \frac{ t  }{5 } \bigr\rfloor    +1$ and  $\cb =   \bigl \lceil \frac{ \max\{ \ell, \ (t/5 +1) \cdot \log (n+1) \} }{k} \bigr\rceil$.
In the description of the proposed protocol,  $t$ is considered such that $t \leq (n-1)/3$ and $t=\Omega(n)$. In this case, the total communication complexity computed as above can be rewritten in the form of  
\[\BT =  O(  \max\{ \ell n ,  n t \log t \}  ) \quad \text{bits}.\]  

If $t$ is relatively small, we can simply randomly select $n'\defeq 3t+1$ $\Pros$ for running the consensus as described in the proposed protocol (just replacing $n$ with $n'$, and replacing $\cb$ with $\cb'\defeq    \lceil \frac{ \max\{ \ell, \ (t/5 +1) \cdot \log (n'+1) \} }{k} \rceil$). After the consensus, those selected $n'$ $\Pros$ will send the coded information of the agreed message  to the rest of the $\Pros$, in a way that each of the selected $n'$ $\Pros$ sends a coded message with $\cb'$ bits (like in Phase~1). 
The proposed protocol guarantees that the consensus among the selected $n'$ $\Pros$ satisfies termination, consistency and validity conditions. After the consensus, the $n'$ $\Pros$ uses $(n', k)$ error correction code as described in Phase~1 to send the agreed message to each of the rest of the $\Pros$, where $k$ keeps the same form as in \eqref{eq:qdef} , i.e., $k =    \bigl \lfloor   \frac{ t  }{5 } \bigr\rfloor    +1$.  In this communication, each selected $\Pro$ just sends $\cb'$ bits of information coded from the agreed message, that is, $ \hv_i^\T   \Meg$, where  $i$ belongs to the set of indices of the selected $n'$ $\Pros$, and $\Meg$ denotes the agreed message.  The communication complexity in this step is $n' (n-n')\cb'$ bits. 
In this scenario, the total communication complexity is 
   \begin{align*}
 \BT &= \sum_{i=1}^6 \Bit_i' + n' (n-n')\cb' \\
  &= O( \cb' n' (n'-1) + {n'}^2 )  + n' (n-n')\cb'\\
    &= O( \cb' t^2 + t^2 )  + O(  t (n-t)\cb' )\\
 &= O( \cb' t^2 + t^2 +   t (n-t)\cb' )  \\
 &= O( \cb' nt  + t^2  )   \\
 & = O(  \max\{ \ell/t,  \log n' \}  n t + t^2 )   \\
 &= O(  \max\{ \ell n ,  nt \log n' \}  )        \\
  &= O(  \max\{ \ell n ,  nt \log t \}  )  \  \text{bits}
  \end{align*}
  where $\Bit_i'$ has the same form as $\Bit_i$ (see  Remarks~\ref{rk:ph1a}, \ref{rk:ph2a}, \ref{rk:ph3a}, \ref{rk:ph3b} and \ref{rk:ph4a}) but in the expression of $\Bit_i'$,  the parameter $n$ is replaced with $n'$ and  $\cb$ is replaced with $\cb'$, for $i\in \{1,2,\cdots, 6\}$. Recall that  $n'= 3t+1$, $\cb'=    \lceil \frac{ \max\{ \ell, \ (t/5 +1) \cdot \log (n'+1) \} }{k} \rceil$, and $k =    \bigl \lfloor   \frac{ t  }{5 } \bigr\rfloor    +1$.

 Thus, by combining the above two cases, it is concluded that the    total communication complexity of COOL is 
\[\BT = O(  \max\{ \ell n ,  nt \log t \}  )  \quad \text{bits}. \]

For the proposed protocol, the round complexity is dominated by the round complexity of the one-bit consensus in Phase~3, which is $O(t)$ rounds. Therefore, the round complexity of the proposed protocol is  $O(t)$ rounds.
\end{proof}

        \vspace{.1 in}
        
\begin{lemma}    \label{lm:ef}
For $n\geq  3t+1$, the proposed COOL is an   error-free  BA protocol, i.e., it satisfies the  termination, consistency and validity conditions in all executions. 
\end{lemma}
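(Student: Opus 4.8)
The plan is to prove the three defining properties of a Byzantine‑agreement protocol — \emph{termination}, \emph{validity}, and \emph{consistency} — separately, assuming $n\ge 3t+1$ throughout and using the choice $k=\lfloor t/5\rfloor+1$ from \eqref{eq:qdef}. Two facts will be used repeatedly. (i) Honest $\Pros$ report their success indicators $\Ry_i$ truthfully and only ever lower them from $1$ to $0$, so although the sets $\Ss_1,\Ss_0$ of different $\Pros$ disagree, they disagree only on dishonest indices; the honest part of $\Ss_1$ (resp.\ $\Ss_0$) is the common set $H_1$ (resp.\ $H_0$) of honest $\Pros$ currently holding indicator $1$ (resp.\ $0$), the same for every honest $\Pro$. (ii) For honest $i,j$ the test \eqref{eq:lkindicator} is symmetric, $\Lk_i(j)=\Lk_j(i)$, and equals $1$ iff $\hv_i^\T(\Me_i-\Me_j)=\hv_j^\T(\Me_i-\Me_j)=0$; moreover, since any $k$ of $\hv_1,\dots,\hv_n$ form an invertible Vandermonde system, for a \emph{fixed} nonzero $x\in GF(2^{\cb})^k$ at most $k-1\le t/5$ of the vectors $\hv_p$ satisfy $\hv_p^\T x=0$. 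Termination is then immediate: every phase except the one‑bit consensus in the last step of Phase~3 runs for a fixed number of synchronous rounds, that consensus terminates in $O(t)$ rounds for every honest $\Pro$ by \cite{BGP:92,CW:92} ($t<n/3$), and no honest $\Pro$ reaches an undefined state because whenever Phase~4 is executed $\Ss_1$ contains at least $t+1$ honest indices (established below), so $\text{Majority}(\cdot)$ and the decoding are well defined; the small‑$t$ variant of Lemma~\ref{lm:comr} inherits this by running the same argument on the $n'=3t+1$ selected $\Pros$ and noting that disseminating the agreed codeword does not change any output.

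For \emph{validity}, suppose all honest $\Pros$ hold the same initial message $\Meg$. In Phase~1 every pair of honest $i,j$ has $\Lk_i(j)=1$, so $\sum_j\Lk_i(j)\ge n-t$ and $\Ry_i=1$ for every honest $i$; consequently $\Ss_0$ never acquires an honest index, so the masking of Phases~2 and 3 deletes only dishonest indices, each honest $\Pro$ keeps $\sum_j\Lk_i(j)\ge n-t$, hence $\Ry_i=1$ and $\Me^{(i)}=\Meg$, throughout. Every honest $\Pro$ therefore votes $\Vr_i=1$, the one‑bit consensus on the votes returns $1$, all honest $\Pros$ enter Phase~4 with $\Me^{(i)}=\Meg$, the first three steps of Phase~4 are vacuous for them (no honest index in $\Ss_0$), and each outputs $\Meg$.

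\emph{Consistency} is the heart of the matter. The central structural claim — which I would isolate as a separate lemma — is that at the end of Phase~3 the honest survivor set $H_1$ is \emph{monochromatic}: all honest $\Pros$ in $H_1$ share one initial message $\Meg^\star$, while every honest $\Pro$ outside $H_1$ holds $\Me^{(i)}=\phi$. To prove it I would combine facts (i) and (ii) with the monotonicity of the indicators and the two extra masking rounds: a surviving honest $\PRO$~$i$ needs $\sum_j\Lk_i(j)\ge n-t$, hence at least $n-2t\ge t+1$ honest ``supporters'' $j$ with $\Lk_i(j)=1$, and after the masking of Phases~2 and 3 every such supporter must lie in the honest survivor set of the preceding phase; fact (ii), applied to the fixed difference $x=v-v'$, limits to $t/5$ the honest $\Pros$ that can simultaneously support two survivors holding distinct messages $v\ne v'$, and a count carried out over the three masking rounds — playing them against the adversary's freedom to manufacture the coincidences of \eqref{eq:equal12} — shows that two such survivors cannot both collect their required $t+1$ honest supporters. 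Making this three‑phase count tight (likely by an induction across the masking rounds with an invariant controlling cross‑group matches, rather than a naive pairwise‑overlap estimate) is the step I expect to be the main obstacle; the executions of Figs.~\ref{fig:coolex}--\ref{fig:coolexv} are the guiding special cases.

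Granting the structural lemma, the rest is routine. If the one‑bit consensus on the votes returns $0$, every honest $\Pro$ outputs $\phi$ and stops, so consistency is immediate. If it returns $1$, then by the validity of the one‑bit consensus at least one honest $\Pro$ voted $1$, hence saw at least $2t+1$ indicators equal to $1$; at most $t$ of these come from dishonest $\Pros$, so $|H_1|\ge t+1$ and $H_1$ is nonempty with common message $\Meg^\star$. In Phase~4, each honest $\PRO$~$j\in\Ss_0$ resets $y_j^{(j)}$ to $\text{Majority}(\{y_j^{(p)}:p\in\Ss_1\})$; since $\Ss_1$ contains the $\ge t+1$ honest indices of $H_1$, all of which sent $y_j^{(p)}=\hv_j^\T\Meg^\star$ in Phase~1, against at most $t$ dishonest ones, the majority is exactly the correct $j$‑th codeword symbol $\hv_j^\T\Meg^\star$. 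Hence, in the decoding of step~3, for any honest $\PRO$~$i\in\Ss_0$ the $\ge n-t$ symbols $\{y_j^{(j)}\}$ indexed by honest $\Pros$ are the correct codeword symbols of $\Meg^\star$, leaving at most $t$ erroneous symbols; since $\lfloor(n-k)/2\rfloor\ge\lfloor 7t/5\rfloor\ge t$, the Reed--Solomon decoder returns $\Meg^\star$. Honest $\Pros$ in $\Ss_1$ keep $\Me^{(i)}=\Meg^\star$, so all honest $\Pros$ output $\Meg^\star$, which proves consistency; in the all‑equal case $\Meg^\star=\Meg$, reconfirming validity.
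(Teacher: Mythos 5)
Your treatment of termination, validity, and the Phase-4 calibration/decoding step is sound and essentially coincides with the paper's Lemmas~\ref{lm:terminate}, \ref{lm:p4samemsih}, and \ref{lm:p4samem}: the majority argument over $\Ss_1$ (which, from an honest view, contains the $\geq t+1$ honest survivors plus at most $t$ dishonest indices) and the error-correction budget $\lfloor (n-k)/2\rfloor \geq t$ are exactly the paper's reasoning. However, the proof has a genuine gap precisely where you flag ``the main obstacle'': the monochromaticity of the honest survivor set at the end of Phase~3 is the paper's Lemma~\ref{lm:ph32group}, which the paper itself calls the most important step, and your proposal does not prove it --- it only gestures at ``a count carried out over the three masking rounds.'' The pairwise fact you plan to lean on (at most $k-1\le t/5$ honest $\Pros$ can have $\hv_p^\T(v-v')=0$ for a fixed pair $v\neq v'$) is Lemma~\ref{lm:sizeboundMatrix}, but by itself it does not force two surviving honest $\Pros$ with distinct messages into a contradiction: a Phase-1 ``supporter'' $j$ of a survivor $i$ only certifies the two scalar equations $\hv_i^\T(\Me_i-\Me_j)=\hv_j^\T(\Me_i-\Me_j)=0$, so supporters need not share the survivor's message, the number of distinct honest messages can be as large as $n-t$, and a direct supporter count does not close.

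The paper's resolution is a two-layer counting argument that your sketch is missing. For any honest $i^{\star}$ whose indicator survives Phase~2, it takes the set $\Cc'$ of its $\geq n-2t-1$ honest Phase-1 supporters, notes each member of $\Cc'$ itself has $\geq n-2t$ honest matches, and via the edge-counting Lemma~\ref{lm:graph} shows at least $n-9t/4-1$ honest $\Pros$ are matched to $\geq k$ members of $\Cc'$; for such a $\Pro$ the $k$ constraints form an invertible Vandermonde system, pinning its initial message to $\Me_{i^{\star}}$ (Lemma~\ref{lm:sizem}). Hence every message class with a Phase-2 survivor has size $\geq n-9t/4$, so at most two classes survive Phase~2 (three copies would exceed $n-t$), and then a separate case analysis on the Phase-2/Phase-3 masking (Lemmas~\ref{lm:eta1231} and \ref{lm:eta3212}, using $|\Ac_{1,2}^{[1]}|+|\Ac_{2,1}^{[1]}|\le k-1$ and the dichotomy $|\Ac_{\Lin}^{[1]}|+|\Bc^{[1]}|\gtrless t+1$) shows the two masking rounds eliminate at least one of the two classes. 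Note also that your intended invariant (``two survivors cannot both collect $t+1$ honest supporters'') is not quite what happens: two classes genuinely can coexist through Phase~2; it is the Phase-3 masking, driven by the size bounds above, that zeroes out one of them. Without this quantitative machinery (or an equivalent substitute), your consistency argument rests on an unproved claim, so the proposal as written is incomplete.
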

\begin{proof}
The proof  of Lemma~\ref{lm:ef} borrows tools from coding theory, graph theory and linear algebra, with key steps  given below.           
\begin{itemize} 
\item   Given $n\geq 3t+1$, and given  the parameter $k$ designed  in \eqref{eq:qdef}, at the end of Phase~3   there exists \emph{at most}  1  group of honest $\Pros$,  where the honest $\Pros$ within this group  have   the same  non-empty updated  message (like  $\Ac_{1}$   in Fig.~\ref{fig:coolex}), and  the  honest $\Pros$ outside this group  have  the same  empty updated  message (like  $\Ac_{2}$   in Fig.~\ref{fig:coolex}).  
\item    Given $n\geq 3t+1$, when the protocol stops at Phase~3, it is happened only when the honest $\Pros$ have inconsistent initial messages, and eventually they have the same consensus output $\phi$. 
\item   Given $n\geq 3t+1$, when the protocol goes to Phase~4,  it is ensured that there exists \emph{exactly}  1  group of honest $\Pros$ having   the same  non-empty updated  message (like $\Ac_{1}$   in Fig.~\ref{fig:coolex}), and the size of this group is at least $t+1$.  This makes sure that the messages of the rest of the honest $\Pros$ (like $\Ac_{2}$   in Fig.~\ref{fig:coolex}) can be calibrated,   in the presence of inconsistent information sent from dishonest $\Pros$.  In this way,  all honest $\Pros$ eventually have the same consensus output.  
\item   Given $n\geq 3t+1$, if  all  honest $\Pros$ have  the same initial message, it is guaranteed that they will go to Phase~4 and  have  the  validated consensus output.  
  \end{itemize}

Specifically, the proof  of Lemma~\ref{lm:ef} will use the following lemmas whose proofs  are provided in Section~\ref{sec:COOLproof}.

\begin{lemma}    \label{lm:ph32group}
Given $n\geq 3t+1$, at the end of Phase~3 of COOL there exists \emph{at most}  1  group of honest $\Pros$,  
where the honest $\Pros$ within this group  have   the same  non-empty updated  message, and  the  honest $\Pros$ outside this group  have  the same  empty updated  message. 
\end{lemma}
\begin{proof}
See Section~\ref{sec:ph32group}. Proving this lemma is the most important step in the protocol analysis.  The proof borrows tools from coding theory, graph theory and linear algebra.  
\end{proof}

\begin{lemma}    \label{lm:p4samem}
Given $n\geq 3t+1$,  all  honest $\Pros$  reach  the same agreement in COOL. 
\end{lemma}
\begin{proof}
See Section~\ref{sec:p4samem}. 
\end{proof}

\begin{lemma}    \label{lm:p4samemsih}
Given $n\geq 3t+1$, if  all  honest $\Pros$ have  the same initial message, then at the end of COOL  all honest $\Pros$ agree on this  initial message. 
\end{lemma}
\begin{proof}
See Section~\ref{sec:p4samemsih}. 
\end{proof}

\begin{lemma}    \label{lm:terminate}
Given $n\geq 3t+1$,  all  honest $\Pros$  eventually  output  messages and terminate   in COOL. 
\end{lemma}
\begin{proof}
See Section~\ref{sec:terminate}. 
\end{proof}
        
From Lemmas~\ref{lm:p4samem}-\ref{lm:terminate} it reveals that, given $n\geq 3t+1$, the  consistency, validity and termination  conditions are all satisfied in COOL, which completes the proof of Lemma~\ref{lm:ef}.   The above   Lemma~\ref{lm:ph32group} is used for the proofs of Lemmas~\ref{lm:p4samem}-\ref{lm:terminate}.  
 \end{proof}

Note that Lemma~\ref{lm:ef} holds true without using any assumptions on the cryptographic technique such as signatures, hashing, authentication and secret sharing, which implies that COOL is a \emph{signature-free}  protocol.  Lemma~\ref{lm:ef} also holds true even when the adversary, who takes full control over the dishonest $\Pros$,  has unbounded computational power, which implies that COOL is an \emph{information-theoretic-secure}  protocol.   
The results of Lemma~\ref{lm:comr} and Lemma~\ref{lm:ef} serve as the achievability proof of Theorem~\ref{thm:BAs}. 

From the proposed COOL, it reveals that  \emph{coding} is an effective approach for solving  the  BA problem.  
In a nutshell, carefully-crafted error correction codes  provide an efficient way of exchanging ``compressed'' information among distributed nodes, while keeping the ability of detecting errors, masking errors, and making a consistent and validated agreement at honest distributed nodes.

                \vspace{.1 in}

\begin{table} [h!] 
\small
\begin{center}
\caption{Summary of some notations for the proposed COOL.} \label{tb:notation}
\begin{tabu}{cc}
\toprule
notation & interpretation \\
\midrule
$k$   & error correction code parameter  (see \eqref{eq:qdef}) \\
\midrule
$\cb$   & error correction code parameter  (see \eqref{eq:qdef}) \\
\midrule
$\Me_i$   & initial message at $\PRO$~$i$ \\
\midrule
$\Me^{(i)}$   & updated or decoded message at $\PRO$~$i$ \\
\midrule
$\hv_i$   &  an encoding vector defined as in \eqref{eq:zidefh} \\
\midrule
$y_j^{(i)}$   & a value encoded with $\hv_j$ and updated message $\Me^{(i)}$ of $\PRO$~$i$  (see \eqref{eq:yi11})\\
\midrule
$\Lk_i (j)$   & a  link indicator  for  $\PRO$~$i$ and $\PRO$~$j$    (see \eqref{eq:lkindicator})   \\
\midrule
$\Ry_i$   &  a success indicator    at  $\PRO$~$i$  (see \eqref{eq:sindicator})   \\
\midrule
$\Ss_1$   &    a set of indices of  success indicators $\{\Ry_1, \Ry_2, \cdots, \Ry_n\}$ whose values are ones  (see \eqref{eq:vr0})   \\
\midrule
$\Ss_0$   &    a set of indices of  success indicators $\{\Ry_1, \Ry_2, \cdots, \Ry_n\}$ whose values are zeros  (see \eqref{eq:vr0})   \\
\midrule
$\Vr_i$   &    a vote from  $\PRO$~$i$  for going to next phase or stopping in the current phase  (see \eqref{eq:vindicator})   \\
\midrule
\end{tabu}
\end{center}
\end{table}

\begin{algorithm}
\caption{\textbf{:  COOL protocol, code for $\PRO$~$i$, $i \in [1:n]$}}  \label{algm:BAs}
\begin{algorithmic}[1]

\State  Initially set   $\Me^{(i)} = \Me_{i}$.     \quad \quad\quad \quad\quad\quad\quad  \quad\quad\quad\quad \quad\quad\quad\quad \quad\quad\quad\quad  \quad \emph{// Set the initial value of $\Me^{(i)}$}

\State $\PRO$~$i$  encodes its message into $n$ symbols as  $y_j^{(i)}  \defeq    \hv_j^\T \Me_{i}, \  j \in  [1:n]$.  \quad\quad\quad\quad\quad  \emph{// Encoding}

  \vspace{5pt} 
  
\Statex {\bf \emph{Phase~1}}
 
\State  $\PRO$~$i$ sends $(y_j^{(i)}, y_i^{(i)})$ to $\PRO$~$j$,    $\forall j \in  [1:n],  j\neq i$.   \quad\quad\quad \quad \emph{// Exchange coded  symbols}

 \For     {$j =1:n$}      \quad  \quad \quad \quad\quad \quad \quad \quad \quad \quad\quad \quad \quad  \quad \quad \quad \quad \quad \quad \quad \quad  \quad \quad \quad \quad \emph{//Update link indicator}
    \If {$ ((y_i^{(j)}, y_j^{(j)}) = = (y_i^{(i)}, y_j^{(i)}))$ }     
    
            \State      $\PRO$~$i$  sets   $\Lk_i (j) =1$.
                  
         \Else 
         
           \State	$\PRO$~$i$ sets   $\Lk_i (j) =0$.

        \EndIf
   \EndFor

   \If {$(\sum_{j=1}^n \Lk_i (j)  >= n - t )$}        \quad  \quad \quad \quad \quad \quad \quad \quad \quad  \quad \quad \quad \quad \emph{//Update success indicator and message}
    
            \State      $\PRO$~$i$   sets its success indicator as  $\Ry_i =1$.
                  
         \Else 
         
           \State	 $\PRO$~$i$	sets $\Ry_i =0$ and $\Me^{(i)} =\phi$.

        \EndIf
 
  \State  $\PRO$~$i$ sends the value of  $\Ry_i$  to all other $\Pros$. \quad\quad\quad\quad\quad \quad  \emph{//  Exchange success indicators}
 
 \State $\PRO$~$i$ creates the sets  $ \Ss_p  =   \{ j:  \Ry_{j}=p,   j \in [1:n ]\}$, $p\in \{0,1\}$, based on received   $ \{\Ry_{j}\}_ {j=1}^{n}$.

  \vspace{5pt} 
\Statex  {\bf \emph{Phase~2}}

 \If {$(\Ry_i  ==1)$}  

    \State $\PRO$~$i$ sets   $\Lk_i (j) = 0,  \forall j \in \Ss_0$.\quad\quad\quad\quad   \quad\quad\quad\quad \quad\quad\quad \quad  \quad  \quad\quad \emph{//  Mask identified errors} 
    
              \If {$(\sum_{j=1}^n \Lk_i (j)  < n - t)$}           
         
           	 \State $\PRO$~$i$	sets $\Ry_i =0$ and $\Me^{(i)} =\phi$.       \ \quad \quad  \quad \quad \quad \quad \emph{//Update success indicator and message}
           	 \State  $\PRO$~$i$ sends the value of  $\Ry_i$  to all other $\Pros$.    \quad\quad \quad  \emph{//  Exchange success indicators}     
              \EndIf                  
\EndIf
 \State $\PRO$~$i$ updates   $\Ss_0$ and $\Ss_1$ based on the newly received success indicators.

   \vspace{5pt} 
\Statex  {\bf \emph{Phase~3}}

 \If {$(\Ry_i  ==1)$}  

    \State $\PRO$~$i$ sets   $\Lk_i (j) = 0,  \forall j \in \Ss_0$.\quad\quad\quad\quad   \quad\quad\quad\quad \quad\quad\quad \quad  \quad  \quad\quad \emph{//  Mask identified errors} 
    
              \If {$(\sum_{j=1}^n \Lk_i (j)  < n - t)$}            
         
           	 \State $\PRO$~$i$	sets $\Ry_i =0$ and $\Me^{(i)} =\phi$.     	      \ \quad \quad  \quad \quad \quad \quad \emph{//Update success indicator and message}
            	 \State  $\PRO$~$i$ sends the value of  $\Ry_i$  to all other $\Pros$.    \quad\quad \quad  \emph{//  Exchange success indicators}                     
              \EndIf                  
\EndIf
  \State $\PRO$~$i$ updates   $\Ss_0$ and $\Ss_1$ based on the newly received success indicators.

     \If {   $(\sum_{j=1}^{n}  \Ry_j  >= 2t+1 )$}   \quad \quad\quad\quad\quad\quad \quad\quad\quad\quad\quad\quad\quad\quad\quad\quad\quad\quad\quad\quad\quad \quad\quad\quad\quad\quad  \emph{// Vote}
    
       \State   $\PRO$~$i$ sets the binary vote as $\Vr_i =1$.
                  
                  \Else 
         
                \State   $\PRO$~$i$ sets the binary vote as $\Vr_i =0$.  	 
                 
                 \EndIf
      
  \State   $\PRO$~$i$    runs the one-bit consensus with all  other $\Pros$  on the $n$ votes $\{\Vr_1, \Vr_2, \cdots, \Vr_n\}$, by using the  one-bit consensus protocol from \cite{BGP:92,CW:92}.  \quad \quad\quad\quad  \quad \emph{// One-bit consensus  on the $n$ votes} 
 
 \If {(the consensus of the votes $\{\Vr_1, \Vr_2, \cdots, \Vr_n\}$ is $1$)}    
    
              \State    $\PRO$~$i$   goes to next phase.
                  
         \Else 
         
           \State  $\PRO$~$i$  sets $\Me^{(i)} =\phi$ and considers it as a final consensus   and stops here.               
           	              
        \EndIf
 
    \vspace{5pt} 
\Statex  {\bf \emph{Phase~4}}
    
 \If {($ \Ry_{i} ==0$)}   
    
\State $\PRO$~$i$ updates  $y_i^{(i)}     \leftarrow  \text{Majority}( \{y_i^{(j)}:   \Ry_{j} =1, j\in [1:n ]\})$.   \quad\quad   \emph{// Update with majority rule} 
 
 \State  $\PRO$~$i$  sends  updated  $y_i^{(i)} $ to $\PRO$~$j$, $\forall j \in\Ss_0, j\neq i$.     \quad  \emph{//  Broadcast updated information} 

\State $\PRO$~$i$ decodes   message    with  new observations $\{y_1^{(1)},   \cdots, y_n^{(n)}\}$ and updates $\Me^{(i)}$.        \emph{// Decode}  

 \EndIf

\State  $\PRO$~$i$  outputs consensus   as the updated message $\Me^{(i)}$   and stops.    \quad\quad  \quad\quad\quad\quad \quad\quad\quad\quad \emph{// Stop} 
  
\end{algorithmic}
\end{algorithm}

\section{Analysis on termination, consistency and validity properties of   COOL}     \label{sec:COOLproof}

In this section we will provide  analysis on the termination, consistency and validity properties of  the proposed COOL. Specifically, we will prove Lemmas~\ref{lm:ph32group}-\ref{lm:terminate}, by  borrowing tools from coding theory, graph theory and linear algebra.   
Before proving Lemmas~\ref{lm:ph32group}-\ref{lm:terminate}, we will first define network groups in Section~\ref{sec:groupdef}, and provide Lemmas~\ref{lm:sizeboundMatrix}-\ref{lm:eta1231} in  Sections~\ref{sec:sizeboundMatrix}-\ref{sec:eta1231}, which will be used in the analysis.

\subsection{Groups in network}     \label{sec:groupdef}

At first we will define some groups of $\Pros$ in the $n$-$\Pro$ network, which will help us analyze the proposed protocol. 
The group definition is based on the values of  the initial  messages and the values of the success indicators $\{\Ry_i\}_{i=1}^{n}$.  Note that  the value of $\Ry_i$ could be updated in Phase~1,  Phase~2 and Phase~3. For the ease of notation, let us use $\Ry_i^{[1]}$, $\Ry_i^{[2]}$ and $\Ry_i^{[3]}$ to denote the values of $\Ry_i$ updated in Phase~1, Phase~2  and Phase~3, respectively. Similarly, let us use $\Lk_i^{[1]} (j)$, $\Lk_i^{[2]} (j)$ and $\Lk_i^{[3]} (j)$ to denote the values of $\Lk_i (j)$ updated in Phase~1, Phase~2  and Phase~3, respectively. 
Let us first define Group $\Fc$  as  the  indices of  all of the  dishonest $\Pros$, for $| \Fc |   = t$.
Let us then define some groups of honest $\Pros$ as
  \begin{align}
 \Ac_{\Lin} \defeq &  \{  i:    \Me_i =  \Meg_{\Lin},  \  i \notin  \Fc  , \ i \in [1:n]\}, \quad   \Lin \in [1 : \gn]     \label{eq:Aell00}   \\
\Ac_{\Lin}^{[1]} \defeq&   \{  i:  \Ry_i^{[1]} =1, \Me_i =  \Meg_{\Lin},  \  i \notin  \Fc, \ i \in [1:n]\}, \quad   \Lin \in [1 : \gn^{[1]}]     \label{eq:Aell01}   \\ 
   \Ac_{\Lin}^{[2]} \defeq &  \{  i:  \Ry_i^{[2]} =1, \Me_i =  \Meg_{\Lin},  \  i \notin  \Fc, \ i \in [1:n]\}, \quad   \Lin \in [1 : \gn^{[2]}]     \label{eq:Aell03}  \\
      \Ac_{\Lin}^{[3]} \defeq &  \{  i:  \Ry_i^{[3]} =1, \Me_i =  \Meg_{\Lin},  \  i \notin  \Fc, \ i \in [1:n]\}, \quad   \Lin \in [1 : \gn^{[3]}]     \label{eq:Aell03new}    
 \end{align} 
 for  some different non-empty $\ell$-bit  values $\Meg_{1}, \Meg_{2}, \cdots, \Meg_{\gn}$ and some non-negative integers $\gn, \gn^{[1]},\gn^{[2]}, \gn^{[3]}$ such that $\gn^{[3]} \leq \gn^{[2]} \leq \gn^{[1]} \leq \gn$.  
The above definition implies that   Group~$\Ac_{\Lin}$ is a subset of honest $\Pros$ who have the same value of  initial  messages.  $\Ac_{\Lin}^{[1]}$ is a subset of $\Ac_{\Lin}$ who have the same non-empty value of updated  messages at the end of Phase~1.  Note that at the end of Phase~1, if the updated  message of  honest $\PRO~i$ is non-empty, then it implies that its updated message remains the same as the initial message and that $\Ry_i^{[1]} =1$.   
Similarly, $\Ac_{\Lin}^{[2]}$ is a subset of $\Ac_{\Lin}^{[1]}$ who have the same non-empty value of updated  messages at the end of Phase~2 for $\Lin \in [1 : \gn^{[2]}]$, while $\Ac_{\Lin}^{[3]}$ is a subset of $\Ac_{\Lin}^{[2]}$ who have the same non-empty value of updated  messages at the end of Phase~3 for $\Lin \in [1 : \gn^{[3]}]$.
In our setting, when  $1\leq \gn^{[3]} \leq \gn^{[2]} \leq \gn^{[1]} \leq \gn$, the sets $\Ac_{\Lin}, \Ac_{\Lin_1}^{[1]}, \Ac_{\Lin_2}^{[2]},  \Ac_{\Lin_3}^{[3]}$ defined in  \eqref{eq:Aell00}-\eqref{eq:Aell03} are all non-empty for  any $\Lin \in [1 : \gn], \Lin_1 \in [1 : \gn^{[1]}], \Lin_2 \in [1 : \gn^{[2]}], \Lin_3 \in [1 : \gn^{[3]}]$. 
One example of the $n$-$\Pro$ network is provided in Fig.~\ref{fig:nNetwork}.

 \begin{figure}
\centering
\includegraphics[width=7.5cm]{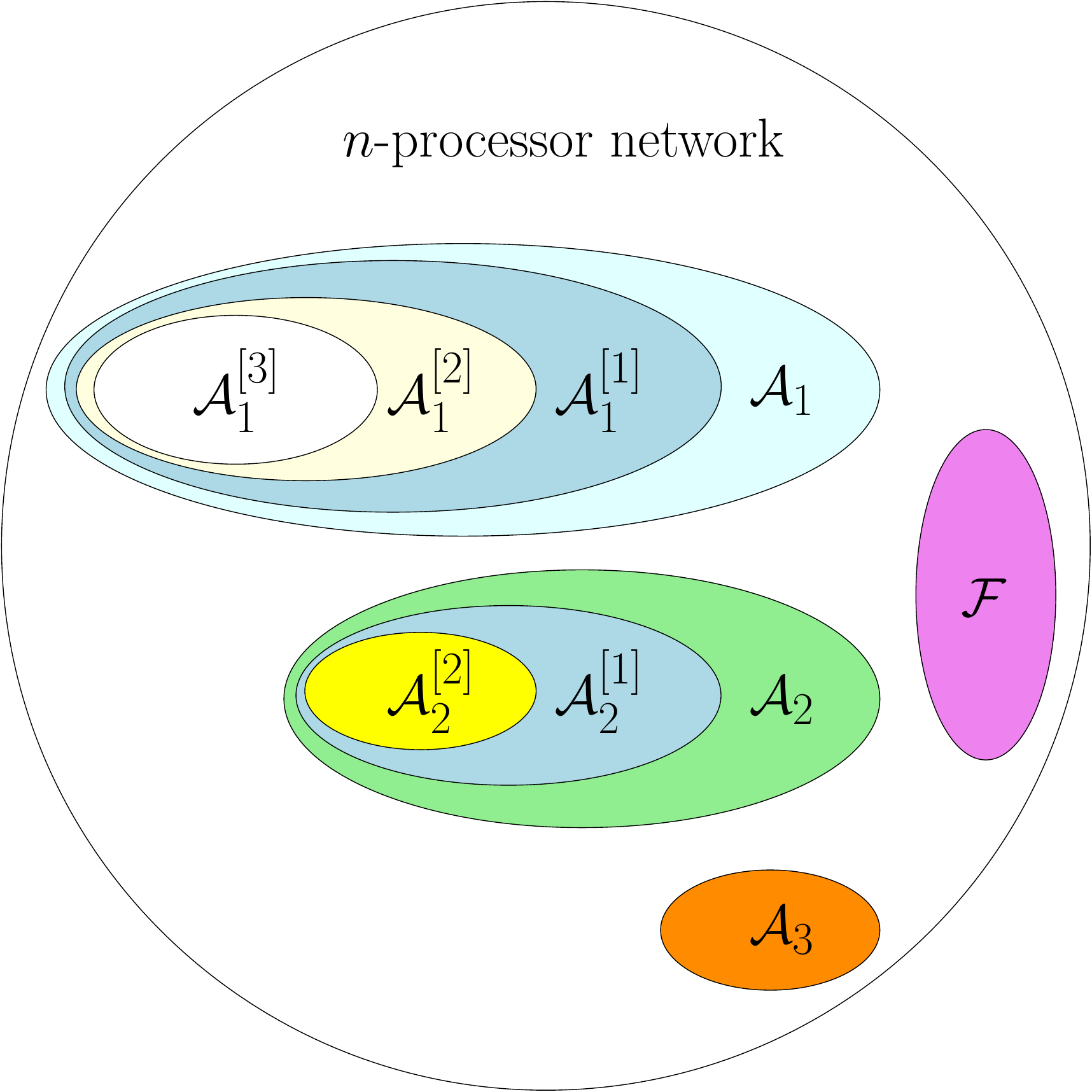}
\caption{One example of $n$-$\Pro$ network with $( \gn =3, \gn^{[1]} =\gn^{[2]} =2, \gn^{[3]} =1)$, where $\Ac_{1}^{[3]} \subseteq \Ac_{1}^{[2]} \subseteq \Ac_{1}^{[1]} \subseteq \Ac_{1}$ and  $\Ac_{2}^{[2]} \subseteq \Ac_{2}^{[1]} \subseteq\Ac_{2}$.} 
\label{fig:nNetwork} 
\end{figure}

 Based on our definition, it holds true that 
 \begin{align}
  \sum_{\Lin=1}^{\gn}  |   \Ac_{\Lin} |      +  |\Fc|   = n  .      \label{eq:sumn001}  
 \end{align}
Let us also define   $\Bc^{[p]}$  as 
  \begin{align}
\Bc^{[p]} \defeq  & \{  i:  \Ry_i^{[p]} =0, \  i \notin  \Fc, \ i \in [1:n] \}, \quad  p\in \{1,2, 3\}.    \label{eq:Bdef01} 
 \end{align} 
 Based on our definitions, it holds true that 
 \begin{align}
  \sum_{\Lin=1}^{\gn^{[p]}}  |   \Ac_{\Lin}^{[p]} |     +  |\Bc^{[p]}|  +  |\Fc|   = n , \quad  p\in \{1,2, 3\}.       \label{eq:sumn01}  
 \end{align} 
For  some $i \in \Ac_{\Lin}$,   the equality of  $\hv_i^\T  \Meg_{\Lin} = \hv_i^\T  \Meg_j$ might be satisfied for some $j$ and  $\Lin$.  
With this motivation, Group $\Ac_{\Lin}$  can be further divided into some (possibly overlapping) sub-groups  defined as 
    \begin{align}
 \Ac_{\Lin,j} \defeq   & \{  i:  \   i\in  \Ac_{\Lin},   \   \hv_i^\T  \Meg_{\Lin}  = \hv_i^\T  \Meg_j \} , \quad  j\neq \Lin ,   \  j, \Lin \in [1: \gn]   \label{eq:Alj}  \\     
 \Ac_{\Lin,\Lin} \defeq  &  \Ac_{\Lin}   \setminus  \{\cup_{j=1, j\neq \Lin}^{\gn}\Ac_{\Lin,j}\}   , \quad  \Lin \in [1: \gn] .  \label{eq:All}       
 \end{align} 
 Similarly, Group $\Ac_{\Lin}^{[p]}$  can be further divided into some   sub-groups  defined as
    \begin{align}
 \Ac_{\Lin,j}^{[p]} \defeq   & \{  i:  \   i\in  \Ac_{\Lin}^{[p]},   \   \hv_i^\T  \Meg_{\Lin}  = \hv_i^\T  \Meg_j \} , \quad  j\neq \Lin ,   \  j, \Lin  \in [1: \gn^{[p]}]   \label{eq:Alj11}  \\     
 \Ac_{\Lin,\Lin}^{[p]} \defeq  &  \Ac_{\Lin}^{[p]}   \setminus  \{\cup_{j=1, j\neq \Lin}^{\gn^{[p]}}\Ac_{\Lin,j}^{[p]}\}   , \quad  \Lin \in [1: \gn^{[p]}] .  \label{eq:All11}       
 \end{align}  
 for $p\in \{1,2, 3\}$. 
 
 \subsection{Lemma~\ref{lm:sizeboundMatrix} and its proof}  \label{sec:sizeboundMatrix}

In this sub-section we  provide Lemma~\ref{lm:sizeboundMatrix} that will be used later for the analysis of the proposed protocol. 
 
\begin{lemma}    \label{lm:sizeboundMatrix}
For  $\Ac_{\Lin,j}$ and  $\Ac_{\Lin,j}^{[1]}$ defined in \eqref{eq:Alj} and \eqref{eq:Alj11}, and for $\gn \geq \gn^{[1]} \geq 2$, the following inequalities  hold true 
    \begin{align}
  |\Ac_{\Lin,j}| + |\Ac_{j,\Lin}|  <  &  k,    \quad \forall   j \neq \Lin,  \  j, \Lin \in [1:\gn]    \label{eq: Aljb00}  \\
    |\Ac_{\Lin,j}^{[1]}| + |\Ac_{j,\Lin}^{[1]}|  <  &  k,  \quad \forall   j \neq \Lin,  \  j, \Lin \in [1:\gn^{[1]}]      \label{eq: Aljb01}  
 \end{align} 
 where $k$ is defined in \eqref{eq:qdef}.
  \end{lemma}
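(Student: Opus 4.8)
The plan is to prove both inequalities by the same Reed--Solomon/linear-algebra argument, since $\Ac_{\Lin,j}^{[1]} \subseteq \Ac_{\Lin,j}$ and the defining condition $\hv_i^\T\Meg_{\Lin} = \hv_i^\T\Meg_j$ is identical in both cases; it therefore suffices to prove \eqref{eq: Aljb00}, and \eqref{eq: Aljb01} follows because the Phase-1 sub-groups are subsets of the corresponding initial sub-groups. Fix $j \neq \Lin$ and suppose, for contradiction, that $|\Ac_{\Lin,j}| + |\Ac_{j,\Lin}| \geq k$. The key observation is that the difference vector $\bfd \defeq \Meg_{\Lin} - \Meg_j$ (viewed as a length-$k$ vector of symbols over $GF(2^\cb)$, after zero-padding) is nonzero because $\Meg_{\Lin} \neq \Meg_j$, yet by the definitions \eqref{eq:Alj} we have $\hv_i^\T \bfd = 0$ for every $i \in \Ac_{\Lin,j}$, and symmetrically $\hv_i^\T \bfd = \hv_i^\T(\Meg_j - \Meg_{\Lin}) = 0$ for every $i \in \Ac_{j,\Lin}$. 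So $\bfd$ lies in the null space of the matrix whose rows are $\{\hv_i^\T : i \in \Ac_{\Lin,j} \cup \Ac_{j,\Lin}\}$.

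First I would argue that $\Ac_{\Lin,j}$ and $\Ac_{j,\Lin}$ are disjoint: they are subsets of $\Ac_{\Lin}$ and $\Ac_j$ respectively, and since $\Meg_{\Lin} \neq \Meg_j$ these groups (defined in \eqref{eq:Aell00} via the initial message value) are disjoint. Hence the row set has exactly $|\Ac_{\Lin,j}| + |\Ac_{j,\Lin}| \geq k$ distinct indices $i \in [1:n]$. The crucial structural fact is that any $k$ of the vectors $\hv_i = [h_{i,1},\dots,h_{i,k}]^\T$ defined in \eqref{eq:zidefh} are linearly independent: these are (up to the fixed Lagrange normalization) evaluations of the degree-$<k$ polynomial basis at distinct points $i$, so the corresponding $k\times k$ matrix is a (generalized) Vandermonde matrix and is nonsingular --- this is exactly the property that makes the Reed--Solomon code in Section~\ref{sec:ecc} have minimum distance $n-k+1$. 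Therefore any $k$ distinct rows $\hv_i^\T$ span all of $GF(2^\cb)^k$, forcing the null space of the stacked matrix to be $\{\bfzero\}$. This contradicts $\bfd \neq \bfzero$, so we must have $|\Ac_{\Lin,j}| + |\Ac_{j,\Lin}| < k$, as claimed.

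For \eqref{eq: Aljb01}, I would simply note that $\Ac_{\Lin,j}^{[1]} = \{i \in \Ac_{\Lin}^{[1]} : \hv_i^\T\Meg_{\Lin} = \hv_i^\T\Meg_j\} \subseteq \{i \in \Ac_{\Lin} : \hv_i^\T\Meg_{\Lin} = \hv_i^\T\Meg_j\} = \Ac_{\Lin,j}$, since $\Ac_{\Lin}^{[1]} \subseteq \Ac_{\Lin}$ by \eqref{eq:Aell01} and the message labels $\Meg_\Lin$ used in the Phase-1 partition are a relabeling of those in the initial partition for $\Lin \in [1:\gn^{[1]}] \subseteq [1:\gn]$. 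Hence $|\Ac_{\Lin,j}^{[1]}| + |\Ac_{j,\Lin}^{[1]}| \leq |\Ac_{\Lin,j}| + |\Ac_{j,\Lin}| < k$.

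I expect the main obstacle to be the bookkeeping around the zero-padding and the symbol/bit conversion: one must be careful that $\Meg_{\Lin}$ and $\Meg_j$ are being compared as length-$k$ vectors over $GF(2^\cb)$ (after padding to $k\cb$ bits), that $\Meg_{\Lin}\neq\Meg_j$ as $\ell$-bit strings genuinely implies $\bfd\neq\bfzero$ as a padded vector, and that the encoding map $\bfx \mapsto (\hv_i^\T\bfx)_i$ in \eqref{eq:zidef}--\eqref{eq:yi11} is applied coordinate-wise over the field so that ``$\hv_i^\T\Meg_{\Lin} = \hv_i^\T\Meg_j$'' is equivalent to ``$\hv_i^\T\bfd = 0$ in $GF(2^\cb)$.'' Once this is set up cleanly, the linear-algebra core --- $k$ distinct Lagrange/Vandermonde rows are independent, so their common null space is trivial --- is short. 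A minor point to state explicitly is that the hypothesis $\gn \geq \gn^{[1]} \geq 2$ guarantees there exist at least two distinct message values, so the statement is non-vacuous, and that $k \geq 1$ by \eqref{eq:qdef} so ``$<k$'' is a meaningful bound.
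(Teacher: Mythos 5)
Your proposal is correct and follows essentially the same route as the paper's proof: the nonzero difference $\Meg_{\Lin}-\Meg_{j}$ is annihilated by the stacked rows $\hv_i^\T$, $i\in\Ac_{\Lin,j}\cup\Ac_{j,\Lin}$ (disjoint sets), and since any $k$ distinct rows generated as in \eqref{eq:zidefh} form a full-rank matrix, having $k$ or more such rows would force $\Meg_{\Lin}=\Meg_{j}$, a contradiction. Your only deviation is deriving \eqref{eq: Aljb01} from \eqref{eq: Aljb00} via the inclusion $\Ac_{\Lin,j}^{[1]}\subseteq\Ac_{\Lin,j}$, a harmless shortcut where the paper simply reruns the identical argument with the superscripted sets.
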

\begin{proof}
The proof  of Lemma~\ref{lm:sizeboundMatrix} borrows tool from linear algebra.  We will focus on the proof of \eqref{eq: Aljb00}, as the proof of \eqref{eq: Aljb01} is very similar.  
 The proof will use the fact that 
      \begin{align}
  \Meg_{\Lin}-  \Meg_j   \neq  \boldsymbol{0} , \quad \forall   j \neq \Lin,  \  j, \Lin \in [1:\gn]   \label{eq:messagediff}      
 \end{align}
 as well as the fact that  
     \begin{align}
  \hv_i^\T ( \Meg_{\Lin}   -  \Meg_j  ) = \boldsymbol{0}  \quad    \forall  i\in   \Ac_{\Lin,j} \cup \Ac_{j,\Lin}, \quad   j\neq \Lin,   \  \Lin, j \in [1: \gn]   \label{eq:zerodiff}      
 \end{align}
 which follows from the definition of  $\Ac_{\Lin,j}$   in \eqref{eq:Alj}.  
 Let us define   $\Hm_{\Lin, j} $  as   an  $|\Ac_{\Lin,j}| \times k$ matrix  such that 
    \begin{align}
\Hm_{\Lin, j} \defeq  \Bmatrix{ \hv_{i_1}^\T \\  \hv_{i_2}^\T \\ \vdots \\ \hv_{i_{|\Ac_{\Lin,j}|}}^\T }     \quad \text{for}\quad   i_{1}, i_{2}, \cdots,  i_{|\Ac_{\Lin,j}|}    \in  \Ac_{\Lin,j}  ,  \quad    i_{1} < i_{2} < \cdots < i_{|\Ac_{\Lin,j}|}      \label{eq:Hej}      
 \end{align}
 for   $j\neq \Lin,   \  \Lin, j \in [1: \gn]$. Note that $\Hm_{\Lin, j}$ is full rank, based on the definition of $\hv_i$ as in \eqref{eq:zidefh}.  
 This is because any matrix, whose rows are different and generated as in \eqref{eq:zidefh}, is full rank.  
 From \eqref{eq:zerodiff} and \eqref{eq:Hej}, it implies that the following equality holds true that 
 \begin{align}
   \Bmatrix{ \Hm_{\Lin, j} \\ \Hm_{j, \Lin}} ( \Meg_{\Lin}   -  \Meg_j  ) = \boldsymbol{0}    .  \label{eq:HX0lj}   
 \end{align}
 Note that $ \Bmatrix{ \Hm_{\Lin, j} \\ \Hm_{j, \Lin}}$ is  a  full rank matrix, based on the definition of $\hv_i$ as in \eqref{eq:zidefh} and the fact of  
 $\Ac_{\Lin,j}  \cap \Ac_{j,\Lin} = \phi$  for $ j\neq \Lin$.  The fact of $\Ac_{\Lin,j}  \cap \Ac_{j,\Lin} = \phi$ results from the identities of  $\Ac_{\Lin,j} \subseteq \Ac_{\Lin}$, $\Ac_{j,\Lin} \subseteq\Ac_{j}$ and  $\Ac_{j} \cap \Ac_{\Lin} = \phi$  for $ j\neq \Lin$.  This fact implies that every row of  $\Hm_{\Lin, j}$ is different from every row of $\Hm_{j, \Lin}$ for $ j\neq \Lin$, where the rows of those two matrices are generated as in \eqref{eq:zidefh}. 
 
Based on the definition in \eqref{eq:Hej}, the dimension of the matrix  $ \Bmatrix{ \Hm_{\Lin, j} \\ \Hm_{j, \Lin}}$ is  $(|\Ac_{\Lin,j}| + |\Ac_{j,\Lin}|)\times k$. 
For the equality of  \eqref{eq:HX0lj},   by combining the fact  that  $\Meg_{\Lin}   -  \Meg_j  \neq  \boldsymbol{0}$  (see \eqref{eq:messagediff}) and the fact that  the $(|\Ac_{\Lin,j}| + |\Ac_{j,\Lin}|)\times k$ matrix $ \Bmatrix{ \Hm_{\Lin, j} \\ \Hm_{j, \Lin}}$  is full rank, we can conclude that the following inequality must hold true 
 \begin{align}
|\Ac_{\Lin,j}| + |\Ac_{j,\Lin}|  < k ,   \quad  j\neq \Lin , \Lin, j \in [1: \gn]  \label{eq:m1m2}      
 \end{align}
 otherwise we have $\Meg_{\Lin}   -  \Meg_j = \boldsymbol{0}$, which contradicts with  the fact  in \eqref{eq:messagediff}. 
 At this point we complete the proof of \eqref{eq: Aljb00}, as well as the proof of \eqref{eq: Aljb01} by replacing  $\Ac_{\Lin,j}$ and $\Ac_{j,\Lin}$ with $\Ac_{\Lin,j}^{[1]}$ and $\Ac_{j,\Lin}^{[1]}$ respectively. 
 \end{proof}

 \subsection{Lemma~\ref{lm:graph} and its proof}  \label{sec:graph}

By using graph theory we  will derive a lemma that will be used later for the analysis of the proposed protocol. 
Let us consider a graph $G=(\Pc, \Ec)$, where $\Pc$ includes $n -t$  vertices,  $\Pc= [1: n-t]$,  and $\Ec$ is a set of edges. 
For this graph,  we consider a given vertex $i^{\star}$ for $i^{\star}\in \Pc$, and a set of vertices $\Cc$ for $\Cc \subseteq \Pc \setminus \{i^{\star}\}$ and  $|\Cc|  \geq n-2t -1$,  such that each vertex in $\Cc$ is connected with at least $n-2t$ edges and one of the edges is connected to vertex $i^{\star}$. In our setting, a loop connecting to itself is also counted as an edge. 
We use $E_{i,j}= 1$ (resp. $E_{i,j}= 0$) to indicate that there is an edge (resp. no edge) between vertex $i$ and vertex $j$, for $E_{i,j}= E_{j,i}, \forall i, j \in \Pc$.  
Mathematically, for the graph $G=(\Pc, \Ec)$ we considered,  there exists a set  $\Cc \subseteq \Pc \setminus \{i^{\star}\}$ such that the following conditions are satisfied: 
 \begin{align}
E_{i, i^{\star}}&= 1                          \quad     \forall i \in \Cc   \label{eq:graph01}  \\
\sum_{j\in \Pc}E_{i, j} &\geq   n-2t   \quad     \forall i \in \Cc  \label{eq:graph02}   \\
|\Cc| &   \geq n-2t -1     \label{eq:graph03} 
 \end{align}
for a given $i^{\star}\in \Pc= [1: n-t]$.

 \begin{figure}
\centering
\includegraphics[width=5cm]{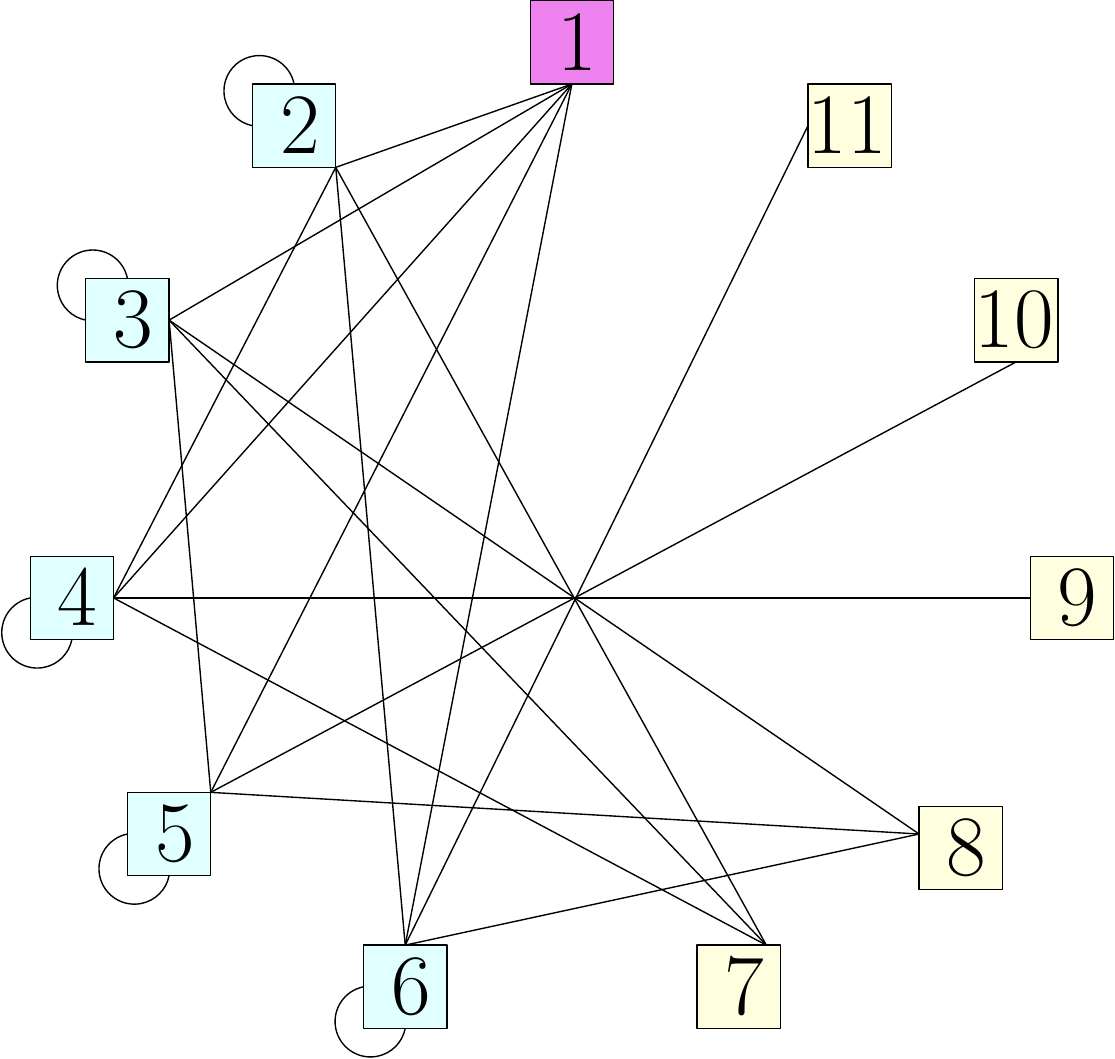}
\caption{One example of the  graph $G=(\Pc, \Ec)$ with $(t =5, n= 16)$, $\Pc= [1: 11]$, $i^{\star} =1$, $\Cc = \{2, 3, 4, 5, 6\}$, $k=2$ and $\Dc = \{2, 3, 4, 5, 6, 7, 8\}$.} 
\label{fig:graph}
\end{figure}

 \begin{figure}
\centering
\quad\quad \includegraphics[width=8cm]{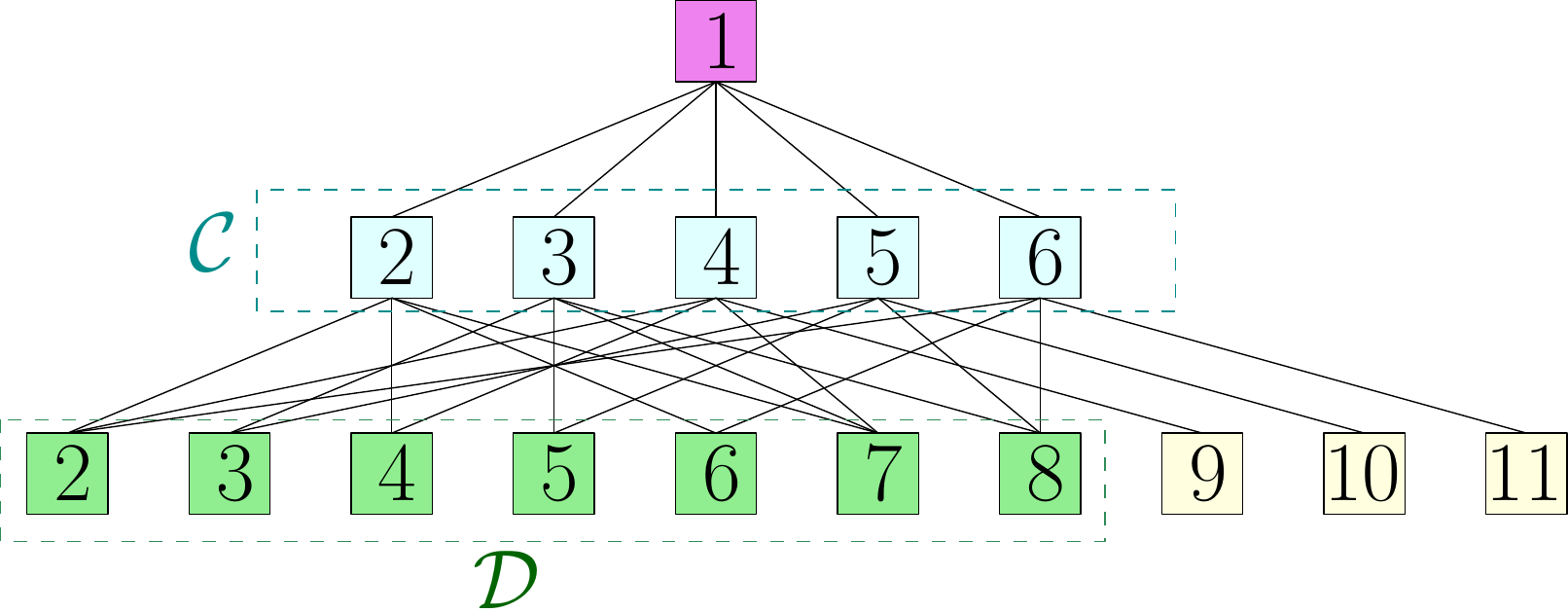}
\caption{A three-layer representation of the  graph $G=(\Pc, \Ec)$ appeared in Fig.~\ref{fig:graph},  for $(t =5, n= 16)$, $\Pc= [1: 11]$, $i^{\star} =1$, $\Cc = \{2, 3, 4, 5, 6\}$, $k=2$ and $\Dc = \{2, 3, 4, 5, 6, 7, 8\}$.} 
\label{fig:grapht5CD}
\end{figure}

For the graph  $G=(\Pc, \Ec)$ defined as above, let $\Dc \subseteq \Pc$ denote the set of vertices such that each vertex in $\Dc$ is connected with at least $k$ vertices in $\Cc$, that is,  
 \begin{align}
\Dc \defeq \Bigl\{i: \ \sum_{j\in \Cc }E_{i, j}  \geq  k , \   i \in  \Pc \setminus \{i^{\star}\} \Bigr\}  \label{eq:graphD01}   
 \end{align}
where $k$ is defined in \eqref{eq:qdef}. One example of the graph  $G=(\Pc, \Ec)$  is depicted in Fig.~\ref{fig:graph} and a  three-layer representation of this  graph is provided in Fig.~\ref{fig:grapht5CD}. For the  graph $G=(\Pc, \Ec)$ defined as above, the following lemma provides a result on bounding the size of $\Dc$.

\begin{lemma}    \label{lm:graph}
For any graph $G=(\Pc, \Ec)$ specified by \eqref{eq:graph01}-\eqref{eq:graph03} and for the set $\Dc \subseteq \Pc$ defined by \eqref{eq:graphD01}, and given $n\geq 3t+1$, it holds true that 
 \begin{align}
|\Dc| &\geq  n-9t/4-1 .    \label{eq:graphDr11}  
 \end{align}
  \end{lemma}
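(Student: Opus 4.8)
The plan is to prove Lemma~\ref{lm:graph} by a double-counting argument on the incidences between $\Cc$ and $\Pc\setminus\{i^\star\}$, exploiting the degree lower bound \eqref{eq:graph02}. First I would count the ordered pairs $(i,j)$ with $i\in\Cc$, $j\in\Pc$, and $E_{i,j}=1$: by \eqref{eq:graph02} this count is at least $|\Cc|(n-2t)$. Since every $i\in\Cc$ satisfies $E_{i,i^\star}=1$ by \eqref{eq:graph01}, discarding the endpoint $i^\star$ removes exactly $|\Cc|$ of these pairs, so $\sum_{i\in\Cc}\sum_{j\in\Pc\setminus\{i^\star\}}E_{i,j}\ \geq\ |\Cc|(n-2t-1)$.

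Next I would interchange the order of summation and read this as $\sum_{j\in\Pc\setminus\{i^\star\}} d_\Cc(j)\ \geq\ |\Cc|(n-2t-1)$, where $d_\Cc(j)\defeq\sum_{i\in\Cc}E_{i,j}$ counts the neighbours of $j$ inside $\Cc$. Then I split $\Pc\setminus\{i^\star\}$ according to membership in $\Dc$: for $j\in\Dc$ I use only the trivial bound $d_\Cc(j)\leq|\Cc|$, while for $j\in(\Pc\setminus\{i^\star\})\setminus\Dc$ the defining property \eqref{eq:graphD01} gives $d_\Cc(j)\leq k-1$. Since $|\Pc\setminus\{i^\star\}|=n-t-1$ and $\Dc\subseteq\Pc\setminus\{i^\star\}$, this yields
\[
|\Cc|(n-2t-1)\ \leq\ |\Dc|\,|\Cc|\ +\ \bigl(n-t-1-|\Dc|\bigr)(k-1).
\]

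Finally I would solve this inequality for $|\Dc|$. Setting $x\defeq n-2t-1-|\Dc|$, it rearranges to $|\Cc|\,x\leq (x+t)(k-1)$. If $x\leq 0$ we are done, since then $|\Dc|\geq n-2t-1\geq n-9t/4-1$; otherwise, using $|\Cc|\geq n-2t-1$, I obtain $(n-2t-k)\,x\leq t(k-1)$. Plugging in $k\leq t/5+1$ (so $k-1\leq t/5$) and $n\geq 3t+1$ gives $n-2t-k\geq 4t/5>0$ and hence $x\leq (t^2/5)/(4t/5)=t/4$, i.e.\ $|\Dc|=n-2t-1-x\geq n-9t/4-1$, as claimed (the degenerate case $t=0$ being immediate). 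The argument is essentially a self-contained counting estimate; the only points requiring care are the bookkeeping of loop edges (a loop at a vertex of $\Cc$ contributes to its own degree and to $d_\Cc$ consistently with the stated conventions) and the sign split on $x$, so I do not anticipate a genuine obstacle here.
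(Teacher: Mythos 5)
Your proof is correct, and it reaches exactly the same key inequality and final algebra as the paper: both arguments count the at least $|\Cc|(n-2t-1)$ edge incidences from $\Cc$ into $\Pc\setminus\{i^{\star}\}$ and observe that a vertex outside $\Dc$ can absorb at most $k-1$ of them while a vertex of $\Dc$ can absorb at most $|\Cc|$, leading to the bound $|\Dc|\geq (n-2t-1)-\tfrac{(k-1)t}{|\Cc|-(k-1)}$ and then to $n-9t/4-1$ via $k-1\leq t/5$ and $n\geq 3t+1$. The difference is in how that inequality is justified: the paper argues via an extremal configuration (Conditions (a) and (b), obtained by informally ``moving edges'' to a worst-case graph $\Dc^{*}$) and then solves the resulting equality \eqref{eq:setD01}, whereas you obtain the inequality directly by interchanging the order of summation and splitting $\Pc\setminus\{i^{\star}\}$ into $\Dc$ and its complement, with the additional sign case $x\leq 0$ handled separately. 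Your route is a bit more self-contained and rigorous (no appeal to an exchange argument whose optimality is only sketched), at the cost of nothing; the paper's presentation gives a more pictorial ``worst-case graph'' intuition (cf.\ Fig.~\ref{fig:worstgraph}) but the substance of the two proofs is the same, and your handling of loops and of the degenerate case $t=0$ is consistent with the paper's conventions.
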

 \begin{proof}
 For a graph $G=(\Pc, \Ec)$ specified by \eqref{eq:graph01}-\eqref{eq:graph03}, there exists the following number of edges connected between $\Cc$ and $\Pc \setminus \{i^{\star}\}$
  \begin{align}
m_e \defeq \sum_{i\in \Cc} \sum_{j\in \Pc \setminus \{i^{\star}\}}E_{i, j}    \label{eq:graph02aa}   
 \end{align}
which is bounded by 
  \begin{align}
m_e  \geq  |\Cc|  \cdot (n-2t -1)   \label{eq:graph02aab}  
 \end{align}
 based on  \eqref{eq:graph01}-\eqref{eq:graph02}.
 For those $m_e$  edges connected between $\Cc$ and $\Pc \setminus \{i^{\star}\}$, it holds true that any vertex $j \in \Pc \setminus \{i^{\star}\}$ can have at most $|\Cc|$ edges, that is, 
 \begin{align}
\sum_{i\in \Cc}E_{i, j} &\leq  |\Cc|,  \quad     \forall  j\in \Pc \setminus \{i^{\star}\} . \label{eq:graph02bb}   
 \end{align}
 To  bound  the size of $\Dc$ that is defined  by \eqref{eq:graphD01}, we will consider an extreme scenario that has the minimum size of $\Dc$.  
 Recall that each vertex in $\Dc$ is connected with at least $k$, but at most $|\Cc|$, vertices in $\Cc$.  
 Also recall that, for a given graph $G=(\Pc, \Ec)$  satisfying \eqref{eq:graph01}-\eqref{eq:graph03}, it has $m_e$  edges connected from $\Cc$ to $\Pc \setminus \{i^{\star}\}$  (see \eqref{eq:graph02aa}). 
 When these $m_e$  edges are  connected from $\Cc$ to $\Pc \setminus \{i^{\star}\}$ in some different ways, it might result in some different graph scenarios, each graph scenario with a different size of $\Dc$  ---  these graph scenarios have the same $\Cc$ and have the same  $m_e$  number of edges connected from $\Cc$ to  $\Pc \setminus \{i^{\star}\}$ but have different sizes of $\Dc$. 
  Therefore, minimizing the size of $\Dc$ is a game of allocating  these $m_e$ edges connected from $\Cc$ to $\Pc \setminus \{i^{\star}\}$.  
 We argue that, given the same $\Cc$ and  the same $m_e$  number of edges connected from $\Cc$ to $\Pc \setminus \{i^{\star}\}$,  the extreme scenario satisfying the following conditions has the minimum size of $\Dc$:
 \begin{itemize}
\item  Condition (a): Every vertex in $\{\Pc \setminus \{i^{\star}\} \}\setminus \Dc$  is connected with  $k-1$ vertices in $\Cc$.    Assume that there exists one scenario with the minimum size of $\Dc$ such that  Condition (a) is not satisfied, i.e., at least one vertex in $\{\Pc \setminus \{i^{\star}\} \}\setminus \Dc$ has less than  $k-1$ edges connected from $\Cc$.  Then, given the fixed $m_e$  number of edges connected from $\Cc$ to $\Dc$ and  $\{\Pc \setminus \{i^{\star}\} \}\setminus \Dc$,  one can  increase the  number of edges connected to $\{\Pc \setminus \{i^{\star}\} \}\setminus \Dc$ for satisfying  Condition (a), by decreasing  the number of edges connected to $\Dc$, which will not increase the size of $\Dc$.
\item  Condition (b): $|\Dc| -1$ vertices in $\Dc$ are all fully  connected with  $|\Cc|$ vertices in $\Cc$. Assume that there exists one scenario with the minimum size of $\Dc$ such that  Condition (b) is not satisfied, i.e., at least two vertices in $\Dc$  are not fully  connected with  $|\Cc|$ vertices in $\Cc$.  
Then, one can move some edges from one vertex in $\Dc$ --- so that this vertex could possibly be removed from $\Dc$ when the number of  edges connected to this vertex is less than $k$ --- to the other  vertices  in $\Dc$  for satisfying  Condition (b), which will not increase the size of $\Dc$. 
  \end{itemize} 
Fig.~\ref{fig:worstgraph} describes  an example  of  achieving the minimum size of $\Dc$ by moving the edges to satisfy Condition (a) and Condition (b).

 \begin{figure}[t!]
\centering
 \[\begin{array}{cc}
   \includegraphics[width = 0.3\columnwidth]{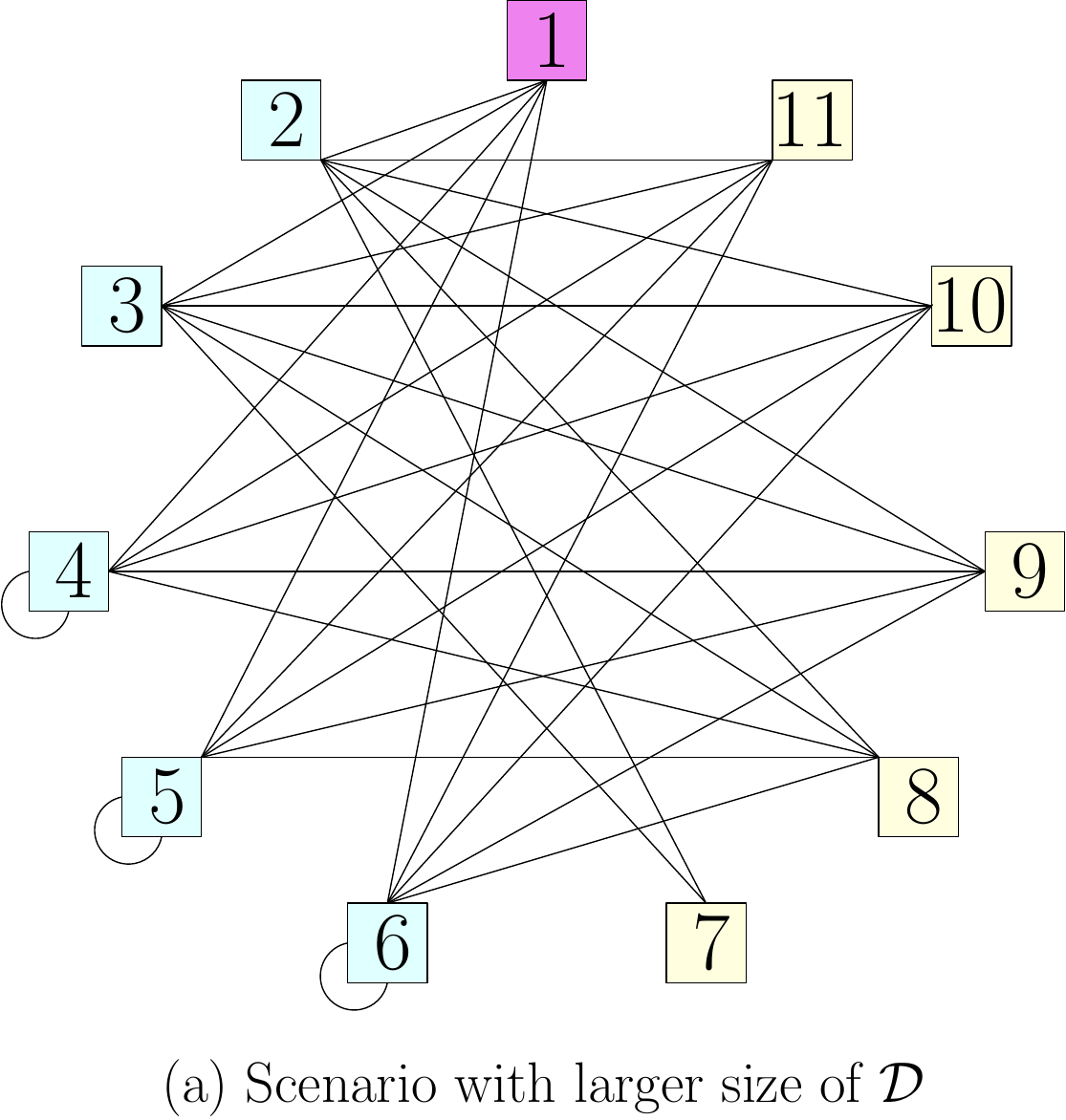}  \quad\quad  &  \quad \quad \includegraphics[width = 0.3\columnwidth]{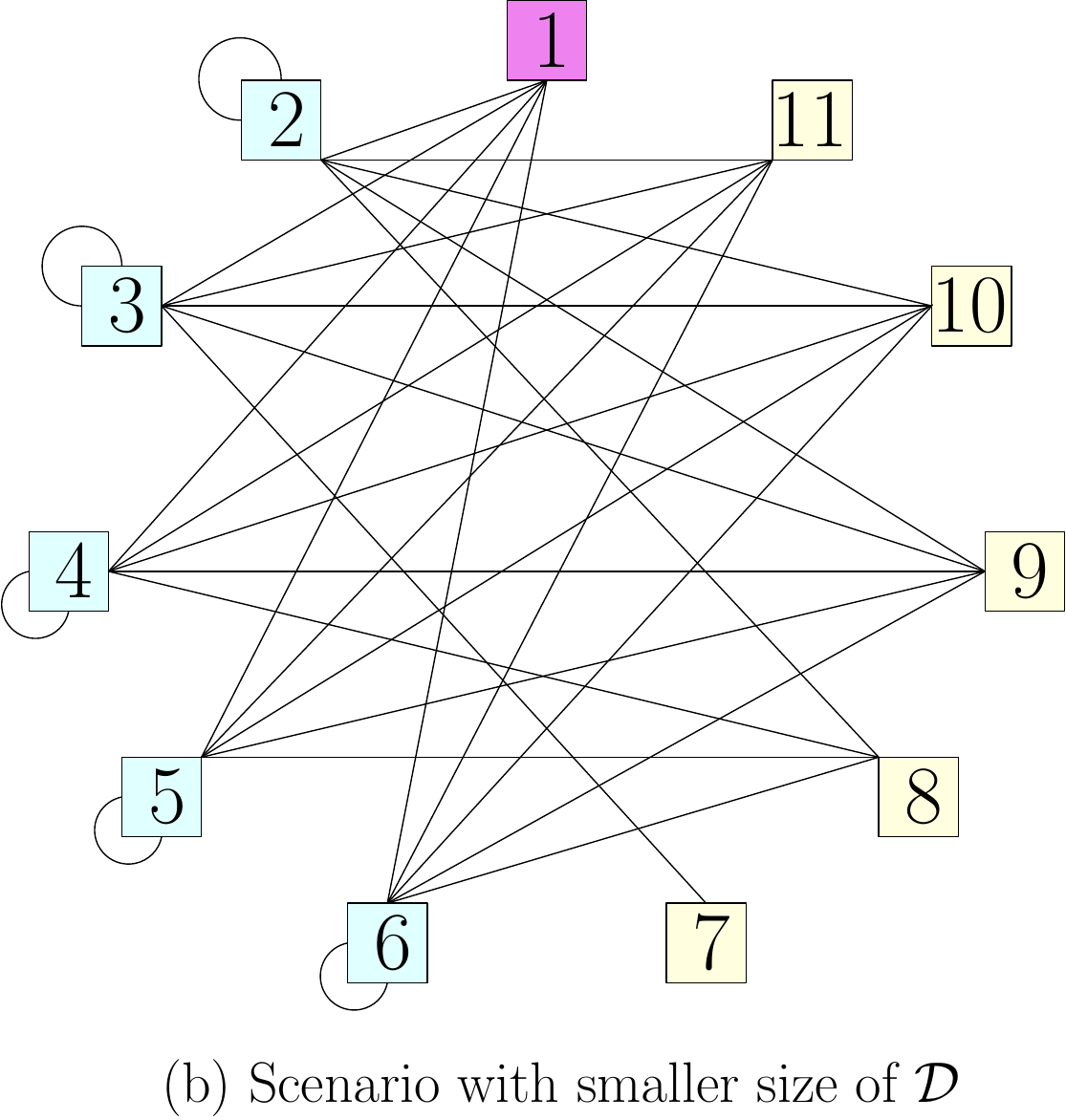}  
\end{array} \]
\caption{One example of achieving the minimum size of $\Dc$ by moving the edges to satisfy Condition (a) and Condition (b), for the graphs $G=(\Pc, \Ec)$ with $(t =5, n= 16)$, $\Pc= [1: 11]$, $i^{\star} =1$, $\Cc = \{2, 3, 4, 5, 6\}$ and $k=2$. For the scenario in the left size, the number of edges connected from $\Cc$ to the vertices $2, 3, \cdots, 11$ are $0, 0, 1, 1, 1, 2, 5, 5, 5, 5$, respectively, which indicates that $\Dc = \{7, 8, 9, 10, 11\}$. For the scenario in the right size, the number of edges connected from $\Cc$ to the vertices $2, 3, \cdots, 11$ are $1, 1, 1, 1, 1, 1, 4, 5, 5, 5$, respectively, which indicates that $\Dc = \{8, 9, 10, 11\}$. The second scenario satisfies  Condition (a) and Condition (b) and has a smaller size of  $\Dc$.} 
\label{fig:worstgraph}
\end{figure}

Let us consider the extreme scenario  satisfying  Condition (a) and Condition (b). We use $\Dc^*$ to denote the set defined as in \eqref{eq:graphD01} of this extreme scenario and use  $|\Dc^*|$ to denote the corresponding  size.
Given that the graph has $m_e$  edges connected from $\Cc$ to $\Dc^*$ and  $\{\Pc \setminus \{i^{\star}\} \}\setminus \Dc^*$, then the following equality holds true 
   \begin{align}
 (k-1) (n-t-1 - |\Dc^*|) +  \tau (|\Dc^*|-1)   +  \tau_0     =  m_e  \label{eq:setD01}  
 \end{align}
  for  $ \tau \defeq |\Cc|$ and for some $\tau_0$ such that $k\leq \tau_0 \leq \tau$. The first term in the left hand side of \eqref{eq:setD01} results from Condition (a), i.e., every vertex in $\{\Pc \setminus \{i^{\star}\} \}\setminus \Dc^*$  is connected with  $k-1$ vertices in $\Cc$, where $|\{\Pc \setminus \{i^{\star}\} \}\setminus \Dc^*| = (n-t-1 - |\Dc^*|)$. The remaining  terms in the left hand side of \eqref{eq:setD01} follow from Condition (b). 
From \eqref{eq:setD01}, it holds true that 
   \begin{align}
     |\Dc^*|      &=  \frac{m_e  +  \tau  -\tau_0 - (k-1) (n-t-1) }{\tau -  (k-1) }   \label{eq:setD02}  \\
     &\geq    \frac{m_e  - (k-1) (n-t-1) }{\tau -  (k-1) }   \label{eq:setD03}  \\
     &\geq    \frac{\tau (n-2t-1)  - (k-1) (n-t-1) }{\tau -  (k-1) }   \label{eq:setD04}  \\
     &=   (n-2t-1)  -  \frac{ (k-1)t  }{\tau -  (k-1) }   \non  \\
     & \geq   n-2t-1  -  \frac{ (k-1)t  }{n-2t-1 -  (k-1) }   \label{eq:setD05}  \\
     & =   n-2t-1  -  \frac{t  }{\frac{n-2t-1}{k-1} - 1 }   \non  \\
     & \geq    n-2t-1  -  \frac{t  }{\frac{n-2t-1}{   t/5  +1-1} - 1 }   \label{eq:setD06}  \\
     & \geq    n-2t-1  -  \frac{t  }{\frac{t}{   t/5 } - 1 }   \label{eq:setD07}  \\
     & =    n-2t-1  -  t /4    \label{eq:setD08}  
 \end{align}
for $ \tau = |\Cc|$,  where \eqref{eq:setD02} is derived from \eqref{eq:setD01};  \eqref{eq:setD03}  uses the condition that $\tau_0 \leq \tau$; 
 \eqref{eq:setD04} follows from \eqref{eq:graph02aab}; 
  \eqref{eq:setD05} uses the condition in \eqref{eq:graph03};
    \eqref{eq:setD06} results from the fact that  $k=   \bigl \lfloor   \frac{ t  }{5 } \bigr\rfloor    +1 \leq      \frac{ t  }{5 }     +1 $ (see \eqref{eq:qdef});
    \eqref{eq:setD07} follows from the condition that $n\geq 3t+1$. 
    One can see that the lower bound of $|\Dc^*|$ shown in \eqref{eq:setD08} does not depend on the values of $m_e$ and $|\Cc|$, as long as $m_e$ and $|\Cc|$ satisfy the conditions \eqref{eq:graph01}-\eqref{eq:graph03}.  
    Since $|\Dc^*|$ is the size of $\Dc^*$ of the extreme scenario, then the size of $\Dc$  of any other scenario of the graph $G=(\Pc, \Ec)$ specified by \eqref{eq:graph01}-\eqref{eq:graph03} satisfies the following inequalities    
     \begin{align}
      |\Dc|   \geq    |\Dc^*|   \geq  n-9t/4-1   \label{eq:setDfinal}  
 \end{align}
 where the last inequality results from \eqref{eq:setD08}. The above result holds for any $m_e$ and $\Cc$ that satisfy the conditions \eqref{eq:graph01}-\eqref{eq:graph03}.  At this point we complete the proof of Lemma~\ref{lm:graph}.
     \end{proof}

 \subsection{Lemma~\ref{lm:sizem} and its proof}  \label{sec:sizem}

In this sub-section we  provide a lemma that will be used later for the analysis of the proposed protocol.  The result of Lemma~\ref{lm:graph} will be used in this proof.

\begin{lemma}    \label{lm:sizem}
When $\gn^{[2]}\geq 1$, it holds true that $|\Ac_{\Lin}| \geq n-9t/4$, for any  $\Lin \in [1: \gn^{[2]}]$.
\end{lemma}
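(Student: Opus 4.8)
The plan is to exhibit a large subset of $\Ac_{\Lin}$ by applying the graph bound of Lemma~\ref{lm:graph} to a graph built on the honest $\Pros$, and then to upgrade that vertex set into $\Ac_{\Lin}$ by a rank argument for the encoding vectors $\hv_i$.

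First I would fix $i^{\star}\in\Ac_{\Lin}^{[2]}$, which is non-empty because $\gn^{[2]}\geq 1$ and $\Lin\in[1:\gn^{[2]}]$; note $\Me_{i^{\star}}=\Meg_{\Lin}$. Relabel the $n-t$ honest $\Pros$ as $\Pc=[1:n-t]$ and set $E_{i,j}\defeq\Lk_i^{[1]}(j)$ for honest $i,j$, with a self-loop at every vertex. For honest $i,j$ this is well defined and symmetric, since $\Lk_i^{[1]}(j)=1$ iff $\hv_i^\T\Me_i=\hv_i^\T\Me_j$ and $\hv_j^\T\Me_i=\hv_j^\T\Me_j$. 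Let $\Cc$ be the set of honest $\Pros$ $j\neq i^{\star}$ that survive $\PRO$~$i^{\star}$'s Phase-2 masking, i.e.\ with $\Lk_{i^{\star}}^{[2]}(j)=1$. I would then verify the three hypotheses of Lemma~\ref{lm:graph}: \eqref{eq:graph01} because $\Lk_{i^{\star}}^{[2]}(j)=1$ implies $\Lk_{i^{\star}}^{[1]}(j)=1$, hence by symmetry $E_{j,i^{\star}}=1$; \eqref{eq:graph02} because a surviving honest neighbour $j$ must have sent $\Ry_j=1$ to $\PRO$~$i^{\star}$ in Phase~1, so $\Ry_j^{[1]}=1$, which by \eqref{eq:sindicator} gives $\sum_{j'=1}^{n}\Lk_j^{[1]}(j')\geq n-t$ and hence at least $n-2t$ honest neighbours of $j$; and \eqref{eq:graph03} because $\Ry_{i^{\star}}^{[2]}=1$ forces at least $n-t$ indices $j$ with $\Lk_{i^{\star}}^{[2]}(j)=1$, one of which is $i^{\star}$ itself and at most $t$ of which are dishonest. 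Lemma~\ref{lm:graph} then yields $|\Dc|\geq n-9t/4-1$ for $\Dc$ as in \eqref{eq:graphD01}.

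The core step is the claim $\Dc\subseteq\Ac_{\Lin}$. The enabling observation is that every $j\in\Cc$ satisfies the single scalar identity $\hv_j^\T\Me_j=\hv_j^\T\Meg_{\Lin}$, which follows from $\Lk_{i^{\star}}^{[1]}(j)=1$ together with $\Me_{i^{\star}}=\Meg_{\Lin}$. Now take any $i\in\Dc$: there are at least $k$ distinct $j\in\Cc$ with $E_{i,j}=1$, and for each such $j$ the relation $\Lk_i^{[1]}(j)=1$ gives $\hv_j^\T\Me_i=\hv_j^\T\Me_j=\hv_j^\T\Meg_{\Lin}$, so $\hv_j^\T(\Me_i-\Meg_{\Lin})=\boldsymbol{0}$ for at least $k$ distinct $j$. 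Since any $k$ distinct rows generated as in \eqref{eq:zidefh} form a full-rank $k\times k$ matrix, this forces $\Me_i=\Meg_{\Lin}$, i.e.\ $i\in\Ac_{\Lin}$. Hence $\Dc\subseteq\Ac_{\Lin}$; combined with $i^{\star}\in\Ac_{\Lin}$ and $i^{\star}\notin\Dc$ (as $\Dc\subseteq\Pc\setminus\{i^{\star}\}$), this gives $|\Ac_{\Lin}|\geq|\Dc|+1\geq n-9t/4$, as desired.

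The main obstacle I anticipate is that $\Cc$ itself is not in general contained in $\Ac_{\Lin}$: a surviving honest neighbour of $\PRO$~$i^{\star}$ may carry a different message $\Meg_{\Lin'}$ with $\hv_j^\T\Meg_{\Lin'}=\hv_j^\T\Meg_{\Lin}$, so one cannot simply count $|\Ac_{\Lin}|\geq|\Cc|$. The fix is to carry only the half-identity $\hv_j^\T\Me_j=\hv_j^\T\Meg_{\Lin}$ through the graph argument and let the $k$-fold agreement encoded by membership in $\Dc$ promote it to full membership in $\Ac_{\Lin}$. A secondary point needing care is the bookkeeping of which success indicators define $\Ss_0$ at each masking step, which is precisely what makes ``$j\in\Cc$'' imply ``$\Ry_j^{[1]}=1$'' and hence keeps the degree hypothesis \eqref{eq:graph02} valid.
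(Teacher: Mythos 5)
Your proposal is correct and follows essentially the same route as the paper's proof: fix $i^{\star}\in\Ac_{\Lin}^{[2]}$, map the honest $\Pros$ to a graph with edges $E_{i,j}=\Lk_i^{[1]}(j)$, take $\Cc$ to be the honest neighbours surviving the Phase-2 masking (the paper's $\Cc'$), verify \eqref{eq:graph01}--\eqref{eq:graph03}, invoke Lemma~\ref{lm:graph} to get $|\Dc|\geq n-9t/4-1$, and then use the full-rank property of any $k$ distinct vectors from \eqref{eq:zidefh} to show $\Dc\cup\{i^{\star}\}\subseteq\Ac_{\Lin}$. The chain $\hv_j^\T\Me_i=\hv_j^\T\Me_j=\hv_j^\T\Meg_{\Lin}$ is exactly the paper's Step (c) in slightly different notation, so no gaps.
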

 \begin{proof}
 The proof consists of the following steps: 
 \begin{itemize}
\item  Step (a): Transform the network into a graph that is within the family of graphs satisfying \eqref{eq:graph01}-\eqref{eq:graph03} (see Section~\ref{sec:graph}), for a fixed $i^{\star}$ in  $\Ac_{\Lin^{\star}}^{[2]}$ and $\Lin^{\star} \in [1: \gn^{[2]}]$.
\item  Step (b): Bound the size of a group of honest $\Pros$, denoted by $\Dc'$ (with the same form as  in \eqref{eq:graphD01}), using the result of Lemma~\ref{lm:graph}, i.e., $|\Dc'| \geq n-9t/4-1$.
\item Step (c): Argue that every $\Pro$ in $\Dc'$ has the same initial message as $\PRO$~$i^{\star}$.
\item  Step (d): Conclude from Step (c) that $\Dc'$ is a subset of $\Ac_{\Lin^{\star}}$, i.e., $\Dc'\cup \{i^{\star}\} \subseteq \Ac_{\Lin^{\star}}$ and conclude that the size of $\Ac_{\Lin^{\star}}$ is bounded by the number determined in Step (b), i.e., $|\Ac_{\Lin^{\star}}| \geq |\Dc'| +1 \geq n-9t/4-1+1$, for  $\Lin^{\star} \in [1: \gn^{[2]}]$.
  \end{itemize}

 \emph{Step (a):} The first step of the proof is to transform the network into a graph that is within the family of graphs defined in Section~\ref{sec:graph}. We will consider the case of $\gn^{[2]}\geq 1$. 
 Recall that,   when $\gn^{[2]}\geq 1$,  we have $|\Ac_{\Lin}^{[2]}| \geq 1$ for any  $\Lin \in [1: \gn^{[2]}]$, where $\Ac_{\Lin}^{[2]}$ is defined as  $\Ac_{\Lin}^{[2]} =  \{  i:  \Ry_i^{[2]} =1, \Me_i =  \Meg_{\Lin},    i \notin  \Fc,  i \in [1:n]\}$ (see \eqref{eq:Aell03}). 
 Let us consider a fixed $i^{\star}$ for $i^{\star}\in \Ac_{\Lin^{\star}}^{[2]}$ and $\Lin^{\star} \in [1: \gn^{[2]}]$. 
   Based on the definition in \eqref{eq:Aell03}, it holds true that 
  \begin{align}
\Ry_{i^{\star}}^{[2]} =1      \label{eq:sindicator3lm01} 
  \end{align}
   for $i^{\star}\in \Ac_{\Lin^{\star}}^{[2]}$ and $\Lin^{\star} \in [1: \gn^{[2]}]$. 
 Recall that  $\Ry_{i^{\star}}^{[2]}$ denotes the value of the success indicator $\Ry_{i^{\star}}$ updated in   Phase~2.  The event of $\Ry_{i^{\star}}^{[2]} =1$ implies that 
   \begin{align}
\sum_{j =1}^n \Lk_{i^{\star}}^{[2]} (j)  \geq n - t     \label{eq:sindicator3lm02} 
  \end{align}
(see Step~2 of Phase~2 and \eqref{eq:sindicator}),  where $\Lk_{i^{\star}}^{[2]} (j)$  (resp. $\Lk_{i^{\star}}^{[1]} (j)$) denotes the value of the link indicator $\Lk_{i^{\star}} (j)$  updated in Phase~2 (resp. in Phase~1). The event of $\Lk_{i^{\star}}^{[2]} (j) =1$ implies that $\Ry_j^{[1]} =1$, $\Lk_{i^{\star}}^{[1]} (j) =1$,  and  $(y_{i^{\star}}^{(j)}, y_j^{(j)}) = (y_{i^{\star}}^{(i^{\star})}, y_j^{(i^{\star})})$ (see \eqref{eq:lkindicator} and \eqref{eq:Thetaset2826}). 
Since $\Ry_j^{[1]} =0, \forall j \notin \cup_{p=1}^{\gn^{[1]}}\Ac_{p}^{[1]}$, it is true that 
   \begin{align}
\Lk_{i^{\star}}^{[2]} (j) = 0, \quad \forall j \notin \cup_{p=1}^{\gn^{[1]}}\Ac_{p}^{[1]}     \label{eq:sindicator3lm03} 
  \end{align}
 based on  \eqref{eq:Thetaset2826} and \eqref{eq:Aell01}. 
By combining \eqref{eq:sindicator3lm02} and \eqref{eq:sindicator3lm03}, the event of $\Ry_{i^{\star}}^{[2]} =1$ implies that
   \begin{align}
\sum_{j \in \Fc \cup \{\cup_{p=1}^{\gn^{[1]}}\Ac_{p}^{[1]} \}} \Lk_{i^{\star}}^{[2]} (j)  &\geq n - t     \label{eq:sindicator3lm04}  \\
\sum_{j \in \{\cup_{p=1}^{\gn^{[1]}}\Ac_{p}^{[1]}\}\setminus \{i^{\star}\}} \Lk_{i^{\star}}^{[2]} (j)  &\geq n - t -t  -1    \label{eq:sindicator3lm052} 
  \end{align}
  and that
     \begin{align}
\sum_{j \in \{\cup_{p=1}^{\gn^{[1]}}\Ac_{p}^{[1]}\}\setminus \{i^{\star}\}} \Lk_{i^{\star}}^{[1]} (j)  \geq n - t -t  -1    \label{eq:sindicator3lm05} 
  \end{align}
  where the last inequality uses the fact that  $\Lk_{i^{\star}}^{[2]} (j) = \Lk_{i^{\star}}^{[1]} (j),  \forall   j \in \cup_{p=1}^{\gn^{[1]}}\Ac_{p}^{[1]}$. 
  Based on \eqref{eq:sindicator3lm01} and \eqref{eq:sindicator3lm05},  for $i^{\star}\in \Ac_{\Lin^{\star}}^{[2]}$ and $\Lin^{\star} \in [1: \gn^{[2]}]$, it holds true that $\PRO~i^{\star}$ receives at least $n-2t -1$ number of  matched observations, i.e., $\{(y_{i^{\star}}^{(j)}, y_j^{(j)}):    (y_{i^{\star}}^{(j)}, y_j^{(j)}) = (y_{i^{\star}}^{(i^{\star})}, y_j^{(i^{\star})}),   \Lk_{i^{\star}}^{[1]} (j) =1,  j \in \{\cup_{p=1}^{\gn^{[1]}}\Ac_{p}^{[1]}\}\setminus \{i^{\star}\} \}$    (see \eqref{eq:lkindicator}),   from the $\Pros$ in $\{\cup_{p=1}^{\gn^{[1]}}\Ac_{p}^{[1]}\}\setminus \{i^{\star}\}$.  
 Recall that in Phase~1 $\PRO$~$i^{\star}$ sets a link indicator as  $\Lk_{i^{\star}}^{[1]} (j)= 1$   if  the received observation $(y_{i^{\star}}^{(j)}, y_j^{(j)})$  is matched with its  observation $(y_{i^{\star}}^{(i^{\star})}, y_j^{(i^{\star})})$ (see \eqref{eq:lkindicator}).  For $i^{\star}\in \Ac_{\Lin^{\star}}^{[2]}$ and $\Lin^{\star} \in [1: \gn^{[2]}]$, let us define a subset of $\{\cup_{p=1}^{\gn^{[1]}}\Ac_{p}^{[1]}\}\setminus \{i^{\star}\}$ of honest $\Pros$ as 
     \begin{align}
\Cc' \defeq  \{j:    \Lk_{i^{\star}}^{[1]} (j) =1, j \in \{\cup_{p=1}^{\gn^{[1]}}\Ac_{p}^{[1]}\}\setminus \{i^{\star}\}  \} .    \label{eq:cdef01} 
  \end{align}
  The above $\Cc'$ can be interpreted  as a subset of $\{\cup_{p=1}^{\gn^{[1]}}\Ac_{p}^{[1]}\}\setminus \{i^{\star}\}$ of  honest $\Pros$, in which each $\Pro$ sends a matched observation to $\PRO$~$i^{\star}$. 
 Based on \eqref{eq:sindicator3lm05} and \eqref{eq:cdef01}, the following conclusions are true  
 \begin{align}
\Lk_{j}^{[1]} (i^{\star}) &=1,  \quad \forall j  \in  \Cc'  \label{eq:cdef02}  \\
|\Cc'| &\geq     n - 2t  -1.  \label{eq:cdef03} 
 \end{align}
Note that  in our setting it holds true that $ \Lk_{i}^{[1]} (j)  =  \Lk_{j}^{[1]} (i)$, $\forall i, j  \in  \cup_{\Lin=1}^{\gn}\Ac_{\Lin}$ (see \eqref{eq:lkindicator}).

 Since  $\Cc'$ is a subset of $\cup_{p=1}^{\gn^{[1]}}\Ac_{p}^{[1]}$ (see \eqref{eq:cdef01}), it implies   
  \begin{align}
\Ry_j^{[1]} =1, \quad  \forall j \in  \Cc'    \label{eq:cdef04} 
 \end{align}
 based on the definition of $\Ac_{p}^{[1]}$  (see \eqref{eq:Aell01}). 
 The fact in \eqref{eq:cdef04} further implies   
   \begin{align}
\sum_{p =1}^n \Lk_{j}^{[1]} (p)  \geq n - t ,   \quad \forall j    \in  \Cc'   \label{eq:sindicator3lm0211} 
  \end{align}
(see \eqref{eq:sindicator}) and that 
   \begin{align}
\sum_{p  \in  \cup_{\Lin=1}^{\gn}\Ac_{\Lin}}  \Lk_{j}^{[1]} (p)  \geq n - 2t ,   \quad \forall j    \in  \Cc'   \label{eq:sindicator3lm0222} 
  \end{align}
where $\cup_{\Lin=1}^{\gn}\Ac_{\Lin}=[1:n] \setminus \Fc$ (see \eqref{eq:Aell00}).  In other words, for any $j    \in  \Cc'$, $\PRO~j$ receives at least $n-2t$ number of  matched observations, i.e., $\{(y_{j}^{(p)}, y_p^{(p)}):    (y_{j}^{(p)}, y_p^{(p)}) = (y_{j}^{(j)}, y_p^{(j)}), \Lk_{j}^{[1]} (p) =1,  p  \in  \cup_{\Lin=1}^{\gn}\Ac_{\Lin} \}$,  from the $\Pros$ in $\cup_{\Lin=1}^{\gn}\Ac_{\Lin}$.  Let us define a subset of $\{\cup_{\Lin=1}^{\gn}\Ac_{\Lin}\} \setminus \{i^{\star}\}$ of honest $\Pros$ as 
 \begin{align}
\Dc' \defeq \Bigl\{p: \   \sum_{j    \in  \Cc' }  \Lk_{j}^{[1]} (p)    \geq  k , \   p  \in  \{\cup_{\Lin=1}^{\gn}\Ac_{\Lin}\} \setminus \{i^{\star}\} \Bigr\}  \label{eq:graphD01prime}   
 \end{align}
where $k$ is defined in \eqref{eq:qdef}.  
 The above $\Dc'$ can be interpreted as a set of  honest $\Pros$ in which each $\Pro$ sends at least $k$ matched observations to the $\Pros$ in $ \Cc'$.

Now we map the network into a graph by considering the honest $\Pros$ as the vertices and considering the link indicators as edges. 
In our setting, it is true that $ \Lk_{i}^{[1]} (j)  =  \Lk_{j}^{[1]} (i)$ for $i, j  \in  \cup_{\Lin=1}^{\gn}\Ac_{\Lin}$ (see \eqref{eq:lkindicator}). 
Let $\Pc' \defeq \cup_{\Lin=1}^{\gn}\Ac_{\Lin}$ and let $E_{i,j}= \Lk_{i}^{[1]} (j), \forall i, j  \in \Pc'$. 
At this point, the network can be transformed into a graph $G=(\Pc', \Ec')$, where $\Pc'$ is a set of $n -t$  vertices, and $\Ec'$ is a set of edges defined by $E_{i,j}= \Lk_{i}^{[1]} (j), \forall i, j  \in \Pc'$, i.e.,  $\Lk_{i}^{[1]} (j)= 1$ indicates that an edge exists between vertex $i$ and vertex $j$. 
For the graph $G=(\Pc', \Ec')$ considered here,  there exists a set  $\Cc' \subseteq \Pc' \setminus \{i^{\star}\}$ such that the  conditions in  \eqref{eq:cdef02}, \eqref{eq:cdef03} and \eqref{eq:sindicator3lm0222}  (similar to the conditions in \eqref{eq:graph01}, \eqref{eq:graph03} and \eqref{eq:graph02} respectively) are satisfied, for a given $i^{\star}\in \Ac_{\Lin^{\star}}^{[2]} \subseteq \Pc'$.  
This graph $G=(\Pc', \Ec')$ falls into a family of graphs satisfying \eqref{eq:graph01}-\eqref{eq:graph03} (see Section~\ref{sec:graph}).

\emph{Step (b):} Since  the graph $G=(\Pc', \Ec')$ falls into a family of graphs satisfying \eqref{eq:graph01}-\eqref{eq:graph03}, the result of Lemma~\ref{lm:graph} reveals that the size of $\Dc'$ defined in \eqref{eq:graphD01prime} (with the same form as in \eqref{eq:graphD01}) satisfies the following inequality
 \begin{align}
|\Dc'| &\geq  n-9t/4-1 .    \label{eq:graphDr11prime}  
 \end{align}

\emph{Step (c):}  Now we argue that every $\Pro$ in $\Dc'$ has the same initial message as $\PRO$~$i^{\star}$. Note that in the graph $G=(\Pc', \Ec')$, every vertex in $\Cc'$ has an edge  connected  with vertex $i^{\star}$  (see \eqref{eq:cdef01}). This identity implies  that $\Lk_{i^{\star}}^{[1]} (j) =1, \forall j \in \Cc'$ and that 
 \begin{align}
  y_j^{(j)}  &=   y_j^{(i^{\star})}   \quad  \forall j \in \Cc'   \label{eq:cdef0122b} 
  \end{align}
  (see \eqref{eq:lkindicator}). 
Also note that in the graph $G=(\Pc', \Ec')$, every vertex in  $\Dc'$ is connected with at least $k$ vertices in $\Cc'$ (see \eqref{eq:graphD01prime}). For $p\in \Dc'$, let us define the set of vertices in  $\Cc'$ connected with vertex $p$ as 
 \begin{align}
\Cc_p' \defeq \Bigl\{j: \     \Lk_{j}^{[1]} (p) =1 ,    \   j  \in  \Cc' \Bigr\}, \quad  p\in \Dc'    \label{eq:graphCj01}   
 \end{align}
  which satisfies the following condition 
 \begin{align}
|\Cc_p'| \geq k,   \quad  \forall p\in \Dc'    \label{eq:graphCj01eq}   
 \end{align}
 based on the definition in \eqref{eq:graphD01prime}. 
For any $p\in \Dc'$,  since  every vertex in $\Cc_p'$ is connected  with vertex $p$ (equivalently,  $\Lk_{j}^{[1]} (p) =1, \forall j \in \Cc_p'$), it is true that 
  \begin{align}
 y_{j}^{(p)}  &= y_{j}^{(j)}     \quad  \forall j \in \Cc_p' , \   p\in \Dc'  \label{eq:cdef0122aj} 
  \end{align}
  (see \eqref{eq:lkindicator}). 
 Since $\Cc_p'$ is a subset of $\Cc'$, the result in \eqref{eq:cdef0122b} gives 
  \begin{align}
  y_j^{(j)}  &=   y_j^{(i^{\star})}   \quad  \forall j \in \Cc_p'  , \   p\in \Dc'   \label{eq:cdef0122b33} 
  \end{align}
 which,  together with \eqref{eq:cdef0122aj}, implies that 
   \begin{align}
    y_j^{(i^{\star})}  =y_{j}^{(p)}  \quad  \forall j \in \Cc_p' , \   p\in \Dc' .   \label{eq:cdef0122bff} 
  \end{align}
 Based on the definition in \eqref{eq:yi11}, the conclusion in   \eqref{eq:cdef0122bff} can be rewritten as 
   \begin{align}
  \hv_j^\T ( \Me_{i^{\star}} - \Me_{p})   = 0 \quad  \forall j \in \Cc_p' , \   p\in \Dc'    \label{eq:cdefhequ} 
  \end{align}
  or 
     \begin{align}
  \Hm_{p} ( \Me_{i^{\star}} - \Me_{p})   = {\bf 0} \quad  \forall   p\in \Dc'    \label{eq:cdefhequH} 
  \end{align}
where $ \Hm_{p}$ is  an  $|\Cc_p'| \times k$ matrix  defined as   $\Hm_{p}\defeq  \Bmatrix{ \hv_{i_1}, \  \hv_{i_2}, \ \cdots ,\  \hv_{i_{|\Cc_p'|}} }^\T $ for $i_{1}, i_{2}, \cdots,  i_{|\Cc_p'|}    \in  \Cc_p'$ and     $i_{1} < i_{2} < \cdots < i_{|\Cc_p'|}$.    
  Since all of the rows of $\Hm_{p}$ are different and generated as in \eqref{eq:zidefh}, $\Hm_{p}$ is a full rank matrix.  
Given that $ \Hm_{p}$ is  an  $|\Cc_p'| \times k$ full matrix, and with the result of $|\Cc_p'| \geq k$,   $\forall p\in \Dc' $   (see \eqref{eq:graphCj01eq}), it is true that the only solution to  \eqref{eq:cdefhequH} is    
\begin{align}
\Me_{p} =  \Me_{i^{\star}}   \quad  \forall   p\in \Dc'   . \label{eq:cdefhequHs} 
  \end{align}
 Therefore, every $\Pro$ in $\Dc'$ has the same initial message as $\PRO$~$i^{\star}$, for  $i^{\star} \in \Ac_{\Lin^{\star}}^{[2]}$ and $\Lin^{\star} \in [1: \gn^{[2]}]$.
 
 \emph{Step (d):}  Since every $\Pro$ in $\Dc'$ has the same initial message as $\PRO$~$i^{\star}$ (see \eqref{eq:cdefhequHs}) for  $i^{\star} \in \Ac_{\Lin^{\star}}^{[2]}$,  it can be concluded  that $\Dc' \cup \{i^{\star}\}$ is a subset of $\Ac_{\Lin^{\star}}$, i.e., 
 \begin{align}
\Dc' \cup \{i^{\star}\} \subseteq \Ac_{\Lin^{\star}}    \label{eq:DAc} 
  \end{align}
   for  $\Lin^{\star} \in [1: \gn^{[2]}]$. Note that the $\Pros$ in $\Ac_{\Lin^{\star}}^{[2]}$ and the $\Pros$ in $\Ac_{\Lin^{\star}}$ have the same initial message (see \eqref{eq:Aell00} and \eqref{eq:Aell03}). 
Finally, by combining the result in \eqref{eq:DAc} and the result in \eqref{eq:graphDr11prime}, i.e., $|\Dc'| \geq  n-9t/4-1$, it can be concluded that 
 \begin{align}
|\Ac_{\Lin^{\star}}| \geq |\Dc'| +1 \geq n-9t/4   \label{eq:DAcsbound} 
  \end{align}
   for $\Lin^{\star} \in [1: \gn^{[2]}]$. The above result holds true for any $i^{\star} \in \Ac_{\Lin^{\star}}^{[2]}$ and  any $\Lin^{\star} \in [1: \gn^{[2]}]$, when $\gn^{[2]}\geq 1$.  At this point we complete the proof of this lemma. 
\end{proof}

 \subsection{Lemma~\ref{lm:eta1231} and its proof}  \label{sec:eta1231}

In this sub-section we  provide a lemma that will be used later for the analysis of the proposed protocol. 
Lemma~\ref{lm:sizeboundMatrix} will be used in this proof.

\begin{lemma}    \label{lm:eta1231}
For the proposed  COOL with $n\geq 3t+1$, if $\gn^{[1]} = 2$  then it holds true that $\gn^{[3]} \leq   1$.
\end{lemma}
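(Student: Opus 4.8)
The plan is to argue by contradiction. Since $\gn^{[3]}\le \gn^{[2]}\le \gn^{[1]}=2$, the only way the conclusion can fail is $\gn^{[3]}=2$; assume this. Then $\Ac_1^{[3]}$ and $\Ac_2^{[3]}$ are both non-empty, which forces $\gn^{[2]}=\gn^{[1]}=2$, so that $\Ac_\Lin,\Ac_\Lin^{[1]},\Ac_\Lin^{[2]},\Ac_\Lin^{[3]}$ are all non-empty for $\Lin\in\{1,2\}$, with $\Ac_\Lin^{[3]}\subseteq\Ac_\Lin^{[2]}\subseteq\Ac_\Lin^{[1]}\subseteq\Ac_\Lin$. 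I will extract from the survival of representatives of these two groups through Phase~3 a pair of lower bounds on $|\Ac_1|$ and $|\Ac_2|$ that, together with Lemma~\ref{lm:sizeboundMatrix}, Lemma~\ref{lm:sizem} and the cardinality identity \eqref{eq:sumn001}, contradict $n\ge 3t+1$.

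The heart of the argument is an exact accounting of which $\Pros$ can keep a Phase-3 survivor alive. Tracing the masking operations \eqref{eq:Thetaset2826}, \eqref{eq:Thetaset2826P3} and the success-indicator updates, an honest $\PRO$~$i\in\Ac_1^{[3]}$ retains $\Lk_i^{[3]}(j)=1$ only when either $j\in\Fc$, or $j$ is honest with $\Ry_j^{[2]}=1$ --- hence, since $\gn^{[2]}=2$, $j\in\Ac_1^{[2]}\cup\Ac_2^{[2]}$ --- and additionally $\Lk_i^{[1]}(j)=1$. For honest $j\in\Ac_1^{[2]}$ the last condition is automatic, while for $j\in\Ac_2^{[2]}$ it forces $\hv_i^\T(\Meg_1-\Meg_2)=\boldsymbol 0$ and $\hv_j^\T(\Meg_1-\Meg_2)=\boldsymbol 0$, i.e.\ $i\in\Ac_{1,2}$ and $j\in\Ac_{2,1}^{[2]}$, by \eqref{eq:Alj}--\eqref{eq:Alj11}. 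Since $\Ry_i^{[3]}=1$ means $\sum_j\Lk_i^{[3]}(j)\ge n-t$ and $|\Fc|=t$, this yields, for every $i\in\Ac_1^{[3]}$,
\[
n-2t\ \le\ |\Ac_1^{[2]}|\ +\ |\Ac_{2,1}^{[2]}|\,\big[\,i\in\Ac_{1,2}\,\big],
\]
and symmetrically $n-2t\le|\Ac_2^{[2]}|+|\Ac_{1,2}^{[2]}|\,\big[\,i'\in\Ac_{2,1}\,\big]$ for $i'\in\Ac_2^{[3]}$, with $[\cdot]$ the Iverson bracket. If some $i\in\Ac_1^{[3]}$ had $i\notin\Ac_{1,2}$ then $|\Ac_1|\ge|\Ac_1^{[2]}|\ge n-2t$; if in addition some $i'\in\Ac_2^{[3]}$ had $i'\notin\Ac_{2,1}$ then $|\Ac_2|\ge n-2t$, and $|\Ac_1|+|\Ac_2|\le n-t$ from \eqref{eq:sumn001} forces $n\le 3t$, a contradiction. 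Hence I may assume $\Ac_1^{[3]}\subseteq\Ac_{1,2}$ and $\Ac_2^{[3]}\subseteq\Ac_{2,1}$; then both $\Ac_{1,2}$ and $\Ac_{2,1}$ are non-empty and the two bounds sharpen to $n-2t\le|\Ac_1^{[2]}|+|\Ac_{2,1}^{[2]}|$ and $n-2t\le|\Ac_2^{[2]}|+|\Ac_{1,2}^{[2]}|$.

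To close, I combine these with the global counts. From $\gn^{[2]}\ge 1$, Lemma~\ref{lm:sizem} gives $|\Ac_1|\ge n-9t/4$ and $|\Ac_2|\ge n-9t/4$, so \eqref{eq:sumn001} pins each group down, $|\Ac_1|,|\Ac_2|\le 5t/4$; meanwhile Lemma~\ref{lm:sizeboundMatrix} together with $\Ac_{\Lin,j}^{[2]}\subseteq\Ac_{\Lin,j}^{[1]}$ gives $|\Ac_{1,2}^{[2]}|+|\Ac_{2,1}^{[2]}|\le|\Ac_{1,2}^{[1]}|+|\Ac_{2,1}^{[1]}|<k$. Substituting $|\Ac_1^{[2]}|,|\Ac_2^{[2]}|\le 5t/4$ and $k\le t/5+1$ into the two sharpened bounds and adding will drive $n$ strictly below $3t+1$, contradicting the resilience hypothesis and completing the proof. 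The step I expect to be the real obstacle is exactly this last constant-chasing: the estimate must be made tight enough that $n\ge 3t+1$ is genuinely violated rather than merely approached, which is where the specific choice $k=\lfloor t/5\rfloor+1$ of \eqref{eq:qdef} and the $9t/4$ slack of Lemma~\ref{lm:sizem} are meant to cooperate; squeezing out the last $O(1)$ may require a finer split of $\Ac_1^{[2]}$ into its $\Ac_{1,1}$- and $\Ac_{1,2}$-parts (and likewise for $\Ac_2^{[2]}$), or re-running the graph counting of Lemma~\ref{lm:graph} inside this two-group setting. Everything else --- the contradiction setup, the masking analysis, and the reduction to $\Ac_1^{[3]}\subseteq\Ac_{1,2}$, $\Ac_2^{[3]}\subseteq\Ac_{2,1}$ --- is bookkeeping of a routine kind.
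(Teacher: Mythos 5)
Your derivation of the per-survivor necessary conditions is sound: for $i\in\Ac_{1}^{[3]}$, the links left unmasked at the Phase-3 update can only come from $\Fc$, from $\Ac_{1}^{[2]}$, and (only if $i\in\Ac_{1,2}$) from $\Ac_{2,1}^{[2]}$, which gives $n-2t\leq|\Ac_{1}^{[2]}|+|\Ac_{2,1}^{[2]}|$ and its mirror image. But the proof breaks exactly at the step you flag as the ``real obstacle,'' and not for lack of sharper constants: the static constraints you assemble are simultaneously satisfiable under $n\geq 3t+1$, so no constant-chasing can produce the contradiction. Adding your two sharpened bounds gives $2(n-2t)\leq|\Ac_{1}^{[2]}|+|\Ac_{2}^{[2]}|+|\Ac_{1,2}^{[2]}|+|\Ac_{2,1}^{[2]}|\leq(n-t)+(k-1)$, i.e.\ $n\leq 3t+k-1\leq 3t+t/5$; replacing $|\Ac_{\Lin}^{[2]}|$ by the bound $5t/4$ from Lemma~\ref{lm:sizem} gives $n\leq 13t/4+k-1$. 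Neither contradicts $n\geq 3t+1$. Concretely, with $t=10$, $n=31$, $k=3$, the size assignment $|\Ac_{1}|=11$, $|\Ac_{2}|=10$, $|\Ac_{1,2}|=|\Ac_{2,1}|=1$ satisfies every inequality in your list (Lemma~\ref{lm:sizeboundMatrix}, Lemma~\ref{lm:sizem}, \eqref{eq:sumn001}, and both survival bounds), so the conclusion cannot be extracted from these constraints alone. There is also a smaller logical slip: ruling out the case where both $\Ac_{1}^{[3]}\not\subseteq\Ac_{1,2}$ and $\Ac_{2}^{[3]}\not\subseteq\Ac_{2,1}$ only yields that \emph{at least one} of the inclusions holds, not both, and the mixed case runs into the same numerical dead end.

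What is missing is dynamic information: the reason two groups cannot both retain survivors through Phase~3 is a two-phase elimination cascade, not a counting identity about who \emph{could} survive. The paper's proof splits on whether $|\Ac_{1}^{[1]}|+|\Bc^{[1]}|\geq t+1$ and/or $|\Ac_{2}^{[1]}|+|\Bc^{[1]}|\geq t+1$ (both failing is impossible when $n\geq 3t+1$). In, say, the first case, every $i\in\Ac_{2,2}^{[1]}$ already has $\Lk_i(j)=0$ for all $j\in\Ac_{1}$ and masks $\Bc^{[1]}$ in Phase~2, hence counts at least $t+1$ mismatches and sets $\Ry_i^{[2]}=0$; then in Phase~3 the remaining members of $\Ac_{2,1}^{[1]}$ mask $\Ac_{2,2}^{[1]}\cup\Bc^{[1]}$ and mismatch with $\Ac_{1,1}^{[1]}$, and here Lemma~\ref{lm:sizeboundMatrix} is used to lower-bound the \emph{masked} set, $|\Ac_{1,1}^{[1]}|+|\Ac_{2,2}^{[1]}|+|\Bc^{[1]}|\geq n-t-(k-1)\geq t+1$, so all of $\Ac_{2}^{[1]}$ is dead by the end of Phase~3 and $\gn^{[3]}\leq 1$. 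This sequential kill --- exactly what happens to $\PRO$~$12$ in Fig.~\ref{fig:coolex}, which is the very configuration realizing the numbers above --- is invisible to your static necessary-conditions formulation; in particular the assignment $|\Ac_{2}^{[2]}|=10$ needed to satisfy your constraints is forbidden by the Phase-1/Phase-2 dynamics, not by any of the inequalities you invoke. To repair your argument you would have to import this cascade (or an equivalent strengthening of the survival conditions), at which point you are essentially reconstructing the paper's case analysis.
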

 \begin{proof}
Given $\gn^{[1]} = 2$,   the definitions in \eqref{eq:Aell00}-\eqref{eq:All11} imply that 
 \begin{align}
 \Ac_{1}^{[1]} =&   \{  i:  \Ry_i^{[1]} =1, \Me_i =  \Meg_{1},  \  i \notin  \Fc, \ i \in [1:n]\}       \label{eq:Aellbig10455}  \\ 
  \Ac_{2}^{[1]} =&   \{  i:  \Ry_i^{[1]} =1, \Me_i =  \Meg_{2},  \  i \notin  \Fc, \ i \in [1:n]\}      \label{eq:Aellbig1045522}    \\ 
\Ac_{1,2}^{[1]} =   & \{  i:  \   i\in  \Ac_{1}^{[1]},   \   \hv_i^\T  \Meg_{1}  = \hv_i^\T  \Meg_2 \}    \label{eq:Alj2995}  \\
\Ac_{1,1}^{[1]} =   & \Ac_{1}^{[1]} \setminus \Ac_{1,2}^{[1]} = \{  i:  \   i\in  \Ac_{1}^{[1]},   \   \hv_i^\T  \Meg_{1}  \neq  \hv_i^\T  \Meg_2 \}  \label{eq:All2955251} \\
\Ac_{2,1}^{[1]} =   & \{  i:  \   i\in  \Ac_{2}^{[1]},   \   \hv_i^\T  \Meg_{2}  = \hv_i^\T  \Meg_1 \}   \label{eq:Alj299535}  \\
\Ac_{2,2}^{[1]} =   & \Ac_{2}^{[1]} \setminus \Ac_{2,1}^{[1]} = \{  i:  \   i\in  \Ac_{2}^{[1]},   \   \hv_i^\T  \Meg_{2}  \neq  \hv_i^\T  \Meg_1 \}  \label{eq:All29559386}  \\
\Bc^{[1]} =  & \{  i:  \Ry_i^{[1]} =0, \  i \notin  \Fc, \ i \in [1:n] \}=  \{  i:    i \in  [1:n], \ i\notin  \Fc \cup  \Ac_{1}^{[1]} \cup  \Ac_{2}^{[1]}  \}   .  \label{eq:BdefB1}         
 \end{align}

 In the following we will complete the proof by focusing on the following three cases
\begin{align}
\text{Case~1:} \quad | \Ac_{1}^{[1]}|  +     |\Bc^{[1]}| \geq  & t+1     \label{eq:A1reqset10198g} \\
  | \Ac_{2}^{[1]}|  +    |\Bc^{[1]}|  < &   t +1   \label{eq:A1reqset10299l} 
 \end{align}
 \begin{align}
 \text{Case~2:} \quad  | \Ac_{1}^{[1]}|  +     |\Bc^{[1]}| <   &  t +1 \label{eq:A1reqset10198l} \\
  | \Ac_{2}^{[1]}|  +    |\Bc^{[1]}|  \geq &  t+1  \label{eq:A1reqset10299g} 
 \end{align}
 \begin{align}
\text{Case~3:} \quad   | \Ac_{1}^{[1]}|  +     |\Bc^{[1]}| \geq   &  t+1   \label{eq:A1reqset10198gg} \\
  | \Ac_{2}^{[1]}|  +    |\Bc^{[1]}|  \geq & t+1 .    \label{eq:A1reqset10299gg} 
 \end{align}
 Note that  the following case 
  \begin{align}
\text{Case~4:} \quad   | \Ac_{1}^{[1]}|  +     |\Bc^{[1]}| <   &  t +1 \label{eq:A1reqset10198lll} \\
  | \Ac_{2}^{[1]}|  +    |\Bc^{[1]}|  < &  t+1  \label{eq:A1reqset10299lll} 
 \end{align}
 does not exist. The argument  is given as follows. Given the assumption in \eqref{eq:A1reqset10198lll} (which is equivalent to $ | \Ac_{1}^{[1]}|  +     |\Bc^{[1]}| \leq    t $),   it then implies that 
   \begin{align}
  |\Ac_{2}^{[1]}|   = & n - |\Fc| - |\Ac_{1}^{[1]}|-   |\Bc^{[1]}|  \label{eq:A1reqset13431}  \\
  \geq  & n - |\Fc| - t  \label{eq:A1reqset28859}  \\
   \geq  & t+1   \label{eq:A1reqset78285}  
 \end{align}
 where  \eqref{eq:A1reqset13431} follows from \eqref{eq:sumn01};
 \eqref{eq:A1reqset28859}  uses the assumption of  $ | \Ac_{1}^{[1]}|  +     |\Bc^{[1]}| \leq    t $  as in \eqref{eq:A1reqset10198lll};
 \eqref{eq:A1reqset78285} results from the condition of $n\geq 3t+1$ and $|\Fc|=t$.  The conclusion in \eqref{eq:A1reqset78285} contradicts with \eqref{eq:A1reqset10299lll}, which suggests that the case described in \eqref{eq:A1reqset10198lll} and \eqref{eq:A1reqset10299lll} does not exist.

\subsubsection{Analysis for Case~1}  Let us consider  Case~1 described in \eqref{eq:A1reqset10198g} and \eqref{eq:A1reqset10299l}.  
Recall that  in the first step of Phase~2, $\PRO$~$i$, $i  \in \Ss_1$,  sets  $\Lk_i^{[2]} (j) = 0$,  $\forall j \in \Ss_0$, where $\Lk_i^{[p]} (j)$  denotes the value of $\Lk_i (j)$ updated in Phase~$p$, $p\in \{1,2,3\}$.   In this step, $\Bc^{[1]}$ is in the list of  $\Ss_0$ because 
 \begin{align}
 \Ry_i^{[1]} =0, \quad \forall i\in \Bc^{[1]}  \label{eq:BdefBaa} 
 \end{align}
 (see \eqref{eq:BdefB1}), while $\Ac_{2,2}^{[1]}$ is in the list of  $\Ss_1$ (see \eqref{eq:All29559386} and \eqref{eq:vr0}). Therefore, in this step $\PRO$~$i$ sets 
 \begin{align}
\Lk_i^{[2]} (j) = 0 ,  \quad   \forall j \in \Bc^{[1]},   \ i  \in \Ac_{2,2}^{[1]}.   \label{eq:Thetaset2826aa} 
 \end{align}
Furthermore, it is true that
 \begin{align}
\Lk_i^{[2]} (j) = \Lk_i^{[1]} (j) = 0 ,   \quad  \forall j \in  \Ac_{1},  \   i  \in \Ac_{2,2}^{[1]}    \label{eq:A1A220} 
 \end{align}
 because of the condition $\hv_i^\T  \Meg_{2}  \neq  \hv_i^\T  \Meg_1, \forall i \in \Ac_{2,2}^{[1]}$ (see \eqref{eq:All29559386}),  indicating that  $ (y_i^{(j)}, y_j^{(j)}) \neq (y_i^{(i)}, y_j^{(i)})$, $\forall j \in  \Ac_{1},   i  \in \Ac_{2,2}^{[1]}$  (see \eqref{eq:yi11} and \eqref{eq:lkindicator}). 
 By combining \eqref{eq:Thetaset2826aa} and \eqref{eq:A1A220}, it gives 
  \begin{align}
\Lk_i^{[2]} (j) = 0 ,   \quad  \forall j \in \Ac_{1}\cup\Bc^{[1]}, \   i  \in \Ac_{2,2}^{[1]}.   \label{eq:B1A1A220} 
 \end{align} 
With the condition in \eqref{eq:A1reqset10198g},  i.e.,  $ | \Ac_{1}^{[1]}|  +     |\Bc^{[1]}| \geq   t+1$, and with  \eqref{eq:B1A1A220}, it implies that  in the second step of Phase~2, $\PRO$~$i$ sets 
   \begin{align}
\Ry_i^{[2]} =0 , \quad  \forall    i  \in \Ac_{2,2}^{[1]}  \label{eq:eindicator112} 
 \end{align}
 (see \eqref{eq:sindicator}), meaning that the number of mismatched observations is more than $t$. 
 With the outcome in \eqref{eq:eindicator112} and after exchanging the success indicators, the set of $\Ac_{2,2}^{[1]}$, as well as  $\Bc^{[1]}$, will be in the list of $\Ss_0$ at each honest $\Pro$ at the end of Phase~2. 
Note that a subset of $\Ac_{2,1}^{[1]}$ (i.e., $\Ac_{2,1}^{[1]}\cap \{p: \Ry_p^{[2]} =0\}$) is in the list of $\Ss_0$, while the rest of $\Ac_{2,1}^{[1]}$ (i.e., $\Ac_{2,1}^{[1]}\cap \{p: \Ry_p^{[2]} =1\}$) is still  in list of $\Ss_1$ at the end of Phase~2. 
Below we will argue that the complete set of $\Ac_{2,1}^{[1]}$ will be in the list of $\Ss_0$  at the end of Phase~3.

 In the first step of Phase~3, $\PRO$~$i$, $i  \in \Ss_1$,  sets  $\Lk_i^{[3]} (j) = 0$,  $\forall j \in \Ss_0$. In this step, since $\Ac_{2,2}^{[1]} \subseteq \Ss_0$ and $\{\Ac_{2,1}^{[1]}\cap \{p: \Ry_p^{[2]} =1\} \}\subseteq  \Ss_1$,  $\PRO$~$i$ sets  
 \begin{align}
\Lk_i^{[3]} (j) = 0 ,   \   \forall j \in \Ac_{2,2}^{[1]}, \   i  \in  \Ac_{2,1}^{[1]}\cap \{p: \Ry_p^{[2]} =1\}.   \label{eq:Thetaset2826P3bb} 
 \end{align}
Furthermore, it is true that
 \begin{align}
\Lk_i^{[3]} (j)  = 0 ,   \quad  \forall j \in \Bc^{[1]}, \    i  \in  \Ac_{2,1}^{[1]}\cap \{p: \Ry_p^{[2]} =1\}     \label{eq:A1A220p3} 
 \end{align}
 (see \eqref{eq:BdefBaa}). 
It is also true that
 \begin{align}
\Lk_i^{[3]} (j) = \Lk_i^{[2]} (j)= \Lk_i^{[1]} (j) = 0 ,   \quad  \forall j \in \Ac_{1,1}^{[1]}, \     i  \in  \Ac_{2,1}^{[1]}\cap \{p: \Ry_p^{[2]} =1\}     \label{eq:A11A2} 
 \end{align}
 because of the condition $\hv_j^\T  \Meg_{1}  \neq  \hv_j^\T  \Meg_2, \forall j \in \Ac_{1,1}^{[1]}$ (see \eqref{eq:All2955251}),  indicating that  $ (y_i^{(j)}, y_j^{(j)}) \neq (y_i^{(i)}, y_j^{(i)})$, $\forall j \in  \Ac_{1,1}^{[1]},   i  \in  \{\Ac_{2,1}^{[1]}\cap \{p: \Ry_p^{[2]} =1\}\}   \subseteq \Ac_{2}$  (see \eqref{eq:yi11} and \eqref{eq:lkindicator}). 
 By combining \eqref{eq:Thetaset2826P3bb}-\eqref{eq:A11A2}, it gives 
   \begin{align}
\Lk_i^{[3]} (j) = 0 ,   \quad  \forall j \in \Ac_{1,1}^{[1]}\cup\Ac_{2,2}^{[1]}\cup\Bc^{[1]}, \    i  \in  \Ac_{2,1}^{[1]}\cap \{p: \Ry_p^{[2]} =1\}.   \label{eq:B1A11A22atA21} 
 \end{align} 
 The size of $\Ac_{1,1}^{[1]}\cup\Ac_{2,2}^{[1]}\cup\Bc^{[1]}$ can be bounded by 
   \begin{align}
|\Ac_{1,1}^{[1]}\cup\Ac_{2,2}^{[1]}\cup\Bc^{[1]} |  = &  |\Ac_{1,1}^{[1]}| +  |\Ac_{2,2}^{[1]}| +  |\Bc^{[1]}|       \label{eq:orp3aa6254} \\
     = &n- |\Fc| -   |\Ac_{1,2}^{[1]}|   - |\Ac_{2,1}^{[1]}|    \label{eq:orp3aa83876} \\
     \geq & n- |\Fc| -  (k-1)   \label{eq:orp3aa9775} \\
      \geq & 2t+1 - (k-1)    \label{eq:orp3aa43636} \\
      \geq &t+1    \label{eq:orp3aa9187875} 
 \end{align}
where \eqref{eq:orp3aa6254} uses the disjoint property between $\Ac_{1,1}^{[1]}$,    $\Ac_{2,2}^{[1]}$ and $\Bc^{[1]}$; 
\eqref{eq:orp3aa83876} is from \eqref{eq:sumn01} and the disjoint property between $\Ac_{1,1}^{[1]}$,   $\Ac_{1,2}^{[1]}$,  $\Ac_{2,1}^{[1]}$, $\Ac_{2,2}^{[1]}$ and $\Bc^{[1]}$;
\eqref{eq:orp3aa9775} follows from Lemma~\ref{lm:sizeboundMatrix},  which implies that $ |\Ac_{1,2}^{[1]}|   + |\Ac_{2,1}^{[1]}| <  k$ $($or equivalently $|\Ac_{1,2}^{[1]}|   + |\Ac_{2,1}^{[1]}| \leq   k -1)$;
\eqref{eq:orp3aa43636} uses the condition that $n \geq 3t+1$ and $|\Fc| =t$;
\eqref{eq:orp3aa9187875} results from the fact that $t \geq k-1$ based on our design of $k$ (see \eqref{eq:qdef}).
From \eqref{eq:B1A11A22atA21} and \eqref{eq:orp3aa9187875}, it suggests that the number of  mismatched observations  is  more than $t$ at  $\PRO$~$i$, $\forall i \in  \Ac_{2,1}^{[1]}\cap \{p: \Ry_p^{[2]} =1\}$. Therefore, at the second step of Phase~3 $\PRO$~$i$    updates its success indicator (see \eqref{eq:sindicator}) as
  \begin{align}
\Ry_i^{[3]} =0 , \quad  \forall  i \in  \Ac_{2,1}^{[1]}\cap \{p: \Ry_p^{[2]} =1\} . \label{eq:p321aa} 
\end{align}

With \eqref{eq:p321aa} and \eqref{eq:eindicator112}, and with the fact $\Ac_{2,1}^{[1]}\cap \{p: \Ry_p^{[2]} =0\} \subseteq \Ss_0$, it can be concluded that the complete set of $\Ac_{2}^{[1]} =\Ac_{2,1}^{[1]} \cup \Ac_{2,2}^{[1]}$ is in the list of $\Ss_0$, that is, 
  \begin{align}
\Ac_{2}^{[1]}  \subseteq \Ss_0  \label{eq:case1A2S0} 
\end{align}
  at the end of Phase~3.  In other words,    at the end of Phase~3 of COOL  there exists \emph{at most}  1  group of honest $\Pros$, where the honest $\Pros$ within this group  have   the same  non-empty updated  message (with  success indicators as ones), and  the  honest $\Pros$ outside this group  have  the same  empty updated  message (with  success indicators as zeros), that is, $\gn^{[3]} \leq 1$, for Case~1.

\subsubsection{Analysis for Case~2}  
Due to the symmetry between Case~1 and Case~2, one can follow from the proof steps for Case~1 and interchange the roles of Groups $\Ac_{1}$ and  $\Ac_{2}$ (as well as the roles of Groups  $\Ac_{1}^{[p]}$ and $\Ac_{2}^{[p]}$ accordingly for $p\in\{1,2,3\}$),  to show for Case~2 that  at the end of Phase~3 the following conclusion is true 
   \begin{align}
\Ac_{1}^{[1]}  \subseteq \Ss_0.  \label{eq:case1A1S0} 
\end{align}
  In other words,  for Case~2,   at the end of Phase~3 of COOL  there exists \emph{at most}  1  group of honest $\Pros$, where the honest $\Pros$ within this group  have   the same  non-empty updated  message, and  the  honest $\Pros$ outside this group  have  the same  empty updated  message.
   Note that   for Case~2 the condition we use for deriving \eqref{eq:case1A1S0} is  \eqref{eq:A1reqset10299g}, while for Case~1 the condition we use for deriving \eqref{eq:case1A2S0} is  \eqref{eq:A1reqset10198g}.

\subsubsection{Analysis for Case~3}   Note that the condition in \eqref{eq:A1reqset10198gg} for Case~3 is the same as the condition in   \eqref{eq:A1reqset10198g} for Case~1.  By following the proof steps for Case~1, one can  show  for Case~3 that  the following conclusion is true
   \begin{align}
\Ac_{2}^{[1]}  \subseteq \Ss_0     \label{eq:p321aaf33} 
 \end{align}
   at the end of Phase~3 of  the  proposed COOL. 
   Also note that  the condition in \eqref{eq:A1reqset10299gg} for Case~3 is the same as the condition in   \eqref{eq:A1reqset10299g} for Case~2. Thus, one can also show  for Case~3 that the following conclusion is true
   \begin{align}
\Ac_{1}^{[1]}  \subseteq \Ss_0     \label{eq:p321aaf11333} 
 \end{align}
   at the end of Phase~3 of  the  proposed COOL. 
  In other words,  for Case~3,  none of the honest $\Pros$ could  have a non-zero success indicator  (or a non-empty value on the updated message)  at the end of Phase~3, that is, $\gn^{[3]} = 0$. 
  
By combining the analysis for the above three cases,  at this point we complete the proof of Lemma~\ref{lm:eta1231}.  
  \end{proof}

 \subsection{Proof of Lemma~\ref{lm:ph32group}}  \label{sec:ph32group}

We are now ready to prove Lemma~\ref{lm:ph32group}. Specifically we will prove that, given $n\geq 3t+1$, at the end of Phase~3 of COOL there exists \emph{at most}  1  group of honest $\Pros$,  
where the honest $\Pros$ within this group  have the same  non-empty updated  message, and  the  honest $\Pros$ outside this group  have  the same  empty updated  message. Based on our definition in \eqref{eq:Aell03}, it is equivalent to prove   
\[  \gn^{[3]} \leq 1 .\] 
In the first step we will prove  $\gn^{[2]} \leq 2$, which implies  $\gn^{[3]} \leq \gn^{[2]} \leq 2$ (using the fact that $\gn^{[3]} \leq \gn^{[2]}$, see \eqref{eq:Aell00}-\eqref{eq:Aell03new}). Clearly,  it is true  $\gn^{[3]} \leq 1$ when  $\gn^{[2]} \leq 1$.  
Then, in the second step we will prove   $\gn^{[3]} \leq 1$ when  $\gn^{[2]} = 2$.  
In our proof we will use the results of Lemma~\ref{lm:sizeboundMatrix}, Lemma~\ref{lm:sizem} and Lemma~\ref{lm:eta1231},   and will use the  proof by contradiction method. 
Proof by contradiction is one type of proof that establishes the truth or the validity of a
claim. The approach is to show that assuming the claim to be false leads to a
contradiction. 
The following lemma summarizes the result of the first step. 

\begin{lemma}    \label{lm:eta3bound2}
For the proposed  COOL with $n\geq 3t+1$, it holds true that $\gn^{[2]} \leq 2$.
\end{lemma}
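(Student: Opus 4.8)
The plan is to prove $\gn^{[2]} \leq 2$ by contradiction: assume $\gn^{[2]} \geq 3$, so that there exist at least three distinct non-empty messages $\Meg_{1}, \Meg_{2}, \Meg_{3}$ and three non-empty groups $\Ac_{1}^{[2]}, \Ac_{2}^{[2]}, \Ac_{3}^{[2]}$ of honest $\Pros$ that still hold these respective messages with success indicator $1$ at the end of Phase~2. The key leverage is Lemma~\ref{lm:sizem}, which gives $|\Ac_{\Lin}| \geq n - 9t/4$ for every $\Lin \in [1:\gn^{[2]}]$. Since the groups $\Ac_{1}, \Ac_{2}, \Ac_{3}$ are pairwise disjoint (honest $\Pros$ with distinct initial messages, see \eqref{eq:Aell00}), their sizes must sum to at most $n - |\Fc| = n - t$.

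Thus, applying Lemma~\ref{lm:sizem} to three distinct indices, I would write
\begin{align}
3(n - 9t/4) \leq |\Ac_{1}| + |\Ac_{2}| + |\Ac_{3}| \leq \sum_{\Lin=1}^{\gn} |\Ac_{\Lin}| = n - |\Fc| = n - t \non
\end{align}
which rearranges to $2n \leq 27t/4 - t = 23t/4$, i.e., $n \leq 23t/8$. But this contradicts the standing hypothesis $n \geq 3t+1 > 23t/8$ (since $3t > 23t/8$ for all $t \geq 1$). Hence $\gn^{[2]} \leq 2$. I should double-check the arithmetic constant here — the precise bound in Lemma~\ref{lm:sizem} is $n - 9t/4$, and three copies give $3n - 27t/4$; setting $3n - 27t/4 \leq n - t$ yields $2n \leq 23t/4$, confirming $n \leq 23t/8 < 3t \leq n$, a contradiction. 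The only subtlety is making sure we are entitled to apply Lemma~\ref{lm:sizem} to three different indices simultaneously, which is immediate because the lemma holds for \emph{any} $\Lin \in [1:\gn^{[2]}]$ whenever $\gn^{[2]} \geq 1$, and our contradiction hypothesis supplies $\gn^{[2]} \geq 3 \geq 1$.

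I expect the main (and essentially only) obstacle to be ensuring the chain of inequalities is tight enough: the combination of Lemma~\ref{lm:sizem}'s lower bound with the disjointness counting must genuinely clash with $n \geq 3t+1$, and this depends on the constant $9/4$ being small enough that $3 \times (1 - 9/(4) \cdot (t/n))$ exceeds $1 - t/n$ is impossible — which is exactly why the design of $k$ in \eqref{eq:qdef} (producing the $9t/4$ bound via Lemma~\ref{lm:graph}) matters. Once $\gn^{[2]} \leq 2$ is established, the lemma is complete; the further refinement to $\gn^{[3]} \leq 1$ (the content of Lemma~\ref{lm:ph32group}) is handled separately using Lemma~\ref{lm:eta1231} for the boundary case $\gn^{[2]} = 2$, but that is outside the scope of the present statement.
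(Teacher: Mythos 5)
Your proposal is correct and follows essentially the same route as the paper: assume $\gn^{[2]}\geq 3$, apply Lemma~\ref{lm:sizem} to obtain $|\Ac_{\Lin}|\geq n-9t/4$ for three disjoint groups, and derive $3(n-9t/4)\leq n-t$, which contradicts $n\geq 3t+1$. The paper's own proof performs exactly this contradiction with the same constants, so no further comment is needed.
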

 \begin{proof}
  Lemma~\ref{lm:sizem} and  proof by contradiction will be used in this proof.  Let us first assume that the claim in Lemma~\ref{lm:eta3bound2} is false.  Specifically let us assume that 
  \begin{align}
\gn^{[2]} \geq 3 .     \label{eq:etagn3} 
 \end{align}
Based on this assumption, and from Lemma~\ref{lm:sizem}, we have 
 \begin{align}
|\Ac_{1}| &\geq n-9t/4     \label{eq:acineq1} \\
|\Ac_{2}| &\geq n-9t/4     \label{eq:acineq2} \\
&\  \vdots       \label{eq:acineq3} \\
|\Ac_{\gn^{[2]}}| &\geq n-9t/4 .    \label{eq:acineq4} 
 \end{align}
 By combining the above $\gn^{[2]}$ bounds together we have 
  \begin{align}
\sum_{\Lin=1}^{\gn^{[2]}} |\Ac_{\Lin}| &\geq \gn^{[2]}(n-9t/4) \geq  3(n-9t/4) = (n - t) + (2n - 23t/4) >  (n - t)  \label{eq:acineq11} 
 \end{align}
 where the second inequality uses the assumption in  \eqref{eq:etagn3}; and the last inequality stems from the derivation that $2n - 23t/4 \geq  6t+2 - 23t/4  >0$ by using the condition of $n\geq 3t+1$. 
One can see that  the conclusion in \eqref{eq:acineq11} contradicts with the identify of $\sum_{\Lin=1}^{\gn^{[2]}} |\Ac_{\Lin}|  \leq \sum_{\Lin=1}^{\gn} |\Ac_{\Lin}|   =  n - t$ (see \eqref{eq:sumn001}), i.e.,  the total number of honest $\Pros$ should be $n - t$. Therefore, the assumption in \eqref{eq:etagn3} leads to a contradiction and thus $\gn^{[2]} $ should be bounded by  $\gn^{[2]} \leq 2$.
 \end{proof}

The result of Lemma~\ref{lm:eta3bound2}  reveals that $\gn^{[3]} \leq \gn^{[2]} \leq 2$. 
Since  $\gn^{[3]} \leq 1$ is true when  $\gn^{[2]} \leq 1$,  in the second step of the proof for Lemma~\ref{lm:ph32group}, we  will prove   $\gn^{[3]} \leq 1$ when  $\gn^{[2]} = 2$.  The result is summarized in the following Lemma~\ref{lm:eta3bound1} and the proof will use the results of  Lemma~\ref{lm:eta3212} (see below) and Lemma~\ref{lm:eta1231} (see Section~\ref{sec:eta1231}). 
 
 \begin{lemma}    \label{lm:eta3bound1}
For the proposed  COOL with $n\geq 3t+1$, it holds true that $\gn^{[3]} \leq 1$ when  $\gn^{[2]} = 2$.
\end{lemma}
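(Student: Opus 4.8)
The plan is to deduce Lemma~\ref{lm:eta3bound1} from Lemma~\ref{lm:eta1231} by first showing that the hypothesis $\gn^{[2]}=2$ already forces $\gn^{[1]}=2$; this implication is exactly what Lemma~\ref{lm:eta3212} will record. Once $\gn^{[1]}=2$ is known, Lemma~\ref{lm:eta1231} applies verbatim under $n\ge 3t+1$ and yields $\gn^{[3]}\le 1$, which is the claim. Recall that $\gn^{[3]}\le\gn^{[2]}\le\gn^{[1]}$ always holds, by \eqref{eq:Aell00}--\eqref{eq:Aell03new}, and that $\gn^{[1]}\ge\gn^{[2]}=2$, so the only thing that must be ruled out is $\gn^{[1]}\ge 3$.

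So the core of the argument is the claim that if $\gn^{[2]}=2$ and $n\ge 3t+1$ then $\gn^{[1]}\le 2$, which I would prove by contradiction. Suppose $\gn^{[1]}\ge 3$; then $\Ac_3^{[1]}$ is non-empty, so fix $i^{\star}\in\Ac_3^{[1]}\subseteq\Ac_3$, and note $\Me_{i^{\star}}=\Meg_3$ with $\Meg_3\ne\Meg_1,\Meg_2$. By definition $\Ry_{i^{\star}}^{[1]}=1$, hence $\sum_{j=1}^n\Lk_{i^{\star}}^{[1]}(j)\ge n-t$; removing the at most $t$ indices in $\Fc$, $\PRO$~$i^{\star}$ is matched in Phase~1 (i.e.\ $\Lk_{i^{\star}}^{[1]}(j)=1$) with at least $n-2t$ honest $\Pros$. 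Since $\gn^{[2]}=2\ge 1$, Lemma~\ref{lm:sizem} gives $|\Ac_1|,|\Ac_2|\ge n-9t/4$, so by \eqref{eq:sumn001} the remaining honest groups obey $\sum_{\Lin\ge 3}|\Ac_\Lin|=(n-t)-|\Ac_1|-|\Ac_2|\le 7t/2-n\le t/2-1$ using $n\ge 3t+1$. Hence at least $(n-2t)-(t/2-1)=n-5t/2+1\ge t/2+2$ of the honest $\Pros$ matched with $i^{\star}$ lie in $\Ac_1\cup\Ac_2$; writing $a_\Lin$ for the number of them in $\Ac_\Lin$ ($\Lin\in\{1,2\}$), this gives $a_1+a_2\ge t/2+2$.

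Next I would invoke the coding structure exactly as in the proofs of Lemma~\ref{lm:sizeboundMatrix} and Lemma~\ref{lm:sizem}: for each honest $j\in\Ac_\Lin$ matched with $i^{\star}$ one has $y_j^{(j)}=y_j^{(i^{\star})}$ by \eqref{eq:lkindicator}, i.e.\ $\hv_j^\T(\Me_{i^{\star}}-\Meg_\Lin)=\boldsymbol 0$; if $a_\Lin\ge k$, then the $a_\Lin$ rows $\hv_j^\T$ with distinct indices form a full-rank $a_\Lin\times k$ matrix (a matrix of distinct rows of the form \eqref{eq:zidefh} is full rank, as in the proof of Lemma~\ref{lm:sizeboundMatrix}), forcing $\Me_{i^{\star}}=\Meg_\Lin$ and contradicting $\Me_{i^{\star}}=\Meg_3$. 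Therefore $a_1\le k-1=\lfloor t/5\rfloor\le t/5$ and $a_2\le t/5$, so $a_1+a_2\le 2t/5$. But then $t/2+2\le a_1+a_2\le 2t/5$, i.e.\ $2\le -t/10<0$, a contradiction; hence $\gn^{[1]}\le 2$, which combined with Lemma~\ref{lm:eta1231} completes the proof.

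I expect the main obstacle to be this counting-plus-rank step: getting the head-count of Phase~1 matches of $i^{\star}$ exactly right (accounting for the self-match $\Lk_{i^{\star}}^{[1]}(i^{\star})=1$ and for how matched processors are distributed among the groups via Lemma~\ref{lm:sizem}), and making the floor arithmetic around $k=\lfloor t/5\rfloor+1$ tight enough that the lower bound $t/2+2$ and the upper bound $2t/5$ on $a_1+a_2$ genuinely collide under $n\ge 3t+1$. Everything downstream of $\gn^{[1]}=2$ is merely an appeal to Lemma~\ref{lm:eta1231}.
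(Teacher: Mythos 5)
Your proposal is correct and follows essentially the same route as the paper: it establishes the implication $\gn^{[2]}=2 \Rightarrow \gn^{[1]}=2$ (the paper's Lemma~\ref{lm:eta3212}) by contradiction, using Lemma~\ref{lm:sizem} together with the full-rank property of the encoding vectors, and then invokes Lemma~\ref{lm:eta1231} to conclude $\gn^{[3]}\le 1$. The only cosmetic difference is in the bookkeeping: you bound the number of Phase-1 matches of $i^{\star}$ inside $\Ac_1\cup\Ac_2$ by $2(k-1)\le 2t/5$ via a direct rank argument and collide it with the lower bound $t/2+2$, whereas the paper bounds $|\Ac_{1,\Lin^{\star}}|+|\Ac_{2,\Lin^{\star}}|$ through Lemma~\ref{lm:sizeboundMatrix} and compares the resulting set size with $n-2t$; the two counts are equivalent.
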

 \begin{proof}
 Given $\gn^{[2]} = 2$, it is concluded from   Lemma~\ref{lm:eta3212} (see below) that $\gn^{[1]} = 2$. 
Furthermore, given $\gn^{[1]} = 2$,  it is  concluded from  Lemma~\ref{lm:eta1231} (see Section~\ref{sec:eta1231})  that $\gn^{[3]} \leq 1$. We then conclude that $\gn^{[3]} \leq 1$ when  $\gn^{[2]} = 2$. 
\end{proof}

\begin{lemma}    \label{lm:eta3212}
For the proposed  COOL with $n\geq 3t+1$,  it is true that $\gn^{[1]} = 2$ when $\gn^{[2]} = 2$.
\end{lemma}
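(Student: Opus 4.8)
The plan is to argue by contradiction, assuming $\gn^{[2]} = 2$ but $\gn^{[1]} \neq 2$. Since $\gn^{[1]} \geq \gn^{[2]} = 2$ always holds by definition (see \eqref{eq:Aell00}--\eqref{eq:Aell03new}), the only way to have $\gn^{[1]} \neq 2$ is $\gn^{[1]} \geq 3$. So the whole task reduces to ruling out the configuration $\gn^{[1]} \geq 3$ while $\gn^{[2]} = 2$. The key observation is that $\gn^{[2]} = 2$ forces two of the groups $\Ac_{\Lin}^{[1]}$, say $\Ac_{1}^{[1]}$ and $\Ac_{2}^{[1]}$, to ``survive'' Phase~2 (i.e.\ $\Ac_{1}^{[2]}$ and $\Ac_{2}^{[2]}$ are non-empty), and by Lemma~\ref{lm:sizem} each of the corresponding base groups satisfies $|\Ac_{1}| \geq n - 9t/4$ and $|\Ac_{2}| \geq n - 9t/4$. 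If moreover $\gn^{[1]} \geq 3$, there is a third group $\Ac_{3}^{[1]}$ that is non-empty at the end of Phase~1; pick $i^{\star} \in \Ac_{3}^{[1]}$. I want to show this third honest processor's Phase-1 success indicator being $1$ is incompatible with the sizes of $\Ac_1$ and $\Ac_2$ being so large.

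First I would unwind what $i^{\star} \in \Ac_{3}^{[1]}$ means: $\Ry_{i^{\star}}^{[1]} = 1$, hence $\sum_{j=1}^n \Lk_{i^{\star}}^{[1]}(j) \geq n - t$, so $\PRO~i^{\star}$ has at most $t$ mismatched links in Phase~1. In particular, among the honest processors in $\Ac_{1} \cup \Ac_{2}$, at most $t$ of them can mismatch $i^{\star}$. Now I would count mismatches coming from $\Ac_1$: a processor $j \in \Ac_1$ matches $i^{\star}$ (with $\Me_{i^{\star}} = \Meg_{3}$, $\Me_j = \Meg_{1}$) only if $(y_{i^{\star}}^{(j)}, y_j^{(j)}) = (y_{i^{\star}}^{(i^{\star})}, y_j^{(i^{\star})})$, which by \eqref{eq:yi11} requires $\hv_{i^{\star}}^\T(\Meg_{1} - \Meg_{3}) = 0$ and $\hv_j^\T(\Meg_{1} - \Meg_{3}) = 0$. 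The set of such $j \in \Ac_1$ is exactly $\Ac_{1,3}$ in the notation of \eqref{eq:Alj} (restricted to indices where $i^{\star}$'s coefficient also annihilates the difference), and by the rank argument of Lemma~\ref{lm:sizeboundMatrix} the number of honest processors in $\Ac_1$ that can match $\Meg_3$ is at most $k - 1$; the same holds for $\Ac_2$. Therefore the number of processors in $\Ac_1 \cup \Ac_2$ that mismatch $i^{\star}$ is at least $|\Ac_1| + |\Ac_2| - 2(k-1)$. Plugging $|\Ac_1|, |\Ac_2| \geq n - 9t/4$ and $k - 1 \leq t/5$, this is at least $2(n - 9t/4) - 2t/5 = 2n - 9t/2 - 2t/5 = 2n - 49t/10$, and using $n \geq 3t + 1$ this exceeds $6t + 2 - 49t/10 = 11t/10 + 2 > t$.

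This contradicts the bound of $t$ total mismatched links for $i^{\star}$, so the configuration $\gn^{[1]} \geq 3$, $\gn^{[2]} = 2$ is impossible, giving $\gn^{[1]} = 2$. The main obstacle I anticipate is making the counting argument airtight regarding which processor's encoding vector is doing the annihilating --- one has to be careful that a match between $i^{\star}$ and $j$ requires \emph{both} $\hv_{i^{\star}}$ and $\hv_j$ to annihilate $\Meg_{1} - \Meg_{3}$, and then invoke Lemma~\ref{lm:sizeboundMatrix} with the right pair of message differences (here $\Meg_1$ vs.\ $\Meg_3$ and $\Meg_2$ vs.\ $\Meg_3$, rather than $\Meg_1$ vs.\ $\Meg_2$). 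A secondary subtlety is confirming that $\gn^{[2]} = 2$ really does supply two base groups of size $\geq n - 9t/4$ via Lemma~\ref{lm:sizem} --- this is immediate from Lemma~\ref{lm:sizem} applied to $\Lin = 1$ and $\Lin = 2$ in $[1:\gn^{[2]}]$, so it is not a genuine difficulty, just a step to state explicitly. Everything else is the same style of edge-counting and rank-deficiency bookkeeping already used in the proofs of Lemmas~\ref{lm:sizeboundMatrix}--\ref{lm:eta1231}.
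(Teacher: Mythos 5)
Your proof is correct and follows essentially the same route as the paper's: a contradiction built around a processor $i^{\star}$ in a hypothetical third Phase-1 group, with Lemma~\ref{lm:sizem} supplying $|\Ac_{1}|,|\Ac_{2}|\geq n-9t/4$ and Lemma~\ref{lm:sizeboundMatrix} bounding the relevant overlap sets by $k-1$, followed by the same arithmetic with $k\leq t/5+1$ and $n\geq 3t+1$. The only cosmetic difference is that you count mismatched links at $i^{\star}$ (at most $t$ permitted by $\Ry_{i^{\star}}^{[1]}=1$, yet at least $2(n-9t/4)-2(k-1)>t$ forced), whereas the paper counts matched links (at least $n-2t$ required, yet at most $2(k-1)+t/2-1<n-2t$ available); these are complementary bookkeepings of the same argument.
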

 \begin{proof}
Lemma~\ref{lm:sizeboundMatrix} and Lemma~\ref{lm:sizem} will be used in this proof. 
Recall that $\gn^{[1]}$ denotes the number of the sets defined as in \eqref{eq:Aell01}, for $\gn^{[1]} \geq \gn^{[2]} \geq \gn^{[3]}$. Let us assume that  the claim in Lemma~\ref{lm:eta3212} is false.  Specifically, given $\gn^{[2]} = 2$, let us assume that $\gn^{[1]} > 2$.

Given $\gn^{[2]} = 2$, and from Lemma~\ref{lm:sizem}, we have $|\Ac_{1}| \geq n-9t/4$ and $|\Ac_{2}| \geq n-9t/4$, which imply that 
 \begin{align}
\sum_{\Lin=3}^{\gn} |\Ac_{\Lin}| = n- |\Fc| -|\Ac_{1}| -|\Ac_{2}|  \leq  n-t  -  2(n-9t/4)  \leq t/2 -1   \label{eq:acineq1122} 
 \end{align}
where the   equality  results from \eqref{eq:sumn001}; and the last inequality uses the condition of $n\geq 3t+1$. 

If $\gn^{[1]} > 2$, then there exists an $i^{\star} \in \Ac_{\Lin^{\star}}^{[1]} \subseteq \Ac_{\Lin^{\star}}$ for $\Lin^{\star} \geq 3$, such that   
\begin{align}
\Ry_{i^{\star}}^{[1]} =1      \label{eq:sindicator1p11} 
  \end{align}
(see \eqref{eq:Aell01})  and that 
  \begin{align}
\sum_{j \in \cup_{\Lin=1}^{\gn}\Ac_{\Lin}  } \Lk_{i^{\star}}^{[1]} (j) & \geq n - t  - t    \label{eq:sindicator1p22} 
 \end{align}
(see \eqref{eq:sindicator}).  
The inequality of \eqref{eq:sindicator1p22} reveals that $\PRO$~$i^{\star}$,  $i^{\star} \in \Ac_{\Lin^{\star}}^{[1]} \subseteq \Ac_{\Lin^{\star}}$,  receives at least $n - 2t $ number of matched observations $\{(y_{i^{\star}}^{(j)}, y_j^{(j)})\}_j$ from the \emph{honest} $\Pros$, where a matched observation of $(y_{i^{\star}}^{(j)}, y_j^{(j)})$ indicates the value of $\Lk_{i^{\star}}^{[1]} (j) =1$ (see \eqref{eq:lkindicator}).
Based on the definition of $\Ac_{\Lin,j}$ in \eqref{eq:Alj}, it is true that $\Lk_{i^{\star}}^{[1]} (j) =0$, $\forall j  \notin \Ac_{\Lin^{\star}}\cup \{ \cup_{\Lin=1, \Lin \neq \Lin^{\star}}^{\gn}\Ac_{\Lin, \Lin^{\star}} \}$.  Therefore, the inequality in \eqref{eq:sindicator1p22} can be rewritten as 
  \begin{align}
\sum_{j \in  \Ac_{\Lin^{\star}}\cup \{ \cup_{\Lin=1, \Lin \neq \Lin^{\star}}^{\gn}\Ac_{\Lin, \Lin^{\star}} \} } \Lk_{i^{\star}}^{[1]} (j)  &  \geq n - t  - t  .    \label{eq:sindicator1p2233} 
 \end{align}
On the other hand, the size of $\Ac_{\Lin^{\star}}\cup \{ \cup_{\Lin=1, \Lin \neq \Lin^{\star}}^{\gn}\Ac_{\Lin, \Lin^{\star}} \}$ can be upper bounded by 
  \begin{align}
 | \Ac_{\Lin^{\star}}\cup \{ \cup_{\Lin=1, \Lin \neq \Lin^{\star}}^{\gn}\Ac_{\Lin, \Lin^{\star}} \} |  & \leq |\Ac_{1, \Lin^{\star}}| +  |\Ac_{2, \Lin^{\star}}| +    \sum_{\Lin=3}^{\gn} |\Ac_{\Lin}|   \label{eq:sindicator1p727}  \\
 & \leq |\Ac_{1, \Lin^{\star}}| +  |\Ac_{2, \Lin^{\star}}| +   t/2 -1   \label{eq:sindicator1p8256}  \\
  & \leq  (k -1) + (k -1) +   t/2-1   \label{eq:sindicator1p7256}  \\
    & = 2 (  \lfloor    t /5   \rfloor    +1 -1) +   t/2 - 1   \label{eq:sindicator1p8866}  \\
          &  \leq     t+1 - t/10  -2  \non  \\
               &  \leq   n-2t  - t/10 -2  \label{eq:sindicator1p998898} \\
                              &  <   n-2t    \label{eq:sindicator1p9988} 
 \end{align}
 for  $i^{\star} \in \Ac_{\Lin^{\star}}^{[1]} \subseteq \Ac_{\Lin^{\star}}$ and $\Lin^{\star} \geq 3$,  where \eqref{eq:sindicator1p727} uses the fact that $ \Ac_{\Lin^{\star}}\cup \{ \cup_{\Lin=1, \Lin \neq \Lin^{\star}}^{\gn}\Ac_{\Lin, \Lin^{\star}} \}  \subseteq  \Ac_{1, \Lin^{\star}}\cup\Ac_{2, \Lin^{\star}} \cup  \{ \cup_{\Lin=3}^{\gn} \Ac_{\Lin} \}$;
 \eqref{eq:sindicator1p8256} follows from the result in \eqref{eq:acineq1122};
 \eqref{eq:sindicator1p7256} stems from the result in Lemma~\ref{lm:sizeboundMatrix}, that is,  $|\Ac_{\Lin,j}| + |\Ac_{j,\Lin}|  <    k$  $($or equivalently $|\Ac_{\Lin,j}| + |\Ac_{j,\Lin}|  \leq   k -1)$,    $ \forall   j \neq \Lin,  \  j, \Lin \in [1:\gn] $   (see \eqref{eq: Aljb00});
 \eqref{eq:sindicator1p8866} uses the definition of $k$ as in  \eqref{eq:qdef}, that is, $ k =    \bigl \lfloor   \frac{ t  }{5 } \bigr\rfloor    +1$;
 \eqref{eq:sindicator1p998898} follows from the condition of $n\geq 3t+1$.

 One can see that the conclusion in \eqref{eq:sindicator1p9988} contradicts with the conclusion in \eqref{eq:sindicator1p2233}.
 Therefore, the assumption of $\gn^{[1]} > 2$ leads to a contradiction and thus $\gn^{[1]}$ should be $\gn^{[1]} = 2$, given $\gn^{[2]} = 2$.
 \end{proof}

  Finally, with the results of  Lemma~\ref{lm:eta3bound2} and Lemma~\ref{lm:eta3bound1},  it is concluded  that   $\gn^{[3]} \leq 1$  for the proposed  COOL with $n\geq 3t+1$.  At this point we complete the proof of Lemma~\ref{lm:ph32group}.

\subsection{Proof of Lemma~\ref{lm:p4samem}}  \label{sec:p4samem}
        
We will prove Lemma~\ref{lm:p4samem} in this sub-section. Specifically, given $n\geq 3t+1$,  we will prove that all of the honest $\Pros$  reach  the same agreement in COOL. 

Recall that in the last step of Phase~3 in COOL, the system runs the one-bit consensus on the $n$ votes $\{\Vr_1, \Vr_2, \cdots, \Vr_n\}$ from all $\Pros$, where $\Vr_i= 1$ can be considered as a vote from  $\PRO$~$i$  for going to Phase~4, while $\Vr_i= 0$ can be considered as a vote from  $\PRO$~$i$  for stopping in Phase~3, for $i\in [1:n]$ (see \eqref{eq:vindicator}).  
The one-bit consensus  from \cite{BGP:92,CW:92} ensures that:   every honest $\Pro$  eventually outputs a consensus value;  all honest $\Pros$ should have the same consensus output; and if  all honest $\Pros$ have the same vote then the consensus should be the same as the vote of the honest $\Pros$. 
We will prove this lemma by considering each of the following two cases.  
\begin{itemize}
\item  Case~(a):  In Phase~3, the consensus of the votes $\{\Vr_1, \Vr_2, \cdots, \Vr_n\}$ is $0$.
\item  Case~(b):  In Phase~3, the consensus of the votes $\{\Vr_1, \Vr_2, \cdots, \Vr_n\}$ is $1$. 
\end{itemize}

\emph{Analysis for Case~(a):}   
  In Phase~3 of COOL, if the consensus of the votes $\{\Vr_1, \Vr_2, \cdots, \Vr_n\}$ is $0$, then all of the honest $\Pros$  will  set $\Me^{(i)} =\phi$ and consider $\phi$ as a final consensus and stop here.  In this case,  all of  the honest $\Pros$ agree on the same message. 
  
\emph{Analysis for Case~(b):}   
   In Phase~3 of COOL,  if the consensus of the votes $\{\Vr_1, \Vr_2, \cdots, \Vr_n\}$ is $1$, then all of the honest $\Pros$ will  go to Phase~4.  In this case, at least one of the honest $\Pros$  votes \[\Vr_i= 1, \quad  \text{for some} \quad i \notin \Fc.\] Otherwise, all of the honest $\Pros$ vote the same value such that $\Vr_i= 0$, $\forall i \notin \Fc$ and the consensus of the votes $\{\Vr_1, \Vr_2, \cdots, \Vr_n\}$ should be $0$, contradicting the condition of this case.  
In COOL, the condition of voting $\Vr_i =1 $ (see \eqref{eq:vindicator})
is that  $\PRO$~$i$ receives no less than $2t+1$ number of ones from $n$ success indicators $\{ \Ry_1^{[3]}, \Ry_2^{[3]}, \cdots, \Ry_n^{[3]} \}$, i.e., 
   \begin{align}
\sum_{j=1}^{n} \Ry_j^{[3]} \geq 2t+1 .   \label{eq:si2t1} 
 \end{align} 
Since $t$ dishonest $\Pros$ might send the success indicators as ones, the outcome in \eqref{eq:si2t1} implies that at least $t+1$ honest $\Pros$ send out success indicators as ones in Phase~3, i.e., 
  \begin{align}
\sum_{j \in \cup_{\Lin=1}^{\gn^{[3]}}\Ac_{\Lin}^{[3]}  }  \Ry_j^{[3]} \geq t+1 .   \label{eq:si2t122} 
 \end{align}

 Lemma~\ref{lm:ph32group} reveals that at the end of Phase~3 there exists \emph{at most}  1  group of honest $\Pros$, where the honest $\Pros$ within this group  have   the same  non-empty updated  message (with  success indicators as ones), and  the  honest $\Pros$ outside this group  have  the same  empty updated  message (with  success indicators as zeros), that is, $\gn^{[3]} \leq 1$. 
  Then, for this Case~(b), the conclusions in  \eqref{eq:si2t122} and  Lemma~\ref{lm:ph32group}  imply that at the end of Phase~3 there exists \emph{exactly}  1  group of \emph{honest} $\Pros$ with group size satisfying 
     \begin{align}
\text{group size} \  \geq t+1    \label{eq:si2t1285} 
 \end{align}
 where the honest $\Pros$ within this group  have   the same  non-empty updated  message, and  the  honest $\Pros$ outside this group  have  the same  empty updated  message.  In other words, for this Case~(b), we have 
   \begin{align}
   \gn^{[3]} &= 1    \label{eq:A1g2t1009a} \\
   |\Ac_1^{[3]}|  &\geq    t+1    \label{eq:A1g2t1} \\
       \Me_i &=\Meg_{1} , \quad  \forall i \in  \Ac_1^{[3]}  \label{eq:A1g2t1009} 
 \end{align}
 where \eqref{eq:A1g2t1009a} and \eqref{eq:A1g2t1} follow from \eqref{eq:si2t122}   and  Lemma~\ref{lm:ph32group};
  \eqref{eq:A1g2t1009} stems from the definition of $\Ac_{\Lin}^{[3]}$ in \eqref{eq:Aell03new}.

Recall that in Phase~4  $\PRO$~$i$, $i \in \Ss_0$, updates the value of $y_i^{(i)}$ as 
 \begin{align}
y_i^{(i)}    \leftarrow  \text{Majority}( \{y_i^{(j)}:   j \in \Ss_1\})     \non 
 \end{align}
 based on the majority rule, where the symbols of $y_i^{(j)}$ were received in Phase~1. In this step it is true that  $\Ac_1^{[3]} \subseteq \Ss_1$  and $ |\Ac_1^{[3]}|  >   |\Fc|$ (see \eqref{eq:A1g2t1}), which guarantees that $\PRO$~$i$, $i \in \Ss_0\setminus \Fc$, could use  the majority rule to update the value of $y_i^{(i)}$  as 
 \begin{align}
y_i^{(i)}   \leftarrow  \text{Majority}( \{y_i^{(j)}:   j \in \Ss_1\})   = \hv_i^\T   \Meg_{1} .  \non
 \end{align}
At the end of this step,  for any honest $\PRO$~$i$, $i \notin  \Fc$, the value of $y_i^{(i)}$  becomes $y_i^{(i)} = \hv_i^\T   \Meg_{1}$, which is encoded with $\Meg_{1}$. 
In the second step of Phase~4, $\PRO$~$i$,  $i \in \Ss_0$,  sends the updated value of $y_i^{(i)}$ to $\PRO$~$j$, $\forall j \in\Ss_0, j\neq i$.  
After this,  $\PRO$~$i$, $i \in \Ss_0$, decodes its message using the updated observations $\{y_1^{(1)}, y_2^{(2)}, \cdots, y_n^{(n)}\}$, where  $n-t$ observations of which are guaranteed to be $y_j^{(j)} = \hv_j^\T   \Meg_{1}, \forall j \notin  \Fc$.  Since the number of mismatched observations, i.e., the observations not encoded with the message $\Meg_{1}$, is no more than $t$, then $\PRO$~$i$, $i \in \Ss_0\setminus \Fc$, decodes its message and outputs $ \Me^{(i)} = \Meg_{1}$. At the same time,  $\PRO$~$i$, $i \in \Ss_1\setminus \Fc$, outputs the original value as $ \Me^{(i)} = \Meg_{1}$.
Thus,  all of the honest $\Pros$  successfully agree on   the same   message, i.e., $ \Me^{(i)} = \Meg_{1}, \  \forall i  \notin \Fc$.
  With this we complete the proof of Lemma~\ref{lm:p4samem}.

\subsection{Proof of Lemma~\ref{lm:p4samemsih}}  \label{sec:p4samemsih}

Now we will prove that, given $n\geq 3t+1$,  if  all honest $\Pros$ have  the same initial message, then at the end of COOL  all honest $\Pros$ agree on this  initial message.  One example is depicted in  Fig.~\ref{fig:coolexv}. 

Assume that all  honest $\Pros$  have the same  initial message as $ \Meg_{1}$, for some non-empty value $\Meg_{1}$ (see Fig.~\ref{fig:coolexv} as an example). 
For this scenario, in Phase~1 all honest $\Pros$ will set their success indicators as ones and keep their updated messages exactly the same as the initial message  $\Meg_{1}$, because  every honest $\Pro$ will have  no more than $t$   number of mismatched observations (see \eqref{eq:lkindicator} and \eqref{eq:sindicator}).  
Then, in Phase~2,  none of the honest $\Pros$ will be in the list of $\Ss_0$ defined in  \eqref{eq:vr0}, because  their success indicators are all ones.  
In Phase~3, all honest $\Pros$ will again keep their success indicators as ones and keep their updated messages exactly the same as the initial message  $\Meg_{1}$. 
In this scenario,  the consensus of the votes $\{\Vr_1, \Vr_2, \cdots, \Vr_n\}$ in Phase~3 should be $1$, i.e., all honest $\Pros$ will go to Phase~4, as described in Case~(b) in the previous sub-section.
Given the same  updated message $\Meg_{1}$ and the same nonzero success indicator as inputs at all of the honest $\Pros$,  in Phase~4 all honest $\Pros$  will output the messages  that are exactly the same as the initial message  $\Meg_{1}$.  Thus, for this scenario,  all honest $\Pros$ eventually agree on the  initial message $\Meg_{1}$. 
At this point we complete the proof of Lemma~\ref{lm:p4samemsih}.

\subsection{Proof of Lemma~\ref{lm:terminate}}  \label{sec:terminate}

 In the proposed COOL, given $n\geq 3t+1$,  it is guaranteed that  all  honest $\Pros$  eventually terminate together at the last step of Phase~3,  or terminate together at the last step of Phase~4.  It is also guaranteed that every honest $\Pro$ eventually  outputs  a message when it terminates, which completes the proof of Lemma~\ref{lm:terminate}.

\section{The extension of COOL to the Byzantine broadcast problem}  \label{sec:COOLbb}

 The proposed COOL designed for the BA setting can be adapted into the BB setting.  
 Note that  the known lower bounds on resilience, round complexity and communication complexity for error-free BA protocols, can also be applied to any error-free BB protocols. 

For the BB setting, we just add an additional step into the proposed COOL described in Section~\ref{sec:COOL}.
Specifically, for the BB setting, at first the leader is designed to send each   $\Pro$ an $\ell$-bit message that can be considered as the initial message in the BA setting. Then, COOL is applied into this BB setting from this step to achieve the consistent and validated consensus.  We call the proposed COOL with the additional step mentioned above as an adapted COOL for this BB setting. 
When $n\geq 3t+1$, this   adapted COOL is guaranteed to satisfy the termination, consistency and validity conditions in all executions in this BB setting. 
Note that for the added step, i.e., sending initial messages from the leader to all $\Pros$, the additional communication complexity is just $O(n \Lh)$ bits. 
Therefore, the total communication complexity of the adapted COOL for this BB setting is \[O(\max\{n\Lh, n t \log t \} + n \Lh) = O(\max\{n\Lh, n t \log t \}) \quad  \text{bits}. \]
Furthermore,  the round complexity of the adapted COOL for this BB setting is \[O(t + 1) = O(t)  \quad  \text{rounds}. \]
Hence, the  adapted COOL is an error-free signature-free information-theoretic-secure   BB  protocol that achieves the consensus on an $\Lh$-bit message  with optimal resilience,  asymptotically optimal round complexity, and asymptotically  optimal communication complexity  when $\ell \geq t   \log t$.  
This result   serves as the achievability proof of Theorem~\ref{thm:BBs}.

\section{Conclusion}   \label{sec:concl}

In this work, we proposed  COOL, a deterministic  BA protocol designed from coding theory, together with graph theory and linear algebra,  which achieves the BA consensus on an $\Lh$-bit message with optimal resilience, 
 asymptotically optimal round complexity,  and asymptotically optimal communication complexity when $\ell \geq t   \log t$, simultaneously.  
 Our protocol is guaranteed to be correct in all executions (\emph{error-free}) and it does not rely on cryptographic technique such as signatures, hashing, authentication and secret sharing (\emph{signature-free}). 
Furthermore, our protocol is robust even when the adversary has unbounded computational power (\emph{information-theoretic secure}). 
The adapted  COOL  also achieves the BB consensus  on an $\Lh$-bit message  with optimal resilience,  asymptotically optimal round complexity, and asymptotically optimal communication complexity when $\ell \geq t   \log t$, simultaneously.  
 Based on the performance of the proposed COOL and the known lower bounds, we characterized the optimal \emph{communication complexity exponent} achievable by \emph{any} error-free BA (or BB) protocol when $n\geq 3t+1$. 
This work reveals that coding is an effective approach for achieving the fundamental limits in resilience, round complexity and  communication complexity of Byzantine agreement and its variants.



\end{document}